\documentclass[11pt,letterpaper]{article}

\usepackage[margin=1in]{geometry}
\usepackage{latexsym,graphicx,amssymb}
\usepackage{amsmath}
\usepackage{amsthm, amsfonts}
\usepackage{mathtools, thmtools}
\usepackage{bbm}
\usepackage{float}
\usepackage{xspace}
\usepackage{paralist}
\usepackage{enumerate,multicol}
\usepackage{cases}
\usepackage{caption}
\usepackage{graphicx}
\usepackage{tikz,pgfmath}
\usepackage{pdfcomment}
\usetikzlibrary{matrix,positioning,quotes}
\usetikzlibrary{shapes.multipart}
\usetikzlibrary{calc}
\usetikzlibrary{arrows,decorations.markings}
\usetikzlibrary{matrix}
\usepackage{url}
\usepackage[ruled,linesnumbered,vlined]{algorithm2e}
\usepackage{hyperref}
\hypersetup{hidelinks}
\usepackage[capitalise]{cleveref}

\usepackage[noend]{algpseudocode}
\usepackage{xcolor}
\usepackage{multirow}
\usepackage{graphicx}
\usepackage{subcaption}
\usepackage{varwidth}
\usepackage{wrapfig}
\usepackage{enumitem}
\usepackage{float}
\usepackage{bbm,bm}
\usepackage{tcolorbox}

\usepackage{mleftright}
\usepackage[d]{esvect}
\usepackage{bm, bbm}

\newenvironment{proofof}[1]{\begin{proof}[Proof of #1]}{\end{proof}}
\numberwithin{figure}{section}
\newtheorem{theorem}{Theorem}[section]
\newtheorem{definition}[theorem]{Definition}

\newtheorem{lemma}[theorem]{Lemma}
\newtheorem{claim}[theorem]{Claim}
\newtheorem*{theorem*}{Theorem}
\newtheorem*{lemma*}{Lemma}

\crefname{Distribution}{Distribution}{Distributions}
\crefname{claim}{Claim}{Claims}
\Crefname{claim}{Claim}{Claims}

\input{head}

\newcommand{\bk}[1]{\left(#1\right)}

\newcommand{\bigbk}[1]{\bigl( {#1} \bigr)}
\newcommand{\Bigbk}[1]{\Bigl( {#1} \Bigr)}
\newcommand{\Bk}[1]{\left[{#1}\right]}
\newcommand{\bigBk}[1]{\bigl[ {#1} \bigr]}
\newcommand{\BigBk}[1]{\Bigl[ {#1} \Bigr]}
\newcommand{\BK}[1]{\left\{{#1}\right\}}
\newcommand{\midBK}[1]{\{ {#1} \}}
\newcommand{\bigBK}[1]{\bigl\{ {#1} \bigr\}}

\newcommand{\abs}[1]{\left|{#1}\right|}
\newcommand{\bigabs}[1]{\big|{#1}\big|}

\newcommand{\E}{\mathop{\mathbb{E}}}
\renewcommand{\Pr}{\mathop{\mathrm{Pr}}}
\newcommand{\poly}{\mathrm{poly}}

\renewcommand{\l}{\ell}
\newcommand{\eps}{\varepsilon}

\renewcommand{\vec}[1]{\bm{\mathrm{#1}}}
\renewcommand{\emptyset}{\varnothing}

\newcommand{\defeq}{\coloneqq}

\newcommand{\bef}{\textup{bef}}

\newcommand{\inseta}{A}
\newcommand{\insetb}{B} %
\newcommand{\delset}{D} %
\newcommand{\keyset}{K}
\newcommand{\keyend}{\keyset_\textup{end}}

\newcommand{\insas}{\vec{a}} %
\newcommand{\insai}[1][i]{a_{#1}} %
\newcommand{\insbs}{\vec{b}} %
\newcommand{\insbi}[1][i]{b_{#1}} %
\newcommand{\dels}{\vec{d}} %
\newcommand{\deli}[1][i]{d_{#1}} %

\newcommand{\insasi}[1][i]{\vec{a}^{[#1]}} %
\newcommand{\insbsi}[1][i]{\vec{b}^{[#1]}} %
\newcommand{\delsi}[1][i]{\vec{d}^{[#1]}} %

\newcommand{\insaset}{\mathcal{A}} %
\newcommand{\insaseti}[1][i]{A^{[#1]}} %
\newcommand{\insaparts}{\insaseti[1], \ldots, \insaseti[\br]} %

\newcommand{\insrest}{\insaset_\textup{rest}} %

\newcommand{\insorder}{\pi} %
\newcommand{\delorder}{\sigma} %
\newcommand{\insorders}{\vec{\insorder}} %
\newcommand{\insorderi}[1][i]{\insorder^{[#1]}} %
\newcommand{\insorderij}[2][i]{\insorder^{[#1]}_{#2}} %

\newcommand{\sta}{C} %
\newcommand{\staof}[1][i]{\sta_{#1}} %
\newcommand{\stabegin}{\staof[\textup{st}]} %
\newcommand{\staend}{\staof[\textup{end}]} %
\newcommand{\stabef}{\staof[\bef]} %
\newcommand{\staa}{\staof[\inseta]} %
\newcommand{\stab}{\staof[\insetb]} %
\newcommand{\staai}{\staof[\inseta]^{[i]}} %
\newcommand{\stamix}{\staof[\textup{mix}]} %

\newcommand{\lv}{\ell} %
\newcommand{\wid}{m} %
\newcommand{\widof}[1][\lv]{\wid_{#1}} %
\newcommand{\widl}{\widof} %
\newcommand{\widn}{\widof[\lv-1]} %
\newcommand{\br}[1][\lv]{\lambda_{#1}} %
\newcommand{\branch}{\lambda} %

\newcommand{\set}{S} %
\newcommand{\setof}[1][A]{\set_{#1}} %
\newcommand{\seta}{\setof[\inseta]} %
\newcommand{\setai}{\setof[\inseta]^{[i]}} %
\newcommand{\setb}{\setof[\insetb]} %
\newcommand{\setbad}{\setof[\textup{bad}]} %
\newcommand{\setab}{\seta\setminus\setb} %

\newcommand{\cost}{\textup{\textsf{cost}}} %
\newcommand{\costu}{\cost_u} %
\newcommand{\probe}{\textup{\textsf{probe}}} %
\newcommand{\probeu}{\probe_u} %

\newcommand{\guess}{\mathcal F} %
\newcommand{\guessall}{\mathcal F^{[i]}_{\textup{all}}} %
\newcommand{\guesspass}{\mathcal F^{[i]}_{\textup{qual}}} %
\newcommand{\guessnbr}{\mathcal F^{[i]}_{\textup{nbr}}} %

\newcommand{\redun}{r} %
\newcommand{\Redun}{R} %

\newcommand{\msgout}{M_i} %
\newcommand{\indbad}{X_i} %
\newcommand{\indnbr}{Y_i} %
\newcommand{\indprobe}{Z_i} %
\newcommand{\indin}{W} %

\newcommand{\msgfrom}{M_\textup{send}} %
\newcommand{\msgto}{M_\textup{learn}} %

\newcommand{\indicator}[1]{\mathbbm{1}[#1]}
\newcommand{\indicatorEx}[1]{\mathbbm{1}\Bk{#1}}
\newcommand{\mutual}[2]{I\bk{#1 \, ; \, #2}}

\newcommand*\oline[1]{%
  \vbox{%
    \hrule height 0.5pt%
    \kern0.25ex%
    \hbox{%
      \kern-0.1em%
      \ifmmode#1\else\ensuremath{#1}\fi%
      \kern-0.1em%
    }%
  }%
}

\newcommand{\setnota}{\hspace{0.1em}\oline{\seta}\hspace{0.1em}}
\newcommand{\setnotb}{\hspace{0.1em}\oline{\setb}\hspace{0.1em}}
\newcommand{\setnotbstar}{\hspace{0.1em}\oline{\setb^*}\hspace{0.1em}}
\newcommand{\setnotab}{\hspace{0.1em}\oline{\seta \cup \setb}\hspace{0.1em}}
\newcommand{\setnotacapb}{\hspace{0.1em}\oline{\seta \cap \setb}\hspace{0.1em}}

\author{
  Tianxiao Li \thanks{Institute for Interdisciplinary Information Sciences, Tsinghua University. \url{litx20@mails.tsinghua.edu.cn}.}
  \and
  Jingxun Liang \thanks{Institute for Interdisciplinary Information Sciences, Tsinghua University. \url{liangjx20@mails.tsinghua.edu.cn}.} 
  \and
  Huacheng Yu \thanks{Department of Computer Science, Princeton University. \url{yuhch123@gmail.com}.}
  \and
  Renfei Zhou \thanks{Institute for Interdisciplinary Information Sciences, Tsinghua University. \url{zhourf20@mails.tsinghua.edu.cn}.}
}
\title{Tight Cell-Probe Lower Bounds for Dynamic Succinct Dictionaries}
\date{}

\begin{document}

\maketitle

\thispagestyle{empty}
\setcounter{page}{0}

\begin{abstract}
	A dictionary data structure maintains a set of at most $n$ keys from the universe $[U]$ under key insertions and deletions, such that given a query $x\in[U]$, it returns if $x$ is in the set.
	Some variants also store values associated to the keys such that given a query $x$, the value associated to $x$ is returned when $x$ is in the set.

	This fundamental data structure problem has been studied for six decades since the introduction of hash tables in 1953. 
	A hash table occupies $O(n\log U)$ bits of space with constant time per operation in expectation.
	There has been a vast literature on improving its time and space usage.
	The state-of-the-art dictionary by Bender, Farach-Colton, Kuszmaul, Kuszmaul and Liu~\cite{liu22} has space consumption close to the \emph{information-theoretic optimum}, using a total of
	\[
		\log\binom{U}{n}+O(n\log^{(k)} n)
	\]
	bits, while supporting all operations in $O(k)$ time, for any parameter $k\leq \log^* n$.
	The term $O(\log^{(k)} n)=O(\underbrace{\log\cdots\log}_k n)$ is referred to as the \emph{wasted bits per key}.

	In this paper, we prove a matching \emph{cell-probe} lower bound: For $U=n^{1+\Theta(1)}$, any dictionary with $O(\log^{(k)} n)$ wasted bits per key must have expected operational time $\Omega(k)$, in the cell-probe model with word-size $w=\Theta(\log U)$.
	Furthermore, if a dictionary stores values of $\Theta(\log U)$ bits, we show that \emph{regardless of the query time}, it must have $\Omega(k)$ expected update time.
	It is worth noting that this is the first cell-probe lower bound on the trade-off between space and update time for general data structures.
\end{abstract}

\newpage

\section{Introduction}
\label{sec:intro}

A dictionary data structure dynamically maintains a set of at most $n$ keys from the universe $[U]$ under key insertions and deletions (the updates), such that given a query $x\in[U]$, it returns if $x$ is in the set.
Some variants of dictionaries also maintain a set of key-value pairs with distinct keys and with values from $[V]$, such that given a query $x$, it further returns the value associated to $x$ when key $x$ is in the set.

Hash tables are classic designs of dictionaries.
Using universal hash functions and chaining, hash tables have constant expected update and query times while occupying $O(n\log U)$ bits of space.
There has been a vast literature on dictionaries improving the time and space usage \cite{Knuth73, ANS09, ANS10, bender2021all, DdHPP06, PerfectHashing88, RealTime90, FPSS03, Storing84, knuth1963notes, LYY20, Pagh01, PAGH2004122, patrascu2008succincter, RR03, Yu20, liu22}.
An ideal dictionary would use nearly information-theoretically optimal space, $\approx\log\binom{U}{n}$ bits, and could process each operation in constant time.

The state-of-the-art dictionary by Bender, Farach-Colton, Kuszmaul, Kuszmaul and Liu~\cite{liu22} uses
\[
  \log\binom{U}{n}+O(n\log^{(k)} n)
\]
bits of space, while supporting each operation in $O(k)$ time for any parameter $k\in[\log^* n]$.
The term
\[
  O(\log^{(k)} n)=O(\underbrace{\log\log\cdots\log}_{k} n)
\]
is referred to as the \emph{wasted bits per key}, and $O(n\log^{(k)} n)$ is referred to as the \emph{redundancy}.
This interesting time-space trade-off may not look natural at the first glance, e.g., when $O(1)$ wasted bits per key are allowed, the data structure has operational time $O(\log^* n)$.
Surprisingly, they proved that for any hash table that makes use of ``augmented open addressing,'' which includes all known \emph{dynamic} succinct dictionaries \cite{RR03, ANS10, Bloom18, Bercea2020ADS, LYY20, bender2021all, liu22}, this trade-off is optimal!

This leads to the question of how general ``augmented open addressing'' is, and whether one can design a dynamic dictionary that does not fall in this category and has a better time-space trade-off.
For example, the best-known \emph{static} dictionary (i.e., a data structure that only needs to support queries)~\cite{Yu20}, which achieves $O(1)$ expected query time with $O(n^{\varepsilon})$ redundancy, does not use augmented open addressing.

In this paper, we show that this is indeed the best possible for \emph{dynamic} dictionaries, by proving a matching \emph{cell-probe} lower bound.\footnote{In the cell-probe model~\cite{Yao78}, it takes unit cost to read or write (probe) one memory cell of $w$ bits, and the computation is free. Thus, a cell-probe lower bound implies the same time lower bound for RAM.}
\begin{theorem}\label{thm_main}
  For $U=n^{1+\Theta(1)}$ and $k\leq \log^* n$, any dynamic dictionary storing at most $n$ keys from $[U]$ with $O(\log^{(k)} n)$ wasted bits per key must have expected insertion, deletion \emph{or} query time at least $\Omega(k)$, in the cell-probe model with word-size $w=\Theta(\log U)$.
\end{theorem}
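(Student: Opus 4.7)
The plan is an information-theoretic encoding argument: assume, toward contradiction, that there is a dictionary $D$ with $r = O(\log^{(k)} n)$ wasted bits per key and expected insert, delete, and query time all equal to $t = o(k)$, and use $D$ to construct an encoding of a random $n$-subset $S \subseteq [U]$ using strictly fewer than $\log \binom{U}{n}$ bits. The hard distribution is obtained by running a long sequence of random insertions (possibly interleaved with random deletions) so that the in-memory set is (close to) a uniformly random $n$-subset of $[U]$. Any valid encoding scheme of shorter length than $\log \binom{U}{n}$ then contradicts the entropy of $S$.

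The structural heart of the argument is a \emph{recursion on $k$}: I would show that, at the cost of one additional probe per operation, a dictionary with time $t$ and redundancy $r$ per key can be converted into a ``virtual'' dictionary that behaves as if it had time $t - 1$ and redundancy $\exp(O(r))$ per key. Starting from the hypothesized parameters $(t, r) = (k, \log^{(k)} n)$ and iterating $k$ times, one eventually reaches a data structure with $0$ probes per operation and strictly less than $\log \binom{U}{n}$ bits of total space, which is impossible. At each recursion step, the encoder simulates a test phase of random operations and records, for each operation, a short hint specifying which cells were probed and how their contents changed; because the total state has low entropy and each operation probes only $t$ cells, a concentration / averaging argument forces most probes to land in a small ``hot'' set of cells, so the per-operation hints can be made very short.

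The main technical obstacle is calibrating the entropy accounting tightly enough that each recursion step saves roughly a factor of $\log$ in the wasted bits while costing only one additional probe. The iterated logarithm $\log^{(k)} n$ has essentially no slack: losing even a constant multiplicative factor per level of recursion destroys the argument by the time $k$ approaches $\log^* n$. A second subtlety is that the $\Omega(k)$ lower bound is on the \emph{maximum} of insert, delete, and query times, so the encoding must be flexible enough to exploit whichever of the three operation types is fast; this forces a symmetric framework in which queries (passive probes) and updates (active probes) play interchangeable roles. Finally, since $D$ may be randomized, the encoder would have to work in expectation over the internal randomness and then fix a realization of the random bits that simultaneously preserves the time bound and the expected compression, which I expect to require a delicate Markov/concentration step at each level of the recursion.
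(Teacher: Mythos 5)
Your high-level framing (fix a hard distribution of random insertions/deletions, derandomize via Yao/Markov, and derive a contradiction with the entropy of the stored set by an encoding argument) is consistent with the paper, but the structural heart of your plan --- a recursion that trades one probe per operation for exponentiating the redundancy, terminating at a zero-probe dictionary --- is not the paper's argument and, as stated, does not work. First, the termination is backwards: under your recursion the redundancy \emph{grows} at each step, so after $t$ steps you have a structure using \emph{more} than $\log\binom{U}{n}$ bits, not fewer; the only contradiction left is that a zero-probe dictionary cannot answer queries, and that is where the argument collapses. If the step ``$(t,r)\Rightarrow(t-1,\exp(O(r)))$'' were a valid black-box transformation, you could apply it $O(k)$ times to the actual data structure of Bender et al.\ (which achieves time $O(k)$ with $O(\log^{(k)}n)$ wasted bits per key) and manufacture a zero-probe dictionary --- an absurdity. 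So the recursion lemma is false, or at minimum needs preconditions you have not identified, and no proof sketch is offered for it. Second, the proposed mechanism for the step --- that low entropy of the memory state forces most probes into a small ``hot'' set of cells --- does not follow: the entropy of the cell \emph{contents} places no constraint on which \emph{addresses} an adaptive algorithm probes, and a data structure can spread its probes uniformly while keeping its state maximally compressed.

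For contrast, the paper gets the factor $k$ not from a recursion on the data structure but from an information-transfer tree over the \emph{operation sequence}, with level-$\ell$ nodes spanning $\Theta(n\log^{(k)}n/\log^{(\ell)}n)$ meta-operations and branching factor $\log^{(\ell-1)}n/\log^{(\ell)}n$, so that the tree has $\Omega(k)$ levels. Each cell-probe is charged to the LCA of its current and previous occurrence, and the key lemma (the Outer Lemma) shows each internal node is charged $\Omega(m_\ell)$ probes (or else the raw probe count in its interval is already large). That lemma is proved by a communication game in which Bob must recover the partition of the inserted keys into segments --- roughly $m_\ell\log\lambda_\ell$ bits of entropy --- from the ending memory state (only $R$ bits given his side information) plus short hints, using a second encoding argument (the Inner Lemma) to show that few candidate insertion sets are ``consistent'' with the observed cell contents. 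If you want to salvage your approach, you would need to replace the unsupported recursion step with a per-scale argument of this kind; the scale hierarchy $n/\log^{(\ell)}n$ is what actually makes the $\log^* n$-type bound fall out.
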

It is worth noting that when $U/n$ is slightly sub-polynomial, e.g., $U=n^{1+1/\log^{(k)} n}$ for some constant $k$, Bender et al.~\cite{liu22} proposed another data structure with $o(1)$ wasted bits per key supporting constant-time insertions, deletions and queries.
Thus, the requirement that $U$ needs to be at least $n^{1+\Omega(1)}$ can barely be relaxed.

Our proof uses a framework similar to the \emph{information transfer tree} argument~\cite{PD04}, which is now a widely-used technique for proving cell-probe lower bounds~\cite{PD04b,PD06,CJ11,CJS15,CJS16,Yu16,WY16,AWY18,LN18,BHN19,JLN19,LMWY20}.
Roughly speaking, it builds a tree on top of a sequence of $n$ operations, and associates each cell-probe to an internal node of the tree.
The key step of the proof is to lower bound the number of cell-probes associated to each node.
In all prior work using this framework, this is done by reasoning about how the updates and queries in different subtrees must interact (i.e., to answer a query, the data structure must learn sufficient information about previous updates), thus proving a trade-off between update time and query time.
In our proof, we are able to reason, via a novel argument, about how \emph{space constraints} can force the operations to spend cell-probes, thus proving a time-space lower bound.

In fact, our new technique already gives an arguably simpler and more intuitive proof of the lower bound against data structures using augmented open addressing that Bender et al. proved.
We will present an overview of this simplified proof in the next section as a warm-up.

The technique also allows us to extend the lower bound to $o(n)$ redundancy.
\begin{theorem}\label{thm_main_2}
  For $U=n^{1+\Theta(1)}$, any dynamic dictionary storing at most $n$ keys from $[U]$ with $R<n$ bits of redundancy must have expected insertion, deletion \emph{or} query time at least $\Omega(\log (n/R))$, in the cell-probe model with word-size $w=\Theta(\log U)$.
\end{theorem}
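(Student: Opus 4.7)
The plan is to extend the information-transfer-tree framework previewed for \Cref{thm_main} into the regime $R<n$. Fix the workload of $n/2$ uniformly random insertions from $[U]$ followed by $n/2$ queries; each query, with probability $1/2$, asks about a uniformly random previously inserted key, and otherwise about a uniformly random fresh key from $[U]$. Place the $n$ operations at the leaves of a balanced binary tree of depth $\log n$, and for each internal node $v$ with left subtree $L_v$ and right subtree $R_v$, let $IT(v)$ denote the number of cell-probes during operations in $R_v$ that read a cell whose most recent write occurred during operations in $L_v$. The standard LCA charging argument gives that the total cell-probe complexity is at least $\sum_v IT(v)$, so it suffices to prove $\sum_v IT(v)=\Omega(n\log(n/R))$.

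The crux is a space-dependent lower bound on $IT(v)$. Fix $v$ at depth $d$, so $|L_v|=|R_v|=k=n/2^{d+1}$, and assume $k\geq R$. I aim to show $IT(v)=\Omega(k)$. Conditioned on the workload outside $v$'s subtree, $L_v$'s random insertions carry $\Theta(k\log U)$ bits of entropy. An encoding argument reconstructs these $\Theta(k\log U)$ bits from (i) the $IT(v)$ transferred cells, (ii) a snapshot of the global dictionary state at an appropriate moment, and (iii) the query outputs of $R_v$. Source (ii) has $\log\binom{U}{n}+R$ bits, but once the snapshot is required to encode the (incompressible) outside insertions via the bijection between dictionary contents and stored sets, only $O(R)$ bits of the state are free to absorb information about $L_v$. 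Source (iii) contributes at most $k$ bits, since each membership query returns a single bit. Combining, $IT(v)\cdot w\geq \Omega(k\log U)-O(R+k)=\Omega(k\log U)$ whenever $k=\Omega(R)$, which with $w=\Theta(\log U)$ gives $IT(v)=\Omega(k)$. Summing over the $2^d$ nodes at each depth $d\leq\log(n/R)-O(1)$ yields $\Omega(n)$ cell-probes at that depth, and summing across all such depths gives $\sum_v IT(v)=\Omega(n\log(n/R))$.

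The main obstacle is formalizing the incompressibility step in (ii): showing that despite its $\log\binom{U}{n}+R$ bits, the snapshot can absorb at most $O(R)$ bits of information about $L_v$ once the outside operations are accounted for. A naive conditioning on the outside operations couples them with $L_v$ in a way that loses the bound. The fix is presumably a careful chain of conditionings---condition first on the set stored in the dictionary at the snapshot, and then note that the snapshot state, conditioned on this set, has entropy at most $R$, so any information it carries about $L_v$ beyond the set itself is bounded by $R$; the outside operations, together with the set, already determine the keys contributed by $L_v$ up to the $R$-bit redundancy slack. Once this step is in place, the remainder follows standard information-transfer bookkeeping, and the per-level summation yields the claimed $\Omega(\log(n/R))$ bound on the expected time of insertion, deletion, or query.
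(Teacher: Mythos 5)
There is a genuine gap, and it sits exactly where you flag "the main obstacle." Your hard instance (insertions followed by queries) combined with the classical information-transfer accounting cannot yield the per-node bound $IT(v)=\Omega(k)$. For any internal node $v$ lying entirely inside the insertion phase, the operations in $R_v$ produce no output, so correctness places no constraint forcing $R_v$ to read cells written in $L_v$; the only possible source of hardness is the space bound. Your attempt to import the space bound via the snapshot does not work: the snapshot (a full memory state) determines the entire stored key set, \emph{including} the keys inserted during $L_v$. Conditioned on the outside operations it therefore still carries about $\log\binom{U-(n-k)}{k}+R\approx k\log U+R$ bits, not $O(R)$, so the inequality $IT(v)\cdot w\ge \Omega(k\log U)-O(R+k)$ has no valid derivation. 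Your proposed fix --- condition first on the set stored at the snapshot --- gives away $L_v$'s inserted keys as a set, after which the residual entropy to be recovered is only the insertion \emph{order}, roughly $k\log k$ bits, and the $\Omega(k\log U)$ target (hence the whole contradiction) evaporates. There is no choice of conditioning that rescues the naive accounting: the quantity one can actually hope to charge per node is the order/partition information, which is why the paper's argument is built around it.

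This is precisely what the paper's proof does differently, and why it needs both deletions and a much more delicate mechanism. The hard distribution interleaves a random deletion with each random insertion; the tree node $u$ must force Alice to communicate the \emph{partition} of the inserted set into segments (entropy $\approx m_\ell\log\lambda_\ell$), and she starts by sending the final state conditioned on the key set, costing only $R$ bits (\cref{eq:ent_stateend}). The entire content of the lower bound is then a \emph{constant-factor saving} over the trivial $m_\ell\log\lambda_\ell$ encoding of the partition, obtained by letting Bob reject most candidate insertion sets via a simulation test (\cref{lm:outer}). The reason a wrong candidate fails the test is the inner lemma (\cref{lm:inner}): since both candidates process the \emph{same} deletions, the data structure must probe and rewrite the cells holding the deleted keys (otherwise it wastes $\Omega((m_{\ell-1})\log U)\gg R$ bits of space), so two disjoint insertion sets cannot leave the commonly probed cells in matching states except with probability $U^{-\Omega(m_{\ell-1})}$. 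None of this machinery has a counterpart in your proposal, and without deletions the inner-lemma step is simply unavailable (indeed, the paper's Section 8 gives an explicit three-cell example showing the consistency claim fails without queries, and an insertion-only segment removes the space pressure the inner lemma exploits). To repair your approach you would need to (i) reinstate deletions in the hard instance, (ii) retarget the encoding at the partition/order information rather than the keys themselves, and (iii) supply the consistency argument --- at which point you have reconstructed the paper's proof.
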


Furthermore, if the keys are associated with values of $\Theta(\log U)$ bits, the same technique proves that even if the queries are allowed to take arbitrarily long time, the updates must still follow the same lower bound.
\begin{restatable}{theorem}{UpdateOnly}\label{thm:update_only}
  Consider a dynamic dictionary storing at most $n$ keys from $[U]$, each associated with a value in $[V]$, with $\Redun$ bits of redundancy, where $U \ge 3n$ and $V = U^{2 + \Theta(1)}/n^2$. Then, in the cell-probe model with word-size $w=\Theta(\log U)$,
  \begin{enumerate}
  \item if $\Redun \geq n$ can be written as $\Redun = O(n \log^{(k)} n)$ for $k\leq\log^*n$, then the expected update time is $\Omega(k)$;
  \item if $\Redun < n$, then the expected update time is $\Omega(\log (n / \Redun))$.
  \end{enumerate}
\end{restatable}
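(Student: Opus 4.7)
The plan is to adapt the information-transfer-tree (ITT) framework used for \cref{thm_main}, with the key twist that, in the absence of a query whose time needs to be bounded, the \emph{final memory state} together with the hypothetical arbitrary-time query algorithm plays the role of the decoder through which information about past updates must flow.

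\textbf{Hard distribution.} I would take a random update-only sequence $\sigma$ of $N=\Theta(n)$ operations consisting of an initial block of $n$ insertions of uniformly random distinct keys from $[U]$ paired with uniformly random values from $[V]$, followed by a long alternation of (delete a uniformly random live key) with (insert a fresh uniformly random key-value pair). This keeps the alive set of size $\Theta(n)$ throughout while injecting $\log V=\Theta(\log U)$ fresh bits of value-entropy on every insertion.

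\textbf{ITT and charging.} Build the ITT on $\sigma$; for an internal node $v$ let $L_v,R_v$ be the two halves of its operation interval and $\widetilde O_v$ everything outside $L_v$. Following the PD04 template, charge each probe made at a time $t\in R_v\cup \textup{after}$ that reads a cell last written during $L_v$ to the node $v$; each probe is charged to a unique node, so $\sum_v |T_v|\le T$. The standard information-theoretic bookkeeping gives $I_v:=I(L_v;\text{memory}_T\mid \widetilde O_v)\le(|T_v|+t_{L_v})\cdot w$, with the extra $t_{L_v}$ term absorbing cells written during $L_v$ that are never re-read. Summing, $\sum_v I_v\le O(T\log N)\cdot w$.

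\textbf{The novel step: lower bounding $I_v$ via the final memory.} Because query time is unrestricted, a decoder is allowed to read the final memory and recover all current key-value pairs. Given $\widetilde O_v$, the decoder can identify which of the surviving keys were inserted during $L_v$; call their number $s_v$. Each such value is uniform in $[V]$ and independent of $\widetilde O_v$, giving $I_v\gtrsim s_v\log V$ up to a slack of at most $R$ coming from $H(\text{memory}\mid \text{current pairs})\le R$. The hypothesis $V=U^{2+\Theta(1)}/n^2$, i.e.\ $\log V>2\log(U/n)$, is used precisely here: each surviving value carries strictly more entropy than the address bits a decoder needs to locate the key inside the succinct memory, so the redundancy budget $R$ cannot amortize these contributions away across all the live keys.

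\textbf{Combining and the main obstacle.} Pairing $\sum_v I_v\le O(T\log N)\cdot w$ with a matching lower bound $\sum_v \E[I_v]\ge \Omega(nk\cdot w\log N)$ yields $T\ge \Omega(nk)$, i.e.\ $\Omega(k)$ amortized update time; the $R<n$ case is analogous with $\log(n/R)$ in place of $k$. The hardest step is the lower bound on $\sum_v I_v$: a naive accounting that credits each surviving insertion with $\log V$ bits at each ancestor only delivers $\sum_v I_v=\Omega(n\log^2 N)$ and hence the trivial $T=\Omega(n)$. Recovering the extra factor $k$ requires a multi-scale refinement of the ITT in which the succinctness bound $R=O(n\log^{(k)} n)$ is invoked at each of $k$ nested scales, forcing the data structure to re-expose $\Omega(1)$ bits per live insertion to a coarser time scale; ruling out that the algorithm can ``hide'' this re-exposure inside writes made wholly within $R_v$ (even with unrestricted re-encoding of live values) is the technical heart of the argument and is exactly where the novel space-to-probe reasoning advertised in the introduction carries the weight.
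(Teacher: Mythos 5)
There is a genuine gap, and you have located it yourself: the entire burden of the theorem rests on the step you defer to ``a multi-scale refinement of the ITT,'' which is never carried out. Worse, the framework you set up cannot deliver that step. Your lower bound on $I_v$ credits each surviving insertion with $\log V=\Theta(w)$ bits at each of its $O(\log N)$ ancestors, so $\sum_v I_v\le O(nw\log N)$ no matter how the accounting is refined (by the chain rule, the nodes at any single level can jointly extract at most $H(\text{final memory})=O(nw)$ bits). Paired with $\sum_v|T_v|\le T$ this can never beat $T=\Omega(n)$, i.e.\ $\Omega(1)$ amortized, exactly as you observe. The factor $k$ cannot come from counting value-entropy transferred to the final memory; it has to come from the \emph{space constraint} forcing repeated work, and your proposal contains no mechanism by which $R=O(n\log^{(k)}n)$ enters the lower-bound side of the inequality. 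The sentence claiming that $\log V>2\log(U/n)$ prevents $R$ from ``amortizing these contributions away'' is an assertion, not an argument.

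The paper's route is structurally different and is worth contrasting. It uses a tree with \emph{non-uniform degrees} $\br=\log^{(\lv-1)}n/\log^{(\lv)}n$ (so there are $\Theta(k)$ levels each contributing $\Omega(n)$ to $\cost$), and the per-node bound (\cref{lm:outer}) is proved by an \emph{encoding} argument, not a mutual-information transfer argument: Alice must communicate to Bob the partition of the inserted keys into segments, which has entropy $\approx\widl\log\br$; the space bound lets her send $\staend$ in only $\Redun$ bits; and the saving that produces the contradiction comes from Bob eliminating most candidate insertion sets via a consistency test on the probed cells. The role of the values $V=U^{2+\Theta(1)}/n^2$ is confined to the modified inner lemma (\cref{lm:inner_no_query}): without queries, Bob recovers the deleted set by receiving $\stab$ conditioned on known keys and values (costing only $\Redun$ bits), and the value-entropy of $\inseta\cup\insetb$ is what makes two distinct candidate sets inconsistent with probability $U^{-\gamma\wid}$. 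None of this machinery --- the non-uniform tree, the qualified/consistent set counting, or the inner-game compression --- appears in your proposal, so the proof as written does not establish the theorem.
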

This lower bound shows the dictionary by Bender et al.~(which can support values) is optimal in a very strong sense: their data structure achieves $O(k)$ \emph{update time} in worst-case except with inverse polynomial probability,\footnote{By simply rebuilding the whole data structure when an update takes more than $\Omega(k)$ time, the worst-case-with-high-probability bound implies $O(k)$ expected time.} $O(1)$ \emph{query time} in worst-case, and it is dynamically resizable;\footnote{A dictionary is dynamically resizable if its space usage is in terms of the \emph{current} data size.}
we show that even if we relax all other properties, the $O(k)$ update time is still not improvable just under the space constraint.

We emphasize that this is the first space-update trade-off lower bound in the cell-probe model for any data structure problem.
Such lower bounds were proved in~\cite{LNN15} for streaming problems against \emph{non-adaptive} update algorithms, which was later shown not to hold for general adaptive update algorithms~\cite{AY20}.
In fact, by applying \emph{global rebuilding}~\cite{Overmars83}, one cannot hope to prove such a trade-off when the redundancy is linear in the total space without proving a \emph{super-linear RAM lower bound}, a notoriously hard question.
Naively, a data structure can always use half of its memory as a buffer to store the unprocessed updates, and only batch-process it when it is full; if there is a batch-processing algorithm with linear time in \emph{RAM}, then this data structure directly has amortized update time $O(1)$ -- proving an $\omega(1)$ update time lower bound implies a RAM lower bound for batch processing. 
Furthermore, to achieve non-amortized update time $O(1)$, global rebuilding suggests building \emph{two} buffers in half of the memory, and using them alternatingly: when one buffer is full, one could gradually execute the batch-processing algorithm during the next $O(n)$ updates, and use the other buffer to store them.
Note that such a strategy is not applicable in the \emph{succinct} regime, i.e., where the redundancy is asymptotically $o(1)$-fraction of the total space -- the size of the buffers cannot exceed the redundancy, thus we would have to flush the buffers too frequently.

\bigskip

With an easy adaption, our technique can also prove tight lower bounds for strongly history-independent dictionaries. A dictionary is said to be \emph{strongly history-independent} (a.k.a.~\emph{uniquely representable}) if its memory state is fully determined by the set of elements in it together with some random bits the algorithm uses. We will prove the following result.

\begin{restatable}{theorem}{HistIndLB}\label{thm:lb_hist_ind}
  For $U = n^{1 + \Theta(1)}$ and $R \ge 1$, any strongly history-independent dynamic dictionary storing at most $n$ keys from $[U]$ with $R$ bits of redundancy must have expected insertion, deletion \emph{or} query time at least $\Omega\bigbk{\log \frac{n \log U}{R}}$, in the cell-probe model with word-size $w = \Theta(\log U)$.
\end{restatable}

This lower bound is tight for $R \ge n / \poly \log n$, due to the known data structures \cite{Kus23, LiL23}. Note that this time-space trade-off is worse than the optimal trade-off for succinct dynamic dictionaries without the history-independent constraint. We remark that this is the first separation between a data structure problem and its strongly history-independent version under the RAM (or cell-probe) model.

\smallskip

Another related problem is \emph{Stateless Allocation} \cite{Gol08, NaoT01, GooK17, BerK20, Kus23}. It requires the algorithm to put at most $(1-\eps) n$ elements from universe $[U]$ into $n$ slots, where each slot can only accommodate one element, and $\eps n$ slots are left empty. We require the mapping between elements and slots to only depend on the current element set and some random bits, i.e., the assignment is strongly history-independent. The performance of the algorithm is measured by its \emph{switching cost}, the number of elements that changes its assigned slot during an insertion/deletion. Kuszmaul~\cite{Kus23} showed an algorithm with expected switching cost $O(\log \eps^{-1})$ for $1/n \le \eps \le 1$. We will prove that this upper bound is actually tight.

\begin{restatable}{theorem}{AllocLB}\label{thm:lb_alloc}
  For $1/n \le \eps \le 1$ and $U \ge 3n$, any stateless allocation algorithm that assigns at most $(1-\eps) n$ elements from universe $[U]$ to $n$ slots must have expected switching cost at least $\Omega(\log \eps^{-1})$.
\end{restatable}

\section{Technical Overview}
\label{sec:overview}

Now we present an overview of our proof technique.
For simplicity, let us focus on the case with $O(n)$ bits of redundancy, and show a time lower bound of $\Omega(\log^* n)$.
Let us also assume $U=V=n^3$.

\subsection{Slot Model}
As a ``proof-of-concept,'' we first showcase a lower bound in the \emph{slot model}, a generalization of augmented open addressing.
Then we move on to discussing how to adapt the proof to the cell-probe model in \cref{sec:overview_cell_probe}, before presenting the formal proof in the later sections.

In the slot model, a data structure maintains a set of at most $n$ keys (balls) from $[U]$ under key insertions and deletions, and maps the keys to $n$ slots (bins) indexed by $[n]$, with at most one key in each slot at any point.
The data structure must maintain at most $\log\binom{U}{n}+O(n)$ bits of memory, which encodes the set itself and $O(n)$ bits of (arbitrary) auxiliary information, and can also determine the location (i.e., corresponding slot) of each key.
We say a key $k$ ``is in slot $i$'' if it is mapped to slot $i$ (it is irrelevant to how $k$ is stored in the memory state).
Each time we insert or delete a key, the data structure updates its memory and moves (or swaps) the keys between slots.
In the slot model, moving a key from one slot to another takes $O(1)$ cost, while accessing the memory is free.
Thus, the goal is to minimize the key-moves during the insertions and deletions.
We will be proving that each operation must move at least $\Omega(\log^* n)$ keys on average.

We note that the slot model is similar to, yet more general than, the augmented open addressing defined in~\cite{liu22}.
Augmented open addressing has a list of ``hash functions'' $h_1,h_2,\ldots,h_n$ mapping $[U]$ to $[n]$.
Key $x$ must be put in slot $h_i(x)$ for some $i$, and the auxiliary information stores the index $i$ using $O(\log i)$ bits.

One can either view a data structure in the slot model as a dictionary with values, where the balls represent the keys' associated values (which are physically stored in their corresponding slots), or view the ``quotient''~\cite{Knuth73,Pagh99} as the balls in the key-only setting.
It is more limited than general data structures as the values or quotients must be stored in $n$ memory slots atomically, while it is also stronger in the sense that the set of keys is known for free and we only count the number of moves.

\paragraph{Hard distribution.}
Let us consider the following sequence of operations (similar to the hard instance analyzed in~\cite{liu22}): We initialize by inserting a set of $n$ random keys, then repeatedly delete a random key from the initial set and insert a new random key from $[U]$.
We call each pair of deletion and insertion after the initialization a \emph{meta-operation}.
The analysis will focus on the cost of the $n$ meta-operations.

\paragraph{A simple upper bound.}
To motivate the quantity $\log^* n$, let us first consider the following algorithm that maintains the keys approximately in sorted order using lazy updates.
After the initialization, the data structure spends $O(n)$ swaps to place the keys in sorted order in the $n$ slots.
Thus, the key set by itself determines the location of every key.

Next, for each meta-operation, after deleting a key and thus emptying a slot, we can first put the new key in the empty slot, and store its identity and location as auxiliary information.
As we delete and insert more keys, the slots gradually become ``less sorted.''
Since we only allow $O(n)$ bits of auxiliary information, this can be done as long as the new keys and their locations take at most $O(n)$ bits to store, i.e., we can store $O(n/\log n)$ new keys in this way.

After $O(n/\log n)$ meta-operations, we rearrange all \emph{new keys} so that they become in sorted order \emph{by themselves}, while the set of new keys still resides in the set of slots that contained deleted keys (see the first three steps in \Cref{fig:key_arrangement} for an example).
This takes $O(1)$ \emph{amortized} swaps per insertion.
It then suffices to only indicate as auxiliary information which keys are the new keys and the \emph{set} of their locations.
This takes $\log\binom{n}{O(n/\log n)}=O(n\log\log n/\log n)$ bits.
After the rearrangement, every new key effectively only costs $O(\log\log n)$ auxiliary bits. 

\begin{figure}[t]
\begin{tikzpicture}
	\node[] at (0,0) {$1$};
	\node[] at (.5,0) {$2$};
	\node[] at (1,0) {$3$};
	\node[] at (1.5,0) {$4$};
	\node[] at (2,0) {$5$};
	\draw [-] (-.4,-.3) rectangle (2.4,.3);
	\draw [-latex] (1.5,.6) -- (1.5,.3);
	\node[] at (1.5,.9) {$6$};
	
	\draw [->>] (2.5, 0) -- (4.5, 0);
	\node[] at (3.5,.3) {replace 4};
	\node[] at (3.5,-.3) {with 6};

	\node[] at (5,0) {$1$};
	\node[] at (5.5,0) {$2$};
	\node[] at (6,0) {$3$};
	\node[draw] at (6.5,0) {$6$};
	\node[] at (7,0) {$5$};
	\draw [-] (4.6,-.3) rectangle (7.4,.3);
	\draw [-latex] (5.5,.6) -- (5.5,.3);
	\node[] at (5.5,.9) {$7$};
	
	\draw [->>] (7.5, 0) -- (10, 0);
	\node[] at (8.75,.3) {after $n/\log n$};
	\node[] at (8.75,-.3) {operations};
	
	\node[] at (6,-.6) {aux info: $O(\log n)$};

	\node[] at (10.5,0) {$1$};
	\node[draw] at (11,0) {$7$};
	\node[] at (11.5,0) {$3$};
	\node[draw] at (12,0) {$6$};
	\node[] at (12.5,0) {$5$};
	\draw [-] (10.1,-.3) rectangle (12.9,.3);
	
	\draw [->>] (13, 0) -- (13.6, 0) -- (13.6, -2) -- (13, -2);
	\filldraw[fill=white, draw=white] (13.5, -.5) rectangle (13.7, -1.4);
	\node[] at (14.3,-.7) {rearrange last};
	\node[] at (14.3,-1.2) {$n/\log n$ keys};
	
	\node[] at (11.5,-.6) {aux info:};
	\node[] at (11.5,-1.1) {$O(\frac{n}{\log n}\cdot \log n)$};

	\node[] at (10.5,-2) {$1$};
	\node[draw] at (11,-2) {$6$};
	\node[] at (11.5,-2) {$3$};
	\node[draw] at (12,-2) {$7$};
	\node[] at (12.5,-2) {$5$};
	\draw [-] (10.1,-2.3) rectangle (12.9,-1.7);
	\draw [latex-latex](11,-1.6) parabola bend (11.5,-1.5) (12,-1.6);
	
	\node[] at (11.5,-2.6) {aux info:};
	\node[] at (11.5,-3.1) {$O(\log\binom{n}{n/\log n})$};

	\node[draw,circle,inner sep=1.5] at (5,-2) {$9$};
	\node[draw] at (5.5,-2) {$6$};
	\node[] at (6,-2) {$3$};
	\node[draw] at (6.5,-2) {$7$};
	\node[draw,circle,inner sep=1.5] at (7,-2) {$8$};
	\draw [-] (4.6,-2.3) rectangle (7.4,-1.7);
	\draw [-latex] (5,-1.4) -- (5,-1.7);
	\draw [-latex] (7,-1.4) -- (7,-1.7);
	
	\draw [<<-] (7.5, -2) -- (10, -2);
	\node[] at (8.75,-1.7) {$n/\log \log n$};
	\node[] at (8.75,-2.3) {operations};
	
	\draw [->>] (4.5, -2) -- (1,-2) --(1,-2.7);
	\node[] at(2.75,-1.7) {rearrange last};
	\node[] at(2.75,-2.3) {$n/\log n$ keys};

	\node[draw,circle,inner sep=1.5] at (0,-3.5) {$8$};
	\node[draw] at (.5,-3.5) {$6$};
	\node[] at (1,-3.5) {$3$};
	\node[draw] at (1.5,-3.5) {$7$};
	\node[draw,circle,inner sep=1.5] at (2,-3.5) {$9$};
	\draw [-] (-.4,-3.8) rectangle (2.4,-3.2);
	\draw [latex-latex](0,-3.1) parabola bend (1,-2.9) (2,-3.1);
	\node[] at (1,-4.1) {aux info:};
	\node[] at (1,-4.6) {$O(\frac{\log n}{\log\log n}\cdot\log\binom{n}{n/\log n})$};
	
	\draw [->>] (2.5, -3.5) -- (5.5, -3.5);
	\node[] at (4,-3.2) {rearrange last};
	\node[] at (4,-3.8) {$n/\log\log n$ keys};

	\node[draw] at (6,-3.5) {$6$};
	\node[draw] at (6.5,-3.5) {$7$};
	\node[] at (7,-3.5) {$3$};
	\node[draw] at (7.5,-3.5) {$8$};
	\node[draw] at (8,-3.5) {$9$};
	\draw [-] (5.6,-3.8) rectangle (8.4,-3.2);
	\node[] at (7,-4.1) {aux info:};
	\node[] at (7,-4.6) {$O(\log\binom{n}{n/\log\log n})$};
\end{tikzpicture}
\caption*{\footnotesize \raggedright In the first $n/\log n$ operations, the new keys are directly placed in the empty slots which are marked by squares (the first two steps), and will be rearranged later (the third step). This process continues for each subsequent group of $n/\log n$ operations, until the $(\log n/ \log \log n)$-th group, which is marked by circles (the fourth step). After rearranging this group (the fifth step), we further rearrange all the $n/\log \log n$ new keys (the sixth step). }
\caption{Key Arrangement}
\label{fig:key_arrangement}
\end{figure}

We do this to every segment of $O(n/\log n)$ consecutive meta-operations: store their identities and locations until it takes $O(n)$ bits; then rearrange so that they become in sorted order by themselves; and further store for every new key and every emptied slot, which segment we inserted the key or emptied the slot.
Since every new key and slot needs $O(\log\log n)$ bits to encode, we can repeat this for $O(\log n/\log\log n)$ segments, i.e., $O(n/\log\log n)$ meta-operations in total.
Thereafter, we further sort these $O(n/\log\log n)$ keys, as shown in the last step in \Cref{fig:key_arrangement}. 
Now we only need to pay $O\bigbk{\log\binom{n}{n/\log\log n}}=O(n\log\log\log n/\log\log n)$ total auxiliary bits, or $O(\log\log\log n)$ auxiliary bits per key.
Sorting again takes amortized $O(1)$ additional swaps per meta-operation.

This suggests the following strategy based on lazy updates:
For each meta-operation, we can place the new key in the slot that just became empty and store its identity and location;
after every segment of $O(n/\log n)$ consecutive meta-operations, we sort the new keys inserted in this segment and store the set of locations as a batch\footnote{Strictly, we also store which keys belong to the new batch, whose space usage is no more than storing the set of locations.};
after every $O(\log n/\log\log n)$ consecutive segments, we sort all new keys inserted in them and store the set of locations, and so on.
This process continues until we have inserted $O(n)$ keys, in which case, we can afford to re-sort all keys with $O(1)$ additional swaps per key.
There will be a total of $O(\log^*n)$ levels of sorting.
Since sorting in each level takes $O(1)$ swaps per key, the total number of swaps for $O(n)$ meta-operations is $O(n\log^*n)$.

\subsection{Slot Model Lower Bound}
The slot model lower bound is inspired by the above algorithm.
Note that in the algorithm, for every $O(n/\log n)$ consecutive meta-operations, we must rearrange the keys, otherwise the auxiliary information will take more than $O(n)$ bits.
The rearrangement takes $O(1)$ cost per key.
We will show that this $O(1)$-cost per key cannot be avoided.

We say a slot is \emph{accessed} if we move a key from or to this slot.
Formally, we will show that for every consecutive $m=\Omega(n/\log n)$ meta-operations, there must be $\Omega(m)$ times in expectation that a slot is accessed during multiple meta-operations.
In other words, we go over all $m$ meta-operations, and examine during each meta-operation, which slots are accessed \emph{both in this and in a previous meta-operation}, and add the number of such slots for every meta-operation together.
We assert that there must be $\Omega(m)$ times in expectation that a slot is accessed in a meta-operation and also in a previous one (thus, if the same slot is accessed during $t$ different meta-operations, it is counted $t-1$ times).

Similarly, for every consecutive $m=\Theta(n/\log\log n)$ meta-operations, consisting of $\Theta(\log n/\log\log n)$ segments of size $\Theta(n/\log n)$, we assert that there must be $\Omega(m)$ times in expectation that a slot is accessed during multiple \emph{segments}, and so on.
We formally state it in the following lemma, which is the slot-model variant of \cref{lm:outer}.

\begin{lemma*}
	Let $1\leq \ell\leq \log^* n$.
	For any consecutive $m=c_1\cdot n/\log^{(\ell)} n$ meta-operations for a sufficiently large constant $c_1$, partitioned into $\lambda=\Theta(\log^{(\ell-1)} n/\log^{(\ell)} n)$ \emph{segments} of size $\Theta(n/\log^{(\ell-1)}n)$, there must be $\Omega(m)$ times that a slot is accessed during multiple \emph{segments}.
\end{lemma*}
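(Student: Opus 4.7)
The plan is an encoding argument against the hard distribution. After $m$ consecutive meta-operations partitioned into $\lambda$ segments, the segment-of-insertion assignment of the $m$ new keys carries $\Theta(m\log\lambda)$ bits of residual entropy conditional on $(M_T, M_{T+m})$, since each segment inserts exactly $m/\lambda$ keys and the assignment of keys to segments is close to uniform over the multinomial $\binom{m}{m/\lambda,\ldots,m/\lambda}$. Using $\log\lambda = \Theta(\log^{(\ell)} n)$ together with $m = c_1 n/\log^{(\ell)} n$, this entropy equals $\Theta(c_1 n)$. I will show that transmitting this partition requires at most $O(n) + O(X\log\lambda)$ bits, where $X$ is the number of multi-segment access events, and then conclude $X = \Omega(m)$ for $c_1$ a sufficiently large constant by comparing the two quantities.

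\textbf{Key observation and encoding.} If a slot $s$ is accessed in only one segment $j$ during the $m$ meta-operations, then whichever key resides in $s$ at time $T+m$ was placed there during segment $j$ and never moved afterwards. Consequently, every new key sitting in a single-segment slot has its segment of insertion fully determined by its position in $M_{T+m}$ together with the ``which segment accessed this slot'' pattern. The encoder, given the full execution, transmits to a decoder holding $M_T$, $M_{T+m}$, and the algorithm only: (i) for each multi-segment slot, a $\log\lambda$-bit tag identifying which segment placed its current key, totaling $O(X\log\lambda)$ bits; and (ii) a succinct description of the per-slot access pattern, so that the decoder can distinguish single- from multi-segment slots and, for each single-segment slot, recover the unique accessing segment. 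A symmetric encoding handles segment-of-deletion for the removed old keys. Matching the encoding length against the $\Theta(c_1 n)$ partition entropy yields an inequality of the shape $O(n) + O(X\log\lambda) \geq \Omega(c_1 n)$, which for large $c_1$ forces $X = \Omega(m)$.

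\textbf{Main obstacle.} The delicate step is item (ii): describing the segmentwise access pattern on single-segment slots without inflating the encoding by the very quantity we wish to lower-bound. A naive encoding of the accessed-slot sets $A^{(j)}$ uses $\sum_j |A^{(j)}|\log n$ bits, which is essentially the total cost times $\log n$ and overwhelms the $O(n)$ budget. The resolution is to piggyback on $M_{T+m}$: each single-segment slot that currently holds a \emph{new} key is already named by that key's position in $M_{T+m}$, so the encoder only owes a $\log\lambda$-bit segment label for it, and those bits are absorbed by the $O(n)$ of ``aux'' in $M_{T+m}$ via a dimension-counting argument on reachable memory states. Single-segment slots that currently hold an \emph{old} key still require fresh naming, but a charging argument --- an old key sitting in a once-accessed slot forces some other slot to have been emptied during that segment, whose access can be charged to a multi-segment event --- caps their number in terms of $X$. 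Formalizing this charging cleanly, and handling the symmetric argument on the deletion side together with the standard concentration bounds on the hard distribution (e.g., that insertions and deletions are nearly disjoint), is the main technical content of the proof.
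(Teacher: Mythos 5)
Your high-level shape is the same as the paper's (compare the entropy $\Theta(m\log\lambda)=\Theta(c_1 n)$ of the segment assignment against an encoding of length $O(n)+O(X\log\lambda)$, then take $c_1$ large), but the proposal is missing the one idea that makes such an encoding possible, and the substitute you offer for it is circular. The crux is exactly the step you flag as the ``main obstacle'': for each single-segment slot currently holding a new key, the decoder must learn \emph{which} segment accessed it. There are up to $m$ such slots, so this is $\Theta(m\log\lambda)$ bits of information --- precisely the entropy you are trying to lower-bound. Your claim that these per-slot labels are ``absorbed by the $O(n)$ of aux in $M_{T+m}$ via a dimension-counting argument'' cannot work: the $O(n)$ auxiliary bits of the final memory state are the adversarial data structure's own redundancy budget, and an adversarial data structure is free to make $M_{T+m}$ depend only on the final key set, revealing nothing about insertion order. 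If the labels have to be transmitted, the message is $\Theta(m\log\lambda)$ bits and there is no contradiction; if they could be read off $M_{T+m}$ for free, you would be assuming the conclusion. The auxiliary ``charging argument'' for old keys in single-segment slots is also unformalized, but it is secondary to this gap.

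The paper closes this gap with a different mechanism: a \emph{sequential} decoding in which Bob is additionally given the full deletion sequence $\vec{d}$ and the unordered set of inserted keys (so the only unknown is the partition/order), and is never told any access pattern. Bob simulates the segments in order; at the start of segment $i$ he already knows the memory state before it. To identify segment $i$'s insertions he starts from the slot of a deleted key (known from $\vec{d}$) and traces the permutation cycle induced by the segment's key moves, alternating lookups between the pre-segment state and $C_{\textup{end}}$, until he reaches a key not in the original set --- that key must be the inserted one. This works precisely because single-segment slots retain their segment-$i$ contents in $C_{\textup{end}}$, so Bob reconstructs the access pattern himself rather than receiving it; Alice only pays $\log m$ bits for each of the $O(X)$ meta-operations where the tracing breaks. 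To repair your proposal you would need to replace the ``absorption'' step with some such bootstrapping device that lets the decoder deduce, rather than receive, the single-segment labels.
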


We used the convention that $\log^{(0)} n:=n$.
To see why it implies a lower bound of $\Omega(\log^* n)$, we build a tree on top of the $n$ meta-operations with depth $\log^* n$.
Each leaf of the tree (level-$0$ node) corresponds to one meta-operation, and we place the leaves in the same order as the meta-operations.
Then we divide the leaves into groups of $\Theta(n/\log n)$ consecutive meta-operations, and assign a common parent (a new node) at level $1$ to each group.
Next, $\Theta(\log n/\log\log n)$ consecutive level-$1$ nodes are grouped together, and assigned a common parent at level $2$, and so on.
In general, we group $\Theta(\log^{(\ell-1)} n/\log^{(\ell)} n)$ nodes at level $\ell-1$, and assign them a common parent at level $\ell$. See \Cref{fig:tree_structure} for details.

\begin{figure}[t]
\begin{tikzpicture}

\node[] at (0,0) {Operations:};
\node[] (p01) at (2,0) {$\textsf{op}_1$};
\node[] (p02) at (3,0) {$\textsf{op}_2$};
\node[] at (4,0) {$\cdots$};
\node[] (p03) at (5,0) {$\textsf{op}_{m_1}$};
\node[] (p11) at (6.5,0) {$\textsf{op}_{m_1+1}$};
\node[] (tmp) at (7.5,0) {$\cdots$};
\node[] (p12) at (8.5,0) {$\textsf{op}_{2m_1}$};
\node[] at (11,0) {$\cdots\qquad\cdots$};
\node[] (p00) at (13,0) {$\textsf{op}_{n-1}$};
\node[] (asd) at (14,0) {$\textsf{op}_{n}$};
\node[color=red] (t01) at (3,-.9) {probe cell $i$};
\draw[-latex,color=red] (t01) -- (p02);
\draw[-,color=red] (2.7,-.2) -- (3.3,-.2);
\node[color=red] (t02) at (8.5,-.9) {probe cell $i$};
\draw[-latex,color=red] (t02) -- (p12);
\draw[-,color=red] (8,-.2) -- (8.9,-.2);

\node[] at (0,1.5) {level 1:};
\node[draw] (p21) at (3.5,1.5) {$m_1=c_1n/\log n$ ops};
\draw[-] (p21) -- (p01);
\draw[-,color=red,line width=1.5] (p21) -- (p02);
\draw[-] (p21) -- (p03);
\node[draw] (p22) at (7.5,1.5) {$m_1$ ops};
\draw[-] (p22) -- (p11);
\draw[-,color=red,line width=1.5] (p22) -- (p12);
\draw[-] (p22) -- (tmp);
\node[draw] (p31) at (10.5,1.5) {$m_1$ ops};
\node [] at (9,1.5) {$\cdots$};
\node[draw] (p32) at (13,1.5) {$m_1$ ops};
\draw[-] (p32) -- (asd);
\draw[-] (p32) -- (p00);
\draw[-] (p32) -- (12,.3);
\draw [dashed](2.3,0.8) parabola bend (3.5,0.6) (4.7,0.8);
\node[] at (1.5,0.7) {$\lambda_1=\frac{c_1n}{\log n}$};
\draw [dashed](6.7,0.8) parabola bend (7.5,0.65) (8.3,0.8);
\node[] at (6.3,0.7) {$\lambda_1$};
\draw [dashed](12,0.8) parabola bend (13,0.65) (14,0.8);
\node[] at (11.6,0.7) {$\lambda_1$};

\node[] at (0,3) {level 2:};
\node[draw] (p41) at (5.75,3) {$m_2=c_1 n/\log \log n$ ops};
\node[] at (9,3) {$\cdots$};
\node[draw] (p42) at (11.5,3) {$m_2$ ops};
\draw[-,color=red,line width=1.5] (p41) -- (p21);
\draw[-,color=red,line width=1.5] (p41) -- (p22);
\draw[-] (p42) -- (p31);
\draw[-] (p42) -- (p32);
\draw [dashed](3.9,2.3) parabola bend (5.5,2.1) (7.2,2.3);
\node[] at(3,2.2) {$\lambda_2=\frac{\log n}{\log \log n}$};
\draw [dashed](10.7,2.2) parabola bend (11.7,2.1) (12.6,2.2);
\node[] at(10.3,2.2) {$\lambda_2$};

\node[color=red] (qwq) at (2.5,3.5) {LCA};
\draw[rounded corners,-latex,color=red] (qwq) -- (2.5,3) -- (3.5,3);

\node[] at (0,4.4) {$\vdots$};
\node[] at (0,5.5) {level $\log^* n$:};
\node[draw] (p51) at (8,5.5) {$n$ ops};
\draw[-] (p51)--(6,4.5);
\node[] at (6,4) {$\vdots$};
\draw[-] (p51)--(7,4.5);
\draw[-] (p51)--(10,4.5);
\node[] at (10,4) {$\vdots$};

\end{tikzpicture}
\caption{Tree Structure}
\label{fig:tree_structure}
\end{figure}
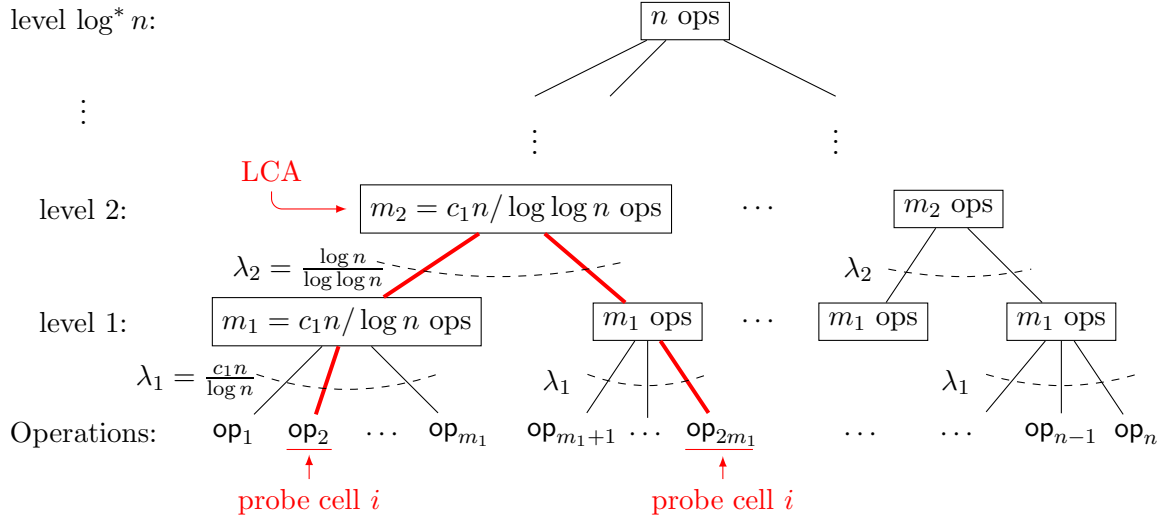

Now consider an access to a slot during a meta-operation (a leaf), and consider when it was accessed the \emph{previous time} (another leaf), we associate this access to the \emph{lowest common ancestor} (LCA) of these two leaves.
To lower bound the total number of accesses, it suffices to sum over all (internal) nodes the number of accesses associated to it.
On the other hand, whenever an access is associated to a node $u$, then with respect to $u$, this is a slot that is accessed in two different segments (two different children of $u$).
Hence, the above lemma lower bounds the number of accesses associated to each $u$ by the size of the subtree rooted at $u$.
Thus, the sum of every level is $\Omega(n)$, and the total sum over all internal nodes is $\Omega(n\log^* n)$.
This way of counting the number of accesses has been used in many cell-probe lower bounds~\cite{PD04b,PD06,CJ11,CJS15,CJS16,Yu16,WY16,AWY18,LN18,BHN19,JLN19,LMWY20}, although the nodes in the tree are usually set to the same degree, in which case, one obtains a logarithmic lower bound.

\subsection{Number of Swaps per Node}
Next, let us prove the above lemma.
For simplicity, we first focus on $\ell=1$, i.e., $m=c_1\cdot n/\log n$ and ``segments'' consisting of one meta-operation ($\lambda=m$).
To begin with, let us see what would go wrong if there was \emph{zero} slot accessed during multiple meta-operations, then we will extend this proof-by-contradiction to $\Omega(m)$ slots.

Intuitively, we are going to show that the data structure stores too much information about the \emph{order} of insertions (given the deletions) as auxiliary information, which corresponds to storing the new keys and locations in the upper bound.
Fix a sequence of operations sampled from the distribution, and fix an index $k$. We will analyze the $(k+1)$-th to the $(k+m)$-th meta-operation.

\paragraph{The outer game.}
To formalize this intuition, let us consider the following communication game between two players Alice and Bob, which we call the \emph{outer game}.
\begin{itemize}
	\item Both players get the memory state $\stabegin$ after the $k$-th meta-operation, of $\log\binom{U}{n}+O(n)$ bits.
	It determines the set of $n$ initial keys $K$ before the $m$ meta-operations, as well as the slot allocation.
	The players also both know the keys that are deleted and the order of the deletions $\dels$, and the (unordered) \emph{set} of all inserted keys $\insaset$.
	\item Alice further knows the order of insertions $\insas$, a permutation of $\insaset$.
	\item Alice sends one message to Bob, and their goal is for Bob to recover the permutation $\insas$.
\end{itemize}

\paragraph{A communication protocol.}
Next, let us consider a protocol for the outer game.
Alice's (only) message of this protocol will be an encoding of the memory state after the $m$ meta-operations $\staend$.
Since Bob knows the initial set $\keyset$ and the set of deleted keys and inserted keys, he also knows the set of keys $\keyend$ after the $m$ meta-operations.
On the other hand, $\staend$ also determines $\keyend$ -- the entropy of $\staend$ \emph{conditioned on} $\keyend$ is at most $O(n)$, since
\begin{align*}
	H(\staend\mid\keyend)&=H(\staend,\keyend)-H(\keyend)=H(\staend)-H(\keyend) \\&\leq \left(\log\binom{U}{n}+O(n)\right)-\log\binom{U}{n}=O(n).
\end{align*}
Thus, by encoding $\staend$ optimally \emph{conditioned on} $\keyend$, Alice can send $O(n)$ bits in expectation so that Bob is able to recover $\staend$ from the message.

Now we claim that Bob can fully recover the permutation $\insas$ if every slot is accessed in at most one meta-operation.
Let us consider the first insertion.
Bob knows the first deleted key $\deli[1]$, but only knows that the first inserted key $\insai[1]$ is among $\insaset$.
Without loss of generality, we may assume that the data structure always first puts $\insai[1]$ in the slot where $\deli[1]$ was, then immediately applies an arbitrary permutation over the slots.
Recall that any permutation can be decomposed into a collection of disjoint cycles: In particular, suppose $\deli[1]$ (and $\insai[1]$ temporarily) was in slot $s_1$, then there exist slots $s_2,\ldots,s_c$ for some $c\geq 1$ such that the key in $s_1$ is moved to $s_2$, the key in $s_2$ is moved to $s_3$, and finally, the key in $s_c$ is moved to $s_1$.
Now observe that all slots $s_1,\ldots,s_c$ are accessed during this meta-operation. By our assumption that every slot can be accessed in at most one meta-operation, these keys must remain in the same slots in the final state $\staend$.
Next, since Bob knows $\deli[1]$, he knows $s_1$, then he can examine which key $x$ is in slot $s_1$ in $\staend$.
If $x$ is \emph{not} in $\keyset$, then he knows immediately that it must be $\insai[1]$, as this slot is not accessed afterwards.
Otherwise, $x$ must be the key that was initially in $s_c$ and was moved to $s_1$.
Thus, Bob learns $s_c$ by examining the location of $x$ in initial memory state $\stabegin$.
Then Bob can further examine which key is in $s_c$ in $\staend$, and it must be the key that was initially in $s_{c-1}$.
Bob can trace this cycle by alternatingly examining the keys in the slots in $\stabegin$ and $\staend$.
Eventually, he is able to identify a key that is not in $\keyset$, which must be $\insai[1]$.

After knowing $\insai[1]$, Bob simply simulates the data structure for the first meta-operation by deleting $\deli[1]$ and inserting $\insai[1]$, then proceeds to the second meta-operation.
Note that this process does not require any communication.
Finally, Bob is able to recover the order of all insertions after Alice sends the single message consisting of an encoding of $\staend$.
The message has expected length $O(n)$, but the order of insertions has entropy $\Theta(m\log m)$, which yields a contradiction when $m=c_1n/\log n$ for sufficiently large constant $c_1$.

\paragraph{Extending to $\Omega(m)$ cost and level $\ell>1$.}
In general, if there are at most $0.01m$ times that a slot is accessed more than once, then except for at most $0.01m$ meta-operations, all slots accessed in other meta-operations are not accessed later.
We then simply apply the above argument to iteratively figure out the remaining $0.99m$ insertions. For the other $0.01m$, Alice can simply send Bob the inserted keys within $\insaset$ using $\log m$ bits each (and $O(m)$ bits indicating which rounds Bob needs to run the above algorithm on his own).
Alice's message has total length $0.01m\log m+O(n)$, which again yields a contradiction.

For levels $\ell>1$, the players' goal becomes for Bob to figure out that, for each insertion in $\insaset$, which \emph{segment} it belongs to (we give the order \emph{within each segment} to both players).
In this case, Bob needs to learn $m\log \lambda\approx c_1 n$ bits of information.
It turns out that the same protocol works, and in each segment, by applying the same strategy more carefully, Bob can again find the inserted keys by following the cycles of the permutations.
Thus, the lemma holds, and the lower bound in the slot model follows.

\subsection{Cell-Probe Lower Bound}\label{sec:overview_cell_probe}
The lower bound in the cell-probe model is proved using the same framework.
Instead of the slots, we will work with \emph{memory cells}.
In general, a data structure does not necessarily have one part to store the key set and auxiliary information, and another part for the slots.
Moreover, (the ``quotient'' of) a key do not necessarily occupy a complete memory cell, and may be encoded arbitrarily.
We need a more careful argument in this case.

Fix a data structure in the cell-probe model with word-size $w=\Theta(\log U)$ that uses $\log \binom{U}{n}+O(n)$ bits of space.
We prove the following lemma.
\begin{lemma*}
	Let $2\leq \ell\leq \log^* n$.
	For any consecutive $m=c_1\cdot n/\log^{(\ell)} n$ meta-operations for a sufficiently large constant $c_1$, partitioned into $\lambda=\Theta(\log^{(\ell-1)} n/\log^{(\ell)} n)$ \emph{segments} of size $\Theta(n/\log^{(\ell-1)}n)$, there must be $\Omega(m)$ times that a memory cell is \emph{probed} during multiple \emph{segments}.
\end{lemma*}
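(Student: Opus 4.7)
The plan is to mimic the outer-game argument from the slot model. Fix $m$ meta-operations divided into $\lambda$ segments and consider the following communication game: Alice and Bob both learn $\stabegin$, the deletion sequence $\dels$, the set of inserted keys $\insaset$, and the order of insertions within each segment; Alice additionally learns which segment each inserted key belongs to. This segment-assignment carries $\Theta(m\log\lambda)=\Theta(c_1 n)$ bits of entropy. Alice sends one message; Bob must recover the assignment. A lower bound of $\Omega(m)$ cross-segment probes will follow by showing that, if this quantity were smaller, Alice's message could not meet the entropy requirement.

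Alice's protocol will have two components. First, she encodes $\staend$ conditioned on $\keyend$: since $H(\staend\mid\keyend)\le H(\staend)-H(\keyend)\le O(n)$ by the space bound together with the fact that $\keyend$ is a function of $\stabegin$, $\dels$, and $\insaset$ (all known to Bob), this piece has expected length $O(n)$ bits. Second, she enumerates every cell probed during multiple segments and, for each, sends its contents at the end of each segment in which it is touched; if $T$ is the total number of such ``crossing'' probes, this costs $O(Tw)=O(T\log U)$ bits. From this, Bob can reconstruct the memory state $\staof[i]$ at every segment boundary: cells probed in segment $i$ but not afterward take their final values from $\staend$, cells still probed later take their end-of-segment-$i$ values from the crossing-cell data, and cells not probed during segment $i$ or later are inherited from $\staof[i-1]$.

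With $\staof[i-1]$, $\staof[i]$, the in-segment deletions, and the in-segment insertion order in hand, Bob will reconstruct the inserted keys one segment at a time. This reconstruction step is the main obstacle: the slot-model proof identified new keys by walking a permutation cycle induced on slots, but in the cell-probe model there is no canonical ``physical location'' for a key. My plan is to set up an inner game per segment in which Bob simulates the data structure from $\staof[i-1]$, trying candidate insertions in the known within-segment order; consistency with $\staof[i]$, together with the listed set of segment-$i$ probed cells and their recorded contents, is intended to force a unique decoding because the data structure is deterministic once randomness is fixed. Handling the cases where two candidate insertion sequences agree on all non-crossing probed cells---the cell-probe analog of degenerate permutation cycles---will be the technical crux; I expect to absorb a small constant fraction of such ``hard'' meta-operations by having Alice name the inserted key directly within the slack of the $O(n)$ budget.

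Putting the pieces together, if $T\le\epsilon m$ for a sufficiently small constant $\epsilon$, the total message length is $O(n)+O(\epsilon m\log U)=O(n)$ bits, contradicting the $\Theta(c_1 n)$ entropy of the segment-assignment once $c_1$ is large enough. This gives the claimed $\Omega(m)$ lower bound on the number of probes made to cells touched in multiple segments, establishing the lemma.
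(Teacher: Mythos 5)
There is a genuine gap, in fact two.

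First, the accounting for the crossing cells does not close. You propose that Alice transmit, for every cell probed in multiple segments, its contents at each relevant segment boundary, at a cost of $O(T\log U)$ bits, and you conclude that for $T\le \epsilon m$ this is $O(n)$. But $m=c_1 n/\log^{(\ell)}n$, so $\epsilon m\log U=\epsilon c_1 n\,\log U/\log^{(\ell)}n$, which for every $\ell\ge 2$ is $\omega(n)$ and indeed $\omega(m\log\lambda)$ --- it dwarfs the $\Theta(c_1 n)$ entropy of the segment assignment you are trying to beat. With this protocol the contradiction only kicks in when $T=O(n/\log U)$, yielding a lower bound of $\Omega(n/\log n)=o(m)$ crossing probes, not $\Omega(m)$. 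The paper never sends the contents of crossing cells: instead Bob's per-segment test \emph{tolerates} up to $\gamma\widn/2$ cells whose simulated end-of-segment contents disagree with $\staend$, and a Markov-type count (Claim 4.3) shows that all but a $1/25$ fraction of segments have few enough crossing probes for the true insertion set to pass this relaxed test; the few bad segments fall back to the trivial encoding. This tolerance is what keeps the overhead at $R+O(\lambda)$ bits rather than $T\log U$.

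Second, and more fundamentally, the step you yourself flag as ``the technical crux'' --- showing that the simulation-plus-consistency test leaves Bob with essentially one viable candidate per segment, or at least few enough that Alice can cheaply disambiguate --- is precisely the paper's main new contribution, and it is not something that can be absorbed into ``a small constant fraction of hard meta-operations.'' A priori, exponentially many candidate sets $B$ could produce end states agreeing with $\staend$ on their probed cells. The paper handles this with the Inner Lemma: via a second encoding game that exploits the \emph{deletions} (two consistent runs with small key overlap would let one compress $K,A,B,D$ below their entropy, because the cells in $S_A\cap S_B$ carry the same contents in both runs yet must have absorbed the information freed by deleting $D$), it shows two random candidates with intersection below $\widn/2$ are consistent with probability $U^{-\gamma\widn}$, which survives a union bound over all $\binom{m}{\widn}$ candidates and shrinks the qualified set enough to save a constant fraction of $\log\binom{m}{\widn}$ per segment. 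Without a quantitative statement of this kind, your decoding step has no justification, and the slot-model cycle-tracing argument you are trying to emulate has no cell-probe analog (the paper's own Section 8 gives a three-cell XOR construction showing that consistency of states alone, without queries, does not pin down the insertions). So the skeleton of your outer game matches the paper, but the two load-bearing pieces --- the tolerant test replacing explicit transmission of crossing cells, and the Inner Lemma bounding the number of qualified candidates --- are missing.
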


By the same argument as in the slot model using trees, this lemma implies an $\Omega(\log^* n)$ cell-probe lower bound.
It turns out that in order to prove it, the only part that we need to ``upgrade'' from the slot model is the step in the protocol of the outer game, where we showed that Bob can figure out which segment contains each insertion based (essentially) only on the starting and ending memory states.
For the slot model, this step uses the fact that (the quotient of) the keys are atomic and can only be stored in slots and moved between the slots.
For general data structures, we will have to do it differently.

\paragraph{Recovering the segments.}
Now let us fix $\ell\geq 2$ (for technical reasons, the proof does not work for $\ell=1$).
We again consider the outer communication game, where
\begin{itemize}
 	\item both players know the initial memory state $\stabegin$, the (unordered) {set} of $m$ insertions $\insaset$, the (ordered) sequence of $m$ deletions $\dels$;
 	\item Alice further knows the order of insertions $\insas$; 
 	\item Bob only knows the order of insertions \emph{within each segment} (e.g., Bob knows for each segment, the first insertion is the $\pi(1)$-th lexicographically smallest inserted key in this segment, the second is the $\pi(2)$-th smallest inserted key, etc, but without knowing the actual set of inserted keys);
 	\item their goal is to let Bob learn the set of inserted keys $\insaseti$ in each segment $i\in[\lambda]$.
\end{itemize}

For simplicity, we again will first derive a contradiction when \emph{no} cell was probed in more than one segment.
Similar to the slot model, Alice can send Bob the final memory state $\staend$ using $O(n)$ bits.
Bob will then first try to identify $\insaseti[1]$, the insertions in the first segment.
To this end, Bob decodes $\staend$ from the message, then for all possible $(m/\lambda)$-sets of insertions $B\subset \insaset$, Bob pretends that $\insaseti[1]=B$, and simulates the data structure on $B$ together with the deletions in the first segment, which are known to Bob.

Now observe that a necessary condition for $\insaseti[1]=B$ is that the set of cells probed in this simulation have the same new cell-contents as in $\staend$. This is because we assumed that no cell is probed in more than one segment, thus their contents must not be updated later.
We call such a $B$ \emph{qualified}.
Note that Bob can check if a set $B$ is qualified, and clearly, the correct set of inserted keys must be qualified.
Moreover, the key property we will prove is that, in expectation, only very few $B$ are qualified.
Thus, it suffices for Alice to send few bits to tell Bob which $B$ is the correct $\insaseti[1]$ \emph{among the qualified sets}.

To prove this key property, consider the correct set of insertions $A=\insaseti[1]$ and another qualified set $B$.
Let $S_A$ [resp$.$ $S_B$] be the set of cells probed when $\insaseti[1]=A$ [resp$.$ $\insaseti[1]=B$], and let $C_A$ [resp$.$ $C_B$] be the memory state after processing the first segment.
Since $A$ and $B$ are both qualified, the contents of $S_A\cap S_B$ in $C_A$ and $C_B$ must be the same, because in particular, they are the same as in $\staend$.
For such $A$ and $B$, we say they are \emph{consistent}.
The following technical lemma says that two random subsets $A,B\subset \insaset$ of size $m/\lambda$ with intersection at most $m/2\lambda$ are consistent with very low probability.
\begin{lemma*}
	In expectation, two random $m/\lambda$-subsets $A$ and $B$ of $\insaset$ conditioned on $\left|A\cap B\right|\leq m/2\lambda$ are consistent with probability $U^{-\Omega(m/\lambda)}$.
\end{lemma*}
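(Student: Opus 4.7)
The plan is to prove the consistency bound by a compression argument: whenever $B$ is consistent with $A$, the identity of $B$ admits a description from $A$ (and from the common knowledge $\stabegin,\dels[1],\insaset$) that is shorter than its prior entropy by $\Omega((m/\lambda)\log U)$ bits, so the consistent $B$'s can form at most a $U^{-\Omega(m/\lambda)}$ fraction of all candidates.

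First I would fix $\stabegin$, the deletions $\dels[1]$ of segment~$1$, the set $\insaset$, and the value of the random $A$, and analyze the probability over $B$. The encoder and decoder can both compute $(S_A,C_A)$ deterministically by simulating segment~$1$ with insertions $A$ in the canonical order. If $B$ is consistent with $A$, then the resulting state $C_B$ has three structural features I would exploit: outside $S_B$ it equals $\stabegin$; on $S_A\cap S_B$ it equals $C_A$ (by consistency); only on $S_B\setminus S_A$ does it carry truly fresh content.

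Using these, the encoder transmits a length prefix, the $|S_A|$-bit indicator of $S_B\cap S_A$ inside $S_A$ (telling the decoder which cells of $S_A$ are reverted to $\stabegin$ in $C_B$), and the list of cells in $S_B\setminus S_A$ paired with their contents in $C_B$ at $\log M+w$ bits each. From the reconstructed $C_B$, the decoder recovers $B$ by simulating the dictionary's membership query on $C_B$ for every $x\in[U]$ and subtracting $K\setminus\dels[1]$ from the retrieved key set. Hence each consistent $B$ admits a description of length $O(\log M)+|S_A|+|S_B\setminus S_A|(\log M+w)$ given $(A,\stabegin)$.

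To convert this encoding into the desired probability bound, I would work inside the outer inductive proof. There we may assume (for contradiction) that the expected update time is $o(\log U)$, since otherwise the target $\Omega(k)$ lower bound is already in hand; by Markov, both $|S_A|$ and the typical $|S_B\setminus S_A|$ are then $o(m/\lambda)$, making the typical encoding length $o((m/\lambda)\log U)$. A uniform $m/\lambda$-subset of $[U]$ (the effective distribution of $B$, since $\insaset$ is itself a random subset of $[U]$) carries $\Theta((m/\lambda)\log U)$ bits of entropy, so only a $U^{-\Omega(m/\lambda)}$ fraction of $B$'s can admit such a short description, yielding the claim. The main obstacle is precisely the absence of any a priori bound on $|S_A|$ and $|S_B|$ in the lemma's standalone statement; the cleanest resolution, as sketched, is to situate the consistency lemma inside the outer induction, where an assumed upper bound on the update time is exactly what the top-level theorem is designed to contradict. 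A secondary, minor subtlety is prefix-freeness of the encoding, handled by the usual $O(\log M)$-bit length headers.
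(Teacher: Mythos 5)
There is a genuine gap, and it is fatal to the approach rather than a fixable detail. Your encoding of $B$ given $A$ costs $|S_A| + |S_B\setminus S_A|(\log M + w)$ bits, and the term $|S_B\setminus S_A|\cdot w$ cannot be made $o((m/\lambda)\log U)$. Your proposed fix --- assume update time $o(\log U)$ and conclude $|S_B\setminus S_A| = o(m/\lambda)$ by Markov --- is quantitatively wrong: expected update time $o(\log U)$ gives $\E[|S_B|] = o((m/\lambda)\log U)$ \emph{probes}, not $o(m/\lambda)$; you would need update time $o(1)$ for your claim, which cannot be assumed. Worse, no time bound can help: a data structure that writes each inserted key into one fresh cell already has $|S_B\setminus S_A|\approx m/\lambda$, so the content part of your description alone costs $\approx(m/\lambda)\log U \geq H(B)$, and there is zero compression. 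The deeper symptom is that your argument never uses the redundancy bound $R$ (nor the deletions $D$ or the stored set $K$), yet the lemma is simply false without the hypothesis relating $R$ to $(m/\lambda)\log U$: e.g., a high-redundancy open-addressing table that hashes each new key into its own cell makes $S_A\cap S_B$ consist only of the tombstoned cells of $D$, is consistent with probability close to $1$, and has $O(1)$ update time. Any proof that does not invoke the space constraint would "prove" the lemma for that data structure, so it must be wrong.

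The paper's proof works quite differently. It does not compress $B$ alone; it encodes the whole tuple $(K,A,B,D)$ via an inner communication game. Alice sends $B$ \emph{in full} plus a single mixed memory state $\stamix$ of $\log\binom{U}{n}+\Redun$ bits (equal to $\stabef$ on $\setb$ and to $\staa$ off $\setb$) plus a few low-order terms. From $\stamix(\setb)$ Bob re-simulates the segment with $B$ to recover $D$ and $\stab$; by consistency, $\staa$ and $\stab$ differ on $\seta\cap\setb$ in at most $\gamma m$ cells, so a short patch lets Bob reconstruct $\staa$ and hence read off $(K\setminus D)\cup A$, from which $A$ and $K$ are cheap. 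The $\Omega((m/\lambda)\log U)$ savings is exactly that $A$'s $\approx (m/\lambda)\log U$ bits of entropy are recovered from a state that was already "paid for" by $K$'s entropy --- and this double use is possible only because the small redundancy forces the cells holding $D$'s information to be rewritten inside $\seta\cap\setb$. That is where $R$ and the deletions enter, and it is the ingredient your proposal is missing.
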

This is a simplified version of \cref{lm:inner} to fit in the regime in this overview.
Note that our distribution guarantees that $\insaseti[1]$ is a random subset of $\insaset$ of size $m/\lambda$.

In particular, the above lemma implies that, on average, the sets that have a small intersection with the correct $\insaseti[1]$ are very unlikely to qualify.
On the other hand, there are only $\binom{m}{\leq m/2\lambda}\binom{m/\lambda}{\geq m/2\lambda}=O(\lambda)^{m/2\lambda}$ sets that have intersection size at least $m/2\lambda$ with $A$, while there are $\approx O(\lambda)^{m/\lambda}$ many $m/\lambda$-subsets of $\insaset$.
Thus, by encoding the correct $\insaseti[1]$ among the qualified sets, Alice can save a factor of two in the message length.
This constant-factor saving in the communication cost turns out to already be sufficient for a contradiction.

\bigskip

The remaining step is to prove the above lemma, showing that two random sets conditioned on their intersection size being small are very unlikely to be consistent.
For simplicity, let us consider two random sets $A$ and $B$ conditioned on $A\cap B=\emptyset$.
If the contents of $S_A\cap S_B$ are the same after inserting either $A$ or $B$, then intuitively, these cells should not be storing much useful information about $A$ or $B$ -- we are wasting space proportional to $\left|S_A\cap S_B\right|\cdot w$.
On the other hand, the \emph{same} set $D$ is deleted in the segment in both cases, one must probe the cells storing the information about $D$.
This is because intuitively, the redundancy is only $O(n)$ bits, which is much smaller than the entropy of $D$ (or $A$, $B$) when $\ell\geq 2$;
the data structure must ``empty most of the space storing $D$'' to make room for $A$ or $B$.
Thus, the set $S_A\cap S_B$ should be sufficiently large to contain $D$.
Combining the two, we are wasting $\Omega(H(D))=\Omega(m/\lambda\cdot \log U)\gg \Omega(n)$ bits.
This leads to a contradiction (this is also where we need $\ell\geq 2$).

We formally prove the lemma by studying another communication game, which we call the \emph{inner game}.
We show that whenever two sets $A$ and $B$ are consistent, it is possible to encode sets $K,A,B,D$ (recall that $K$ is the initial set of keys) using roughly $H(K,A,B,D)-\Omega(m/\lambda \cdot \log U)$ bits.
This implies an upper bound on the possible number of such tuples $(K,A,B,D)$, i.e., only $U^{-\Omega(m/\lambda)}$-fraction of $A$ and $B$ are consistent on average.
See~\cref{sec:inner_game_protocol} for the protocol.

\bigskip

For the general case where $A$ and $B$ can have a small intersection, and at most $0.01m$ cells can be probed in more than one segment, we generalize the definition of consistency to require that the number of different cells in $S_A\cap S_B$ is bounded by $O(m/\lambda)$.
It turns out that the above argument still goes through, and Bob can recover $\insaseti[1]$ after Alice sends few extra bits.
Then, Bob simulates the operations in the first segment, and the players apply the same strategy for Bob to iteratively recover $\insaseti[2],\ldots,\insaseti[\lambda]$.
Bob learns the partition of $\insaset$ into $\insaseti[1],\ldots,\insaseti[\lambda]$, which has $\approx m\log\lambda$ bits of information, while Alice's message turns out to have only $\approx (1-\Omega(1))m\log\lambda+O(n)$ bits, yielding a contradiction.
See~\cref{sec:outer_game_protocol} and~\cref{sec:outer_game_cost} for the detailed analysis.

\section{Hard Distribution and Proof of Main Theorem}
\label{sec:hard_instance}

In the following sections, we present the formal proof of our lower bound for the dynamic succinct dictionary problem. 
Let $\sta$ be a {dynamic dictionary data structure} that maintains a set $T$ of at most $n$ distinct keys in range $[U]$, i.e., $\sta$ supports the following operations:
\begin{itemize}
\item Initialize $T$ to $\emptyset$.
\item Insert a key $x$, assuming $|T|<n$ and $x\notin T$.
\item Delete a key $x$, assuming $x\in T$.
\item Query whether $x$ is in $T$.
\end{itemize}
We will focus on the following regime:
\begin{itemize}
\item 
We assume $U=n^{1+\Theta(1)}$. Without loss of generality, we set the word size $w=\log U$ so that a key can be stored in one word.
\item The memory of $\sta$ is fixed to be $\log\binom{U}{n}+\Redun$ bits, where the term $\log\binom{U}{n}$ is the optimal space required to store $n$ keys, and $\Redun$ is the redundancy. Moreover, by the convention of prior work, we say the data structure incurs $\redun = \frac{\Redun}{n}$ \emph{wasted bits per key}.
\end{itemize}

The running time of the dictionary is measured by the expected amortized number of cell-probes per operation, while the redundancy is measured by $r$, the number of wasted bits per key.
As mentioned in \cref{sec:overview}, we will consider a sequence of $n$ insertions followed by $n$ \emph{meta-operations}.
Each meta-operation will consist of a query (for technical reasons), a deletion, followed by an insertion.
A formal description of the hard distribution is presented in Distribution~\ref{alg_hard_dist}.

\begin{algorithm}[H]
  \captionof{Distribution}{Hard Distribution}
  \label{alg_hard_dist}
  \DontPrintSemicolon
  \SetKwFunction{Query}{Query}
  \SetKwFunction{Insert}{Insert}
  \SetKwFunction{Delete}{Delete}
  Initialize an empty dictionary with capacity $n$ and key-universe $[U]$\;
  $\keyset \gets$ uniform random $n$-element subset of $[U]$\;
  Insert keys in $\keyset$ into the dictionary one by one, using $n$ insertions\;
  \For{$i = 1$ to $n$} {
    $\deli[i] \gets$ a uniform random key in $\keyset$ that has not been removed\label{step_loop_begin}\;
    \Query{$\deli[i]$}\;
    \Delete{$\deli[i]$} from the dictionary\;
    $\insai[i] \gets$ a uniform random key in $[U]$ which is neither in $\keyset$ nor in the current dictionary\;
    \Insert{$\insai[i]$} to the dictionary\label{step_loop_end}\;
  }
\end{algorithm}

The following theorem, a rephrase of \cref{thm_main} with respect to \cref{alg_hard_dist}, is the main result of this section, proving a space-time lower bound for dynamic dictionaries.

\begin{theorem}
  \label{thm:main}
  For sufficiently large $n$ and any positive integer $k \le \log^{*} n$, if a dictionary incurs $\redun \le \log^{(k)} n$ \emph{wasted bits per key}, then it must perform $\Omega(nk)$ cell-probes to process a random sequence of operations sampled from \cref{alg_hard_dist} in expectation, i.e., the expected amortized time to process one operation is $\Omega(k)$.
\end{theorem}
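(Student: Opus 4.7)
The plan is to follow the information-transfer-style tree framework outlined in \cref{sec:overview}. First, I build a rooted tree on top of the $n$ meta-operations of depth $k$: each leaf corresponds to one meta-operation in temporal order, and at each level $\ell\in\{2,\dots,k\}$ the tree groups $\br[\ell]=\Theta(\log^{(\ell-1)}n/\log^{(\ell)}n)$ consecutive level-$(\ell-1)$ nodes under a common parent, using the convention $\log^{(0)}n:=n$; level $1$ simply collects the matching number of individual meta-operations. A level-$\ell$ node thus covers $m_\ell=\Theta(n/\log^{(\ell)}n)$ consecutive meta-operations. For every cell probe performed, I look at the \emph{preceding} probe of the same cell (if any) and associate the current probe with the lowest common ancestor of the two corresponding leaves. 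The total number of probes is then at least the sum, over all internal nodes $u$, of the number of probes associated to $u$, and a probe assigned to $u$ corresponds to a cell probed in at least two distinct child-subtrees of $u$.

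The bulk of the proof is a single key lemma: for every internal node $u$ at level $\ell\ge 2$, in expectation at least $\Omega(m_\ell)$ probes are associated with $u$. I plan to prove this by contradiction through the outer communication game. Alice and Bob share $\stabegin$ (the memory state right before $u$'s meta-operations), the deletion sequence $\dels$, the unordered set $\insaset$ of keys inserted under $u$, and the internal order of insertions within each of the $\br[\ell]$ segments of $u$; Alice additionally knows the assignment of insertions to segments, which carries roughly $m_\ell\log\br[\ell]$ bits of entropy. In her protocol, Alice sends (i) an $O(n)$-bit encoding of $\staend$ conditioned on $\keyend$, which is possible because the conditional entropy of $\staend$ given $\keyend$ is at most $\Redun=O(n\log^{(k)}n)=O(n)$, (ii) the identities and contents of the at most $0.01 m_\ell$ cells probed in multiple segments under $u$ (assumed for contradiction), and (iii) for each segment an index of the correct insertion-set among the \emph{qualified} ones. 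A subset $B\subseteq\insaset$ of size $m_\ell/\br[\ell]$ is qualified for a segment if simulating the dictionary on the known deletions and $B$ yields cell contents consistent with $\staend$ on every cell outside the flagged multi-probed set. Bob then recovers each segment's insertion set iteratively by enumeration and simulation, using Alice's index within the qualified list at each step.

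The main obstacle is bounding the expected number of qualified sets, which is where the inner game lemma (\cref{lm:inner}) comes in: two random $(m_\ell/\br[\ell])$-subsets of $\insaset$ whose intersection has size at most $m_\ell/(2\br[\ell])$ are consistent with probability at most $U^{-\Omega(m_\ell/\br[\ell])}$. Since the number of size-$m_\ell/\br[\ell]$ subsets that share \emph{more} than $m_\ell/(2\br[\ell])$ elements with the correct segment is only about $\br[\ell]^{m_\ell/(2\br[\ell])}$ (up to lower-order factors), encoding the correct segment among the qualified ones costs, in expectation, at most roughly $\tfrac12\cdot(m_\ell/\br[\ell])\log\br[\ell]$ bits per segment. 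Summed over the $\br[\ell]$ segments, Alice's total expected message length becomes $(1-\Omega(1))\cdot m_\ell\log\br[\ell]+O(n)$, which for $m_\ell=c_1n/\log^{(\ell)}n$ with sufficiently large $c_1$ is strictly less than the $\approx m_\ell\log\br[\ell]$ bits of information Bob must learn, yielding the required contradiction. Finally, summing the lemma's $\Omega(m_\ell)$ bound over all level-$\ell$ nodes contributes $\Omega(n)$ probes at each of the $k-1$ levels $\ell=2,\dots,k$, while the trivial $\Omega(n)$ bound handles the $k=1$ case, giving the claimed $\Omega(nk)$ total. The heart and most delicate step is \cref{lm:inner}: it is there that the succinct budget $\Redun=O(n\log^{(k)}n)$ is turned into a quantitative communication saving, by arguing that the cells in the intersection $S_A\cap S_B$ must simultaneously ``forget'' $A$ and $B$ while the common deletion set $D$ still forces a large number of probes.
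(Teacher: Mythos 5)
Your overall architecture matches the paper's: the LCA-based charging over a tree of super-constant, level-dependent branching factors, the outer communication game in which Bob recovers the partition of $\insaset$ into segments from $\staend$ plus short per-segment hints, and the inner lemma bounding the probability that two far-apart candidate sets are consistent. However, there are two genuine gaps.

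First, and most importantly, your key lemma is stated \emph{unconditionally}: ``for every internal node $u$ at level $\ell\ge 2$, in expectation at least $\Omega(m_\ell)$ probes are associated with $u$.'' The communication argument cannot deliver this. The inner lemma (\cref{lm:inner}) carries the hypothesis $\E[|\seta|]\le \widn t$, because Alice's inner-game encoding must pay for the Bloomier filter on $\seta\cup\setb$, for $\seta\cap\setb$ given $\setb$, and for the addresses of the differing cells — all of which scale with $|\seta|,|\setb|$. If a segment probes too many cells, the encoding no longer beats the entropy and the $U^{-\Omega(\widn)}$ consistency bound fails, so the count of qualified sets is uncontrolled. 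The paper's \cref{lm:outer} is therefore a \emph{disjunction}: either $\E[\costu]=\Omega(\gamma\widl)$ or $\E[\probeu]=\Omega(\widl t)$ with $t=\Theta(\log U)$, and the final theorem does a case split over levels (if any level has many nodes in the second case, the raw probe count is already $\Omega(n\log U)\ge\Omega(nk)$). Your proposal never ensures the hypothesis of \cref{lm:inner} and has no analogue of this dichotomy; as written, the contradiction step is not valid for data structures that probe many cells per segment.

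Second, your parameters are off for general $k$. You take $\widl=\Theta(n/\log^{(\lv)}n)$ and assert that $\Redun=O(n\log^{(k)}n)=O(n)$, so that $\staend$ costs $O(n)$ bits; but $n\log^{(k)}n$ is not $O(n)$ (e.g.\ $k=2$ gives $n\log\log n$). The contradiction needs $\widl\log\br\gg\Redun$, which with your widths gives only $\Theta(n)\ge\Redun$ — false whenever $\log^{(k)}n=\omega(1)$. The paper sets $\widl=cn\log^{(k)}n/\log^{(\lv)}n$ precisely so that $\widl\log\br\approx cn\log^{(k)}n\ge 100\Redun$, and correspondingly caps the tree at height $h=\Theta(k)$. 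A lesser issue: you handle multiply-probed cells by having Alice flag a single global set of $0.01\widl$ such cells (sending their ``contents'' at multiple times would itself cost $\Omega(\widl\log U)\gg\widl\log\br$ bits), and excluding them from the test only guarantees that two qualified sets differ on at most $0.01\widl$ cells of $\seta\cap\setb$ — far more than the $\gamma\widn$ tolerance the inner lemma can absorb. The paper instead uses a per-segment indicator $\indbad$ declaring a segment bad when it probes more than $\gamma\widn/2$ bad cells, pays full price only on bad segments, and shows by a counting argument that most segments are good.
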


We first fix a data structure using $r\leq \log^{(k)} n$ wasted bits per key.
By Yao's Minimax Principle, we may assume without loss of generality that it is deterministic.

Recall that we denote Step~\ref{step_loop_begin}-\ref{step_loop_end} in the hard distribution by a meta-operation.
To prove the lower bound, we will build a tree over the sequence of $n$ meta-operations.
Each meta-operation is a leaf of the tree, which is also called a \emph{level-0} node.
The tree is then constructed bottom-up.
After constructing level $\lv-1$, we group every $\br$ \emph{consecutive} level-$(\lv - 1)$ nodes, and assign a level-$\lv$ node to be the parent of every group. 
Thus, a level-$\lv$ node $u$ represents an interval of consecutive meta-operations, which is the union of the intervals of $u$'s children.

Let $\widl$ denote the number of consecutive meta-operations a level-$\lv$ node represents.
Then, we have $\widl = \widn \cdot \br$. We will set $\widl \defeq cn \log^{(k)} n / \log^{(\lv)} n$ for levels $\lv \geq 1$, where $c = 10^6$ is a large constant, and set $\widof[0] = 1$ for the leaf nodes. For simplicity, we assume $\widn \mid \widl$.
Therefore, for $\lv > 1$, we have $\br = \log^{(\lv - 1)} n / \log^{(\lv)} n$.

The process of grouping level-$(\lv - 1)$ nodes to form level-$\lv$ nodes terminates at level $h$ when $cn \log^{(k)} n / \log^{(h)} n \ge n$, i.e., all meta-operations are grouped into the same level-$h$ node. Formally, $h = \min \{ \lv \in \mathbb{N} : \log^{(\lv)} n \le c \log^{(k)} n \}$. Here we let $\widof[h] = n$, and the only level-$h$ node is the root of the tree. 
Clearly, the height of the tree $h$ is $\Omega(k)$.
See \cref{fig:tree_structure} in \cref{sec:overview}.

\bigskip

Now let us return to the data structure.
In the cell-probe model, its memory contains $N$ cells, where $N = \left(\log\binom{U}{n} + \Redun\right) \Big/ w = O(n)$. 
The addresses of these cells are from $1$ to $N$.
We will lower bound the total number of cell-probes in two ways.
To this end, we introduce two quantities for each node $u$ in the tree, $\costu$ and $\probeu$, to measure the number of probes.

For $\costu$, if cell $i$ is probed in meta-operations $t_1$ and $t_2$ ($t_1 < t_2$) but not between them, we \emph{assign} the probe of $i$ at $t_2$ to node $u$, the \emph{lowest common ancestor (LCA)} of $t_1$ and $t_2$ in the tree, and let $\costu$ be the total number of probes assigned to $u$.
Clearly, since each cell-probe is assigned at most once, the total number of cell-probes during the whole operation sequence is at least the sum of $\costu$ over all the nodes.

For $\probeu$, we simply define it as the number of probes incurred while processing the operations in $u$'s corresponding interval. For a single level $\lv$, the sum of $\probeu$ over all level-$\lv$ nodes $u$ exactly equals the total time cost in the whole process (unlike $\costu$, we cannot use the sum of $\probeu$ over all nodes as a lower bound, because each cell-probe will be counted $h+1$ times).

The lemma below gives a lower bound on the expectations of $\costu$ and $\probeu$.

\begin{lemma}[Outer Lemma]
  \label{lm:outer}
  Let $t$ be a parameter. Suppose $U=n^{1+\alpha}$ for a constant $\alpha$, and let $\gamma=\frac{\alpha}{6(1+\alpha)}$.
  For a level-$\lv$ node $u$, if
  \begin{itemize}
  \item $2^{64}\le\br\le \frac{\alpha}{12}\log n$,
  \item $\widl\log \br\ge 100 \Redun$,
  \item $\frac{1}{2}\gamma\widl \log U \ge \Redun + 5\widl + 2(\beta + 1)\widl t$,
  \item $\widl\ge n^{1 - \gamma / 2}$,
  \end{itemize}
  where $\beta$ is a global fixed constant to be determined later, then \emph{at least one} of the two proposition holds:
  \begin{itemize}
  \item Prop 1. $\E[\costu]\ge \frac{\gamma}{100}\widl$;
  \item Prop 2. $\E[\probeu]\ge \frac{1}{16}\widl t$.
  \end{itemize}
\end{lemma}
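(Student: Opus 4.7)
The plan is to prove the Outer Lemma by contradiction. Suppose both propositions fail, i.e.\ $\E[\costu]<\frac{\gamma}{100}\widl$ and $\E[\probeu]<\frac{1}{16}\widl t$; I will derive a contradiction via the outer communication game sketched in the overview, using a forthcoming Inner Lemma on the consistency of random subsets.

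Fix the state $\stabegin$ at the start of $u$'s interval. In the outer game, Alice sees the full sequence of meta-operations, while Bob sees $\stabegin$, the ordered deletions $\dels$, the unordered insertion set $\insaset$, and the insertion order \emph{within} each segment (the permutations $\insorderi[1],\ldots,\insorderi[\br]$). Bob's task is to recover the partition $\insaset=\insaseti[1]\sqcup\cdots\sqcup\insaseti[\br]$, which carries $\approx \widl\log\br$ bits of entropy given Bob's view. Alice's protocol has three parts: (i) send $\staend$ encoded conditionally on $\keyend$, costing only $\Redun+O(\widl)$ bits on average, since the conditional entropy of $\staend$ given $\keyend$ is bounded by the redundancy; (ii) mark the \emph{bad} segments --- those with more than $2t\widl/\br$ total probes or more than $c_2\widl/\br$ probes that are reprobed in a strictly later segment --- and list the inserted keys in each bad segment directly by rank within $\insaset$; (iii) for each \emph{good} segment $i$, let Bob iteratively recover $\insaseti[i]$ as follows: having already computed $\staof[i-1]$ by simulating segments $1,\ldots,i-1$, he enumerates all candidate $(\widl/\br)$-subsets $B$ of the remaining keys, simulates segment $i$ on $B$ in the prescribed order $\insorderi[i]$, and marks $B$ as \emph{qualified} if its simulation agrees with $\staend$ on all but at most $c_2\widl/\br$ of its probed cells. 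The correct set always qualifies when the segment is good, so Alice needs only $\log(\#\text{qualified})$ bits to pin it down.

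Two inputs drive the bookkeeping. Markov's inequality applied to $\costu$ and $\probeu$ ensures that, for an appropriate choice of $c_2$, only an $O(\gamma)$-fraction of segments are bad in expectation, so (ii) costs at most a small fraction of $\widl\log\br$ bits. The Inner Lemma provides a much sharper bound on (iii): conditioned on the true set $A=\insaseti[i]$, the expected number of qualified $B$ with $|A\cap B|\le\widl/(2\br)$ is at most $\binom{\widl-i\widl/\br}{\widl/\br}\cdot U^{-\Omega(\widl/\br)}$, while subsets with larger intersection are so few that they contribute negligibly. Taking logs and applying Jensen's inequality via a shared random dictionary, each good segment contributes at most $(\widl/\br)\log\br-\Omega(\gamma(\widl/\br)\log U)$ bits to Alice's message. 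Summing over all segments and combining with (i) and (ii), the expected total message length is at most $\widl\log\br-\Omega(\gamma\widl\log U)+\Redun+O(\widl t)$, which the hypotheses $\widl\log\br\ge100\Redun$ and $\frac{1}{2}\gamma\widl\log U\ge\Redun+5\widl+2(\beta+1)\widl t$ together force to be strictly below $\widl\log\br$, contradicting the source-coding lower bound on Bob's decoding.

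The main obstacle I anticipate is calibrating the threshold $c_2\widl/\br$ so that a single constant $c_2$ simultaneously (a) admits the correct insertion set as qualified for every good segment --- which requires bounding the cross-segment reprobes carefully and conditionally on the history, using only the Prop 1 failure --- and (b) admits a clean per-segment application of the Inner Lemma, whose ``consistency'' notion must match the $c_2\widl/\br$ mismatch allowance. A related subtlety is that the Inner Lemma bounds the \emph{expected} number of qualified subsets rather than a worst-case count, so I will need to convert the expectation into a code length by a shared random dictionary encoding and pull the expectation outside the logarithm using Jensen's inequality.
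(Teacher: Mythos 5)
Your proposal follows essentially the same route as the paper's proof: a contradiction via the outer game, a cheap conditional encoding of $\staend$ costing about $\Redun$ bits, an indicator separating bad segments (controlled by Markov's inequality applied to $\costu$ and $\probeu$) from good ones, a qualification test against $\staend$ with an $O(\widn)$-cell mismatch allowance, and the Inner Lemma showing that in expectation only sets with large intersection with the true $\insaseti$ qualify. One bookkeeping slip worth fixing: the per-good-segment saving is $\Theta(\widn\log\br)$ (from restricting the index to the roughly $\br^{\widn/2}$ candidates close to the true set), not $\Omega(\gamma\widn\log U)$ --- the latter exceeds the entire $\widn\log\br$ budget for that segment --- and the comparison of $\gamma\widl\log U$ against $\Redun+\widl t$ belongs inside the Inner Lemma's hypothesis rather than in the outer message tally; with that correction the final contradiction comes from $\widl\log\br\ge 100\Redun$, exactly as you intend.
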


Note that although we have set the tree parameters $\br[\lv']$, $\widof[\lv']$ above for all levels $\lv'$, \cref{lm:outer} only focuses on a single level $\lv$ and has no constraint for the tree parameters for other levels $\lv' \ne \lv$. 
Hence, even if we use another set of tree parameters, \cref{lm:outer} still applies as long as the required conditions are met (which will be the case in \cref{sec:extend}).

We will prove \Cref{lm:outer} in the next section. Now we use it to prove \Cref{thm:main}.

\begin{proofof}{\Cref{thm:main}}
The main idea is to apply \cref{lm:outer} on all nodes in levels $2 \le \lv < h$, with parameter $t=\frac{\gamma}{8(\beta + 1)}\log U$. After applying the lemma, consider two cases:
\begin{itemize}
\item If there is a level $\lv$ ($2\le \lv <h$) such that at least half of the nodes in this level satisfy \emph{Prop 2}, the expected time can be bounded by
  \[\sum_{u \textup{ in level } \lv} \E[\probeu] \ge \frac{1}{2} \cdot \frac{n}{\widl} \cdot \frac{1}{16}\widl t = \frac{1}{32} nt = \frac{\gamma}{256(\beta + 1)} \cdot n\log U.\]
  As the coefficient $\frac{\gamma}{256(\beta + 1)}$ is a constant and $\log U \ge \log n\ge \log^{*} n \ge k$, the time per operation is at least $\Omega(k)$.
\item Otherwise, it means for every level $\lv$ ($2\le \lv <h$), at least half of the nodes in it satisfy \emph{Prop 1}. Taking summation of the lower bound of $\E[\costu]$ for each node $u$ in these levels, we know that the total time cost is at least $\sum_{\lv = 2}^{h - 1} {\frac{\gamma}{100} \widl} \cdot \frac{1}{2} \cdot \frac{n}{\widl} = \frac{\gamma}{200} n (h - 2)$ in expectation, which means the expected time cost per operation is at least $\Omega(h) = \Omega(k)$, as desired.
\end{itemize}

It remains to verify the premises of \cref{lm:outer}. Notice that we require $2 \le \lv < h$, i.e., when we apply \cref{lm:outer} on some node $u$, $u$ and its children are not root or leaf nodes. Thus we always have $\widl = cn \log^{(k)} n / \log^{(\lv)} n$ and $\br = \log^{(\lv - 1)} n / \log^{(\lv)} n$.
\begin{itemize}
\item $\br\ge 2^{64}$. According to the definition of $h$, for every $2 \le \lv < h$, we have $\log^{(\lv)} n > c \log^{(k)} n \ge c = 10^6$, so $\br = \log^{(\lv - 1)}n / \log ^{(\lv)} n = 2^{\log^{(\lv)}n}/\log^{(\lv)}n \ge 2^{c} / c \gg 2^{64}$.
\item $\br \le \frac{\alpha}{12}\log n$. As $n$ is sufficiently large, we only need to show $\br = o(\log n)$, that is, $\log^{(\lv - 1)}n / \log^{(\lv)}n = o(\log n)$. All levels except for level 1 satisfy this condition.
\item $\widl \log \br \ge 100 \Redun$. We can calculate
  \begin{align*}
    \widl \log \br
    &= \widl (\log^{(\lv)}n - \log^{(\lv + 1)}n)
      \ge \frac{1}{2} \widl\log^{(\lv)}n
      = \frac{1}{2} cn\log^{(k)}n
      \gg 100n\log^{(k)}n, \\
    100\Redun &= 100n\redun
                \le 100n \log^{(k)}n,
  \end{align*}
  where the first inequality holds because $\log^{(\lv)} n \ge 2 \log^{(\lv + 1)} n$ as $\log^{(\lv)} n \ge c$. So we have $\widl \log \br \ge 100\Redun$ for all $2 \le \lv < h$.
\item $\frac{1}{2} \gamma \widl \log U \ge \Redun + 5\widl + 2(\beta + 1) \widl t$. Substituting $t=\frac{\gamma}{8(\beta + 1)} \log U$ into this inequality, we only need to prove $\frac{1}{4} \gamma \widl \log U \ge \Redun + 5\widl$. When $n$ is sufficiently large, we have
  \begin{align*}
    \gamma\widl\log U
    \ge \gamma \frac{cn\log^{(k)}n}{\log \log n}\log U
    \ge \gamma \frac{\log n}{\log \log n} \cdot cn \log^{(k)} n
    \ge 24 c n \log^{(k)} n \ge 4(\Redun + 5\widl).
  \end{align*}
\item $\widl\ge n^{1-\gamma/2}$. As $\widl\ge cn\cdot \frac{\log^{(k)}n}{\log \log n}$, this inequality naturally holds with sufficiently large $n$.
\end{itemize}
This proves the theorem.
\end{proofof}

\section{Outer Lemma}
\label{sec:outer}

Suppose $u$ is a level-$\lv$ node in the tree structure. It also represents an interval of $\widl$ consecutive meta-operations. In this section, we focus on the operations in $u$ and prove \cref{lm:outer}, which asserts that either $\E[\costu]\ge \frac{\gamma}{100}\widl$ or $\E[\probeu]\ge \frac{1}{16}\widl t$.

\subsection{Proof Overview}

We index the meta-operations in $u$ by $\{1, \ldots, \widl\}$. Let $\insai$, $\deli$ represent the keys to be inserted and deleted in the $i$-th meta-operation in $u$, respectively. Also define sequences $\insas \defeq (\insai[1], \ldots, \insai[\widl])$, $\dels \defeq (\deli[1], \ldots, \deli[\widl])$. These meta-operations are further divided into $\br$ subintervals of length $\widn$, each corresponding to a child node of $u$. We call each subinterval a \emph{segment}. 
For each $i \in [\lambda_\l]$, the keys inserted and deleted in the $i$-th segment are denoted by $\insasi \defeq \bigbk{\insai[(i - 1) \widn + 1], \ldots, \insai[i \widn]}$ and $\delsi \defeq \bigbk{\deli[(i - 1) \widn + 1], \ldots, \deli[i \widn]}$. We further define the following quantities of $\insas$:
\begin{itemize}
\item $\insaset \defeq \BK{\insai[1], \ldots, \insai[\widl]}$ is the unordered version of $\insas$, and $\insaseti[i] \defeq \bigBK{\insai[(i - 1) \widn + 1], \ldots, \insai[i \widn]}$ is the unordered version of $\insasi[i]$.
\item Inside the $i$-th segment, we use a permutation $\insorderi$ to represent the order of key insertions in $\insaseti$. Formally, $\insorderi$ is a permutation over $[\widn]$ such that the $j$-th key inserted in the $i$-th segment is the \emph{$\insorderij{j}$-th largest} element in $\insaseti$. Denote by $\insorders = \bk{\insorderi[1], \ldots, \insorderi[\br]}$ the sequence of all these permutations.
\end{itemize}

Let $\sta$ be a (memory) state of the deterministic data structure at a given time, i.e., the contents in all $N$ memory cells. It is clear that $H(\sta) \le \log\binom{U}{n} + \Redun$. We also define $\sta(\set)$ to be the state of the cells $\set$, where $\set \subseteq [N]$. Formally, $\sta(\set) \defeq \BK{(i, \, \textup{cont}(i)) : i \in \set}$, where $\textup{cont}(i)$ is the $w$-bit content of the $i$-th cell in the state $\sta$. Notice that $\sta(\set)$ also includes the address information $S$ when $S$ is a random variable. Hence, the entropy of $\sta(\set)$ might be larger than $w \cdot |\set|$.

Denote by $\stabegin$ the state of the data structure just before executing the first operation in $u$; denote by $\staend$ the state right after executing the last operation in $u$. We will not consider states before $\stabegin$ or after $\staend$ in this section.

We will prove \cref{lm:outer} by contradiction via a communication game. In this game, 
\begin{itemize}
  \item Alice is given $\insaparts$, which determines $\insaset$, and is also given $\stabegin, \dels, \insorders$;
  \item Bob is given $\insaset, \stabegin, \dels, \insorders$;
  \item the goal of the game is for Alice to tell Bob $\bk{\insaparts}$ by sending one message.
\end{itemize}
Note that $\bk{\insaparts}$ is simply a random partition of $\insaset$, which has $\widl$ elements, into $\br$ groups of size $\widn$ each. 
Since all other inputs are independent of the partition, the entropy of $\bk{\insaparts}$ given Bob's input is
\begin{align*}
  H(\insaparts \mid \insaset, \stabegin, \dels, \insorders)
  &= H(\insaparts \mid \insaset) \\
  &= \log \binom{\widl}{\widn, \ldots, \widn}
	= \log \widl ! - \br \log \widn !.
	\numberthis \label{eq:ent_partition}
\end{align*}

\subsection{Outer Game Protocol}\label{sec:outer_game_protocol}

Let us now assume for contradiction that the conclusion of \cref{lm:outer} does not hold, i.e., $\E[\costu] < \frac{\gamma \widl}{100}$ and $\E[\probeu] < \frac{1}{16} \widl t$, we will show a protocol where Alice sends less than $(\log \widl ! - \br \log \widn !)$ bits of information, but Bob can still recover $\insaparts$, which leads to a contradiction.

Recall the definition of $\costu$: While a cell $j$ is probed at different segments in $u$, we increase $\costu$ by one. Conversely, since $\costu$ is small, we know that most cells are only probed in at most one segment. We call the cells that are probed in at least two different segments \emph{bad cells}, and use $\setbad$ to denote the set of bad cells. The expected number of bad cells is at most $\E[|\setbad|] \le \E[\costu] < \frac{\gamma \widl}{100}$.

The framework of our protocol is shown in Protocol~\ref{ptc:outer}. Below, we will explain it in detail.

\begin{algorithm}[H]
  \captionof{Protocol}{Outer Game Protocol Framework}
  \label{ptc:outer}
  \DontPrintSemicolon
  Alice sends $\staend$ to Bob\;
  \For{$i = 1$ to $\br$} {
    Bob computes $\stabef$, the state of the data structure before the $i$-th segment \label{line:prepare}\;
    $\insrest \defeq \insaset \setminus (\insaseti[1] \cup \cdots \cup \insaseti[i - 1])$\;
    $\guessall \defeq \{\insetb\subseteq\insrest : |\insetb|=\widn\}$ \label{line:guessall}\;
    Alice decides $\indbad \in \{0,1\}$ and sends it to Bob \label{line:alice} \Comment{Explained below}\;
    \uIf{$\indbad = 1$} {
      Bob runs \emph{tests} on all $\insetb \in \guessall$ \Comment{Explained below}\;
      $\guesspass \defeq \{\insetb\in\guessall : \insetb \textup{ passes the test}\}$\;
      Alice sends the index of $\insaseti$ in $\guesspass$ \quad\, \Comment{It is guaranteed that $\insaseti$ will pass the test}\;
    } \Else {
      Alice sends the index of $\insaseti$ in $\guessall$\;\label{line:X=0}
    }
  }
\end{algorithm}

\paragraph{Sending \texorpdfstring{$\staend$}{C\_end}.} The first step is to send $\staend$, the final state of the data structure after processing all operations in $u$, to Bob. 
We show that the entropy of this message given Bob's knowledge is in fact much smaller than the number of bits in the memory, because $\staend$ has a large mutual information with $\stabegin$.
To upper bound this entropy, we consider the key set stored in $\staend$, which we denote by $\keyend$.
Now observe that $\keyend$ can be inferred from both $\staend$ and from $(\stabegin, \insaset, \dels)$ (by running all possible queries on $\staend$, or on $\stabegin$ and replacing keys $\dels$ with $\insaset$).
Thus, the mutual information $\mutual{\staend}{\stabegin, \insaset, \dels, \insorders} \ge H(\keyend) = \log \binom{U}{n}$. Furthermore, we have
\begin{align}
  \label{eq:ent_stateend}
  H(\staend\mid \stabegin,\insaset,\dels,\insorders) = H(\staend) - \mutual{\staend}{\stabegin,\insaset,\dels,\insorders} \le \log \binom{U}{n}+\Redun-\log\binom{U}{n} = \Redun.
\end{align}
Hence, Alice can send $\staend$ \emph{conditioned on}
 $\bk{\stabegin, \insaset, \dels, \insorders}$, using at most $\Redun$ bits in expectation.

The remainder of the protocol consists of $\br$ rounds. In each round, Bob aims to learn a single $\insaseti$.
Notice that since Bob knows the order of insertions $\insorderi$ and the sequence of deleted keys $\delsi$, Bob would also be able to infer the entire operation sequence in segment $i$.

\paragraph{Preparation steps.} At the beginning of the $i$-th round, Bob already knows $\insaseti[1], \ldots, \insaseti[i - 1]$; combined with the common knowledge $\insorders$, $\dels$, Bob can recover all operations in the first $i - 1$ segments. So he can simulate the process of the data structure from $\stabegin$ throughout the first $i - 1$ segments, obtaining the state before the $i$-th segment, $\stabef$. At this point, the set of all possible keys that can be inserted in segment $i$ \emph{in Bob's view} is $\insrest \defeq \insaset \setminus (\insaseti[1] \cup \cdots \cup \insaseti[i - 1])$. \cref{line:guessall} defines $\guessall$ as the collection of all $\widn$-element subsets of $\insrest$. Every set in $\guessall$ is currently a candidate of $\insaseti$. It is clear that the correct $\insaseti$ is in $\guessall$.

The most trivial way for Alice to reveal $\insaseti$ to Bob is to send the index of $\insaseti$ in $\guessall$. This takes $\log \bigabs{\guessall} = \log \binom{|\insrest|}{\widn}$ bits. However, this approach would not lead to a contradiction, as $\sum_{i=1}^{\br}\log \bigabs{\guessall} = \log \binom{\widl}{\widn, \ldots, \widn}$ exactly equals $\log \widl ! - \br \log \widn !$, the entropy that Bob needs to learn.

Compared to this simple method, our protocol lets Bob eliminate most of the candidates by running \emph{tests} based on $\stabef$ and $\staend$. Ideally, he can narrow down the candidates to a small subset $\guesspass \subseteq \guessall$ while the correct $\insaseti \in \guesspass$ is not eliminated. In this case, Alice only needs to send $\log \bigabs{\guesspass}\ll\log \bigabs{\guessall}$ bits. On the other hand, we observe that occasionally, $\insaseti$ is eliminated by Bob's tests.
In this \emph{bad case}, Alice will still send $\insaseti$ using the index in $\guessall$. An additional bit $\indbad$ is sent to tell Bob whether his tests will be successful.

\paragraph{Alice's Decision.} We define $\setai$ to be the set of cells probed in the $i$-th segment. Recall that we use $\setbad$ to represent the cells probed in multiple segments (Alice knows all operations, and thus, knows $\setbad$). $\indbad$ in \cref{line:alice} is determined according to the relationship between $\setai$ and $\setbad$: If $\bigabs{\setai \cap \setbad} > \gamma \widn / 2$ (recall $\gamma \defeq \frac{\alpha}{6(1 + \alpha)}$), we call it the \emph{bad case}, and Alice sends $\indbad = 0$ and the index of $\insaseti$ in $\guessall$ in \cref{line:X=0}. Otherwise, Alice sends $\indbad = 1$.

\paragraph{Bob's Tests.} For each candidate $B \in \guessall$, Bob runs the following test. He pretends that $B$ is the correct set $\insaseti$, and permutes $B$ according to the permutation $\insorderi$, obtaining an ordered sequence of $\widn$ keys $\insbsi[i]$. Then, he simulates the operations in the $i$-th segment according to $\insbsi[i]$ and $\delsi[i]$. The simulation starts with the state $\stabef$, and we denote by $\stab$ the end state after the simulation. Let $\setb$ be the set of probed cells during the simulation.

$B$ passes the test if most of these probed cells have the same content as their final value in $\staend$, i.e., if $\stab(\setb)$ and $\staend(\setb)$ differ by at most $\gamma \widn / 2$ cells. In this case, we say that $B$ is \emph{qualified}.

\bigskip

Next we show that $\insaseti$ is always qualified when $X_i=1$. Let $\staai$ be the true state after the $i$-th segment. Note that $\staai(\setai)$ and $\staend(\setai)$ can only differ in $\setai \cap \setbad$: If some cell $j \in \setai$ has different values in $\staai$ and $\staend$, it must be probed in a later segment; but it is also probed in segment $i$, which tells us that $j \in \setbad$ by the definition of $\setbad$. Combined with the fact $\bigabs{\setai \cap \setbad} \le \gamma \widn / 2$ (since $\indbad = 1$), we see that $\insaseti$ must be qualified.

Thus, in either case ($X_i=0$ or $X_i=1$), Bob is able to recover the set $\insaseti$.
By iterating over all $i=1,\ldots,\br$, Bob successfully recovers the partition $(\insaseti[1], \ldots, \insaseti[\br])$ as required.

\paragraph{Analysis.} Denote by $\msgout$ the message sent by Alice in the $i$-th round. $\msgout$ contains an indicator $\indbad$ and an index of $\log \bigabs{\guessall}$ bits (when $\indbad = 0$) or $\log \bigabs{\guesspass}$ bits (when $\indbad = 1$). Formally,
\[ |\msgout| = \indbad \log \bigabs{\guesspass} + (1 - \indbad) \log \bigabs{\guessall} + 1. \]
Thus, by~\eqref{eq:ent_stateend}, the message sent by Alice has at most
\[
  H(\staend \mid \insaset, \stabegin, \dels, \insorders)
  +
  \sum_{i=1}^{\br} \E[|\msgout|]
  \;\leq\;
  R+\sum_{i=1}^{\br} \E[|\msgout|]
\]
bits of information in total.
On the other hand, since Bob can always recover the partition $(\insaseti[1], \ldots, \insaseti[\br])$ given his input $(\insaset, \stabegin, \dels, \insorders)$ and Alice's message, we must have
\begin{align*}
  R+\sum_{i=1}^{\br} \E[|\msgout|]
  & \;\ge\;
    H(\insaseti[1], \ldots, \insaseti[\br] \mid \insaset, \stabegin, \dels, \insorders), \\
  \Leftrightarrow \qquad
  \sum_{i=1}^{\br} \E[|\msgout|]
  & \;\ge\; \log \widl ! - \br \log \widn ! - \Redun.
    \numberthis \label{eq:req_msg}
\end{align*}
Next, we will derive an upper bound on $\E[|\msgout|]$ that will contradict with this inequality when $\E[\costu] < \frac{\gamma \widl}{100}$ and $\E[\probeu] < \frac{1}{16} \widl t$.

\subsection{Estimate the Message Length}\label{sec:outer_game_cost}

We introduce an intermediate set $\guessnbr \defeq \{\insetb \in\guessall : |\insaseti \cap \insetb| \ge \widn / 2\}$ to help us estimate $\guesspass$.
This is the set of all possible $B$ that are not too different from the correct $\insaseti$.
Define indicator variables $\indnbr \defeq \indicatorEx{\guesspass \subseteq \guessnbr}$ and $\indprobe \defeq \indicatorEx{\E\!\bigBk{\bigabs{\setai}} \le \widn t}$. 
The variable $\indnbr$ indicates that the only sets that can pass the test are the ones similar to the correct $\insaseti$.
Note that $\indnbr$ is a random variable, while $\indprobe$ is not (since the only random variable in the definition of $\indprobe$ is placed within the expection), which can be fully determined by the location of the current segment (i.e., determined by the node $u$'s location in the tree and the segment index $i$). 
Later, we will demonstrate that segments with $\indprobe=1$ are \emph{likely} to have $\indnbr=1$.

Then, when $\indnbr=\indprobe=1$, we can apply $\bigabs{\guesspass}\le \bigabs{\guessnbr}$ on the inequality; otherwise, we just use the trivial bound $\bigabs{\guesspass}\le \bigabs{\guessall}$. Recall $\indbad = \indicatorEx{\bigabs{\setai \cap \setbad} \le \gamma \widn / 2}$ indicates if this is the bad case, we then have
\begin{equation}
  \label{eq:exp_msg}
  \begin{aligned}
    |\msgout|&\le(1-\indbad\indnbr\indprobe)\log \bigabs{\guessall}+\indbad\indnbr\indprobe\log \bigabs{\guessnbr}+1\\
             &=\log \bigabs{\guessall} -\indbad\indnbr\indprobe\left(\log\bigabs{\guessall}-\log \bigabs{\guessnbr}\right)+1.
  \end{aligned}
\end{equation}
The following claims give lower bounds on $\indbad$, $\indnbr$, $\indprobe$ and $\log\bigabs{\guessall}-\log\bigabs{\guessnbr}$ respectively.

\begin{claim}
  \label{clm:bound_indbad}
  We have
  \[\E\left[\sum_{i=1}^{\br / 2}\indbad\right]\ge \frac{1}{4}\br.\]
\end{claim}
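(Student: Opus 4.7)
The plan is to bound the expected number of ``bad'' segments (those with $\indbad = 0$, i.e., $\bigabs{\setai \cap \setbad} > \gamma \widn / 2$) by a Markov argument, after relating the total overlap $\sum_i \bigabs{\setai \cap \setbad}$ to $\costu$, which by the working contradiction hypothesis satisfies $\E[\costu] < \gamma \widl / 100$.

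First I would establish a clean counting identity for $\costu$. For each $c \in \setbad$, let $s_c \ge 2$ denote the number of distinct segments of $u$ in which $c$ is probed. Because the $\br$ segments partition $u$'s meta-operations into contiguous time-intervals, the chronological list of probes of $c$ within $u$ takes values in a non-decreasing sequence of segment indices, and jumps between distinct indices exactly $s_c - 1$ times. Each such jump is a pair of consecutive probes of $c$ (consecutive in the global probe sequence, since $u$'s interval is contiguous) whose LCA is exactly $u$, so contributes $1$ to $\costu$; conversely, probes that straddle $u$'s boundary are assigned to a strict ancestor and do not contribute. Hence $\costu = \sum_{c \in \setbad}(s_c - 1)$.

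Next I would use the telescoping identity $\sum_{i=1}^{\br} \bigabs{\setai \cap \setbad} = \sum_{c \in \setbad} s_c$ (each $c \in \setbad$ is counted once per segment that probes it) together with the elementary bound $s_c \le 2(s_c - 1)$, valid for $s_c \ge 2$. Combining these yields $\sum_{i=1}^{\br} \bigabs{\setai \cap \setbad} \le 2\costu$. Taking expectations and applying the assumption gives $\E\bigBk{\sum_i \bigabs{\setai \cap \setbad}} < \gamma \br \widn / 50$. By Markov's inequality applied to the event $\bigabs{\setai \cap \setbad} > \gamma \widn / 2$, the expected number of indices $i \in [\br]$ with $\indbad = 0$ is strictly less than $\br/25$. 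Since bad segments in $\{1,\ldots,\br/2\}$ are a subset of all bad segments, the expected number of good segments in the first half is at least $\br/2 - \br/25 \ge \br/4$, giving the claim.

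I do not expect a genuine obstacle here; the argument is essentially a one-line Markov computation once the bookkeeping is set up. The only subtle point is justifying $\costu = \sum_{c \in \setbad}(s_c - 1)$ as an \emph{equality} (not just a lower bound), which relies on the contiguity of each segment in time so that probes of a given cell traverse the segments in chronological order without revisiting earlier segments.
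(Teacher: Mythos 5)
Your proposal is correct and follows essentially the same route as the paper: both bound $\sum_{i}\bigabs{\setai\cap\setbad}=\sum_{c\in\setbad}s_c\le 2\sum_{c\in\setbad}(s_c-1)\le 2\costu$ and then apply Markov to the event $\bigabs{\setai\cap\setbad}>\gamma\widn/2$ to get at most $\br/25$ bad segments in expectation. Your extra care in justifying $\costu=\sum_{c\in\setbad}(s_c-1)$ via the contiguity of segments is a valid elaboration of what the paper states more tersely.
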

\begin{proof}
  First, we reconsider how $\setai\cap\setbad$ in the $i$-th segment is related to $\costu$. If some cell is probed in $s>1$ different segments of $u$, it incurs $s - 1 \ge s / 2$ costs on $u$. Hence, if we think every probe of a bad cell incurs $1/2$ cost, we will get a lower bound on $\costu$. Noting that $\setai \cap \setbad$ is exactly the set of bad cells probed in segment $i$, where each probe contributes at least $1/2$ to $\costu$, we have
  \begin{equation}
    \label{eq:costandx}
    \frac{1}{2}\E\Bk{\sum_{i=1}^{\br}\bigabs{\setai \cap \setbad}}\le \E[\costu]\le \frac{\gamma}{100}\widl.
  \end{equation}
  Then we put $\indbad$ into \eqref{eq:costandx}. Since $\indbad=0$ implies $\bigabs{\setai \cap \setbad}>\gamma\widn/2$,
  \begin{gather*}
    \frac{1}{2} \E\Bk{\sum_{i=1}^{\br} \frac{\gamma \widn}{2} (1 - \indbad)}
    \le \frac{1}{2} \E\Bk{\sum_{i=1}^{\br} |\setai \cap \setbad|}
    \le \frac{\gamma}{100} \widl =  \frac{\gamma \widn \cdot \br}{100} \\
    \quad \Rightarrow \quad
    \E\Bk{\sum_{i=1}^{\br}(1-\indbad)} \le \frac{\br}{25}, \\
    \E\left[\sum_{i=1}^{\br/2}\indbad\right]
    \ge \frac{\br}{2} - \E\left[\sum_{i=1}^{\br}(1-\indbad)\right]
    \ge \frac{\br}{2} - \frac{\br}{25} \ge \frac{\br}{4}.
    \qedhere
  \end{gather*}
\end{proof}

The next claim gives a bound on $\indnbr$.
\begin{claim}
  \label{clm:bound_indnbr}
  For any $1 \le i \le \br / 2$, if $\indprobe = 1$, we have
  \[\Pr[\indnbr = 0 \land \indbad = 1] \le \frac{1}{8}.\]
\end{claim}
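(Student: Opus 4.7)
The plan is to argue that the joint event $\{\indnbr = 0, \indbad = 1\}$ forces the existence of a \emph{spurious} qualifying set $B \in \guessall$ that has small intersection with the truth $\insaseti$, and then to rule this out (with probability $\ge 7/8$) via the Inner Lemma previewed in \cref{sec:overview} together with a union bound over such $B$.

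First I would unpack the two indicators. If $\indnbr = 0$, then $\guesspass \not\subseteq \guessnbr$, so some $B \in \guessall$ with $|B \cap \insaseti| < \widn/2$ passes Bob's test, meaning $\stab(\setb)$ agrees with $\staend(\setb)$ on all but at most $\gamma\widn/2$ cells. If $\indbad = 1$, then $|\setai \cap \setbad| \le \gamma\widn/2$, and since cells in $\setai \setminus \setbad$ are never touched after segment $i$, $\staai$ and $\staend$ must also agree on $\setai$ up to at most $\gamma\widn/2$ cells. Restricting both comparisons to $\setai \cap \setb$ and applying the triangle inequality on disagreement sets, the simulated states $\staai$ and $\stab$ agree on all but at most $\gamma\widn$ cells of $\setai \cap \setb$. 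This is exactly the ``consistency'' relation between $\insaseti$ and $B$ informally defined in \cref{sec:overview}.

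Next I would invoke the Inner Lemma, which asserts that for a random pair $(A, B)$ of $\widn$-subsets of $\insrest$ with $|A \cap B| < \widn/2$, the two are consistent in the above sense only with probability $U^{-\Omega(\widn)}$. The hypothesis $\indprobe = 1$ is used here to bound $\E[|\setai|] \le \widn t$ (and, by symmetry, $\E[|\setb|]$ when $B$ is treated as the truth) so that the consistency condition is genuinely restrictive rather than vacuously forced by huge probe sets. I would then union-bound over all $B \in \guessall \setminus \guessnbr$: since $|\insrest| \le \widl = \br\widn$, there are at most $\binom{\widl}{\widn} \le 2^{O(\widn \log \br)}$ such $B$. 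The resulting expected number of spurious qualifiers is at most $2^{O(\widn \log \br) - \Omega(\widn \log U)}$, and because \cref{lm:outer} assumes $\br \le \frac{\alpha}{12} \log n$ while $\log U = \Theta(\log n)$, this is $o(1)$; Markov's inequality then yields the claim. The restriction $i \le \br/2$ enters only to ensure that $|\insrest|$ remains $\Omega(\widl)$, so that $\insaseti$ is essentially a uniform $\widn$-subset of a ground set much larger than itself, as is needed to apply the Inner Lemma.

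The main obstacle will be the Inner Lemma itself, whose proof must convert the combinatorial consistency condition into an encoding-based communication protocol (the ``inner game'' of \cref{sec:inner_game_protocol}) that saves $\Omega(\widn \log U)$ bits relative to the information content of $(\keyset, \insaseti, B, \dels)$. Given that lemma, the union bound and parameter-chasing above are mechanical; the only delicate point is that $\indprobe$ is defined through an expectation, so the Markov step must be taken with respect to the correct source of randomness, namely the probe count of Bob's simulation run on a uniformly random candidate $B$ rather than on the truth $\insaseti$.
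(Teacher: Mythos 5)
Your proposal is correct and follows essentially the same route as the paper: reduce the event $\{\indnbr=0,\indbad=1\}$ to the existence of a set $\insetb\in\guessall\setminus\guessnbr$ consistent with $\insaseti$ (via the triangle inequality through $\staend$ on $\setai\cap\setb$), then apply the Inner Lemma with a first-moment union bound over the at most $2^{\widl}$ candidates, using $\br\le\frac{\alpha}{12}\log n$ to beat $U^{-\gamma\widn}$. The only cosmetic difference is your tighter count $\binom{\widl}{\widn}$ where the paper uses $2^{\widl}$; both suffice.
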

\begin{proof}
  Suppose $\indbad = 1$. For any $\insetb \in \guesspass$, we know $\insaseti$ and $\insetb$ are both qualified, which means $\staai(\setai)$ and $\staend(\setai)$ differ in at most $\gamma\widn/2$ cells; so do $\stab(\setb)$ and $\staend(\setb)$. Thus $\staai(\setai\cap\setb)$ and $\stab(\setai\cap\setb)$ have at most $\gamma\widn$ different cells, in which case we say $\insaseti$ and $\insetb$ are \emph{consistent}. Formally:

  \begin{definition}
    \label{def:consistency}
    For two insertion sets $\inseta, \insetb \in \guessall$, we say $\inseta$ and $\insetb$ are consistent, if $\staa(\seta\cap\setb)$ and $\stab(\seta\cap\setb)$ have at most $\gamma\widn$ different cells.
  \end{definition}

  Under this definition, if $\indbad = 1$ and $\insetb \in \guesspass$, then $\insaseti$ and $\insetb$ must be consistent. Hence,
  \begin{align*}
    \Pr \Bk{\indnbr = 0 \land \indbad = 1}
    &\le \Pr \Bk{\bk{\exists \insetb\in \guessall\setminus\guessnbr, \; \insetb \in \guesspass} \land (\indbad = 1)}\\
    &\le \Pr \Bk{\exists \insetb\in \guessall\setminus\guessnbr, \; \insetb \textup{ and } \insaseti \textup{ are consistent}}.
  \end{align*}
  Thus, to prove \cref{clm:bound_indnbr}, it suffices to show
  \[
    \Pr\Bk{\exists \insetb \in \guessall \setminus \guessnbr, \; \insetb \textup{ and } \insaseti \textup{ are consistent}} \le \frac{1}{8}.
  \]
  Next, we show that a random $\insetb \notin \guessnbr$ is unlikely to be consistent with $\insaseti$.
  
  \begin{restatable}[Inner Lemma]{lemma}{InnerLemma}
    \label{lm:inner}
    Assume $\stabef$ (the state before the $i$-th segment) and $\insrest$ are random variables whose distribution is induced from Distribution~\ref{alg_hard_dist}. Let $A, B$ be two uniformly random $\widn$-sized subsets of $\insrest$ conditioned on $|A \cap B| < \widn / 2$. If for some parameter $t$, inequalities $\frac{1}{2}\gamma\widl \log U \ge \Redun + 5\widl + 2(\beta + 1)\widl t$ ($\beta$ is a global fixed constant), $\widl \ge n^{1-\gamma/2}$, and $\E[|\seta|] \le \widn t$ hold, then we have
    \[\Pr_{\stabef, \insrest, \inseta, \insetb}\left[\inseta\textup{ and }\insetb\textup{ are consistent} \;\middle|\; \vphantom{\Big|} |\inseta\cap\insetb| < \widn / 2 \right]\le U^{-\gamma\widn}.\]
  \end{restatable}

  We will prove this lemma in \cref{sec:inner}.
  Now, we continue the proof of \cref{lm:outer} using this lemma.

  To apply this lemma, observe that the first two inequalities ($\frac{1}{2}\gamma\widl \log U \ge \Redun + 5\widl + 2(\beta + 1)\widl t$ and $\widl \ge n^{1-\gamma/2}$) hold because they are also premises of \cref{lm:outer}; the third constraint ($\E[|\seta|] \le \widn t$) holds because $\indprobe = 1$.
  Moreover, note that $\insaseti$ is a random subset of $\insrest$; if we randomly select a $\insetb \in \guessall \setminus \guessnbr$, the distribution of $(\insaseti, \insetb)$ matches the required distribution of $(\inseta,\insetb)$ in the lemma.

  Combining with the discussion above, as long as $\indprobe=1$, we have
  \begin{align*}
    \Pr_{\stabef, \insrest, \insaseti}[\indnbr = 0 \land \indbad = 1]
    &\le \Pr\Bk{\exists \insetb\in \guessall\setminus\guessnbr, \; \insaseti \textup{ and } \insetb \textup{ are consistent}} \\
    &\le \E\Bk{\abs{\BK{\insetb\in\guessall\setminus \guessnbr : \insaseti \textup{ and } \insetb \textup{ are consistent}}}} \\
    &= \bigabs{\guessall \setminus \guessnbr} \cdot \Pr_{\substack{\stabef, \insrest, \insaseti \\ \insetb \in \guessall\setminus\guessnbr}}
    \Bk{\insaseti \textup{ and } \insetb \textup{ are consistent}} \\
    &\le \bigabs{\guessall} \cdot U^{-\gamma \widn} \tag{$\ast$}\label{eq:apply_inner} \\
    &\le 2^{\widl} \cdot U^{-\gamma \widn} = 2^{\br\widn} \cdot n^{-\alpha \widn/6},
      \numberthis \label{eq:upperofy}
  \end{align*}
  where Step~\eqref{eq:apply_inner} is derived by applying \cref{lm:inner} with $(A, B) = (\insaseti, B)$. (Recall $\gamma = \frac{\alpha}{6(1+\alpha)}$.)

  Recall that in the statement of \cref{lm:outer}, we require $\br\le \log n^{\alpha / 12}$. Plugging this into \eqref{eq:upperofy}, we obtain $\Pr[\indnbr = 0 \land \indbad = 1]\le n^{-\alpha\widn/12}$. Since $\alpha$ is a constant in our assumption $U=\poly(n)$, with sufficiently large $n$, we have $\Pr[\indnbr=0 \land \indbad = 1] \le 1/8$. (This probability can be smaller than any fixed constant, but $1/8$ is enough for us.)
\end{proof}

Using the condition $\E[\probeu] \le \frac{1}{16} \widl t$, we can derive a bound of $\sum \indprobe$ as below.
\begin{claim}
  \label{clm:bound_probe}
  We have
  \[\sum_{i=1}^{\br/2} (1-\indprobe) \le \frac{1}{16} \br.\]
\end{claim}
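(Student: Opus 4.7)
The plan is a direct averaging argument in the style of Markov's inequality. The quantity $\indprobe = \indicatorEx{\E[|\setai|]\le \widn t}$ is a deterministic indicator (the randomness has already been integrated out in its defining expectation), so $\sum_i (1-\indprobe)$ simply counts segments whose \emph{expected} number of distinct probes exceeds $\widn t$. I plan to compare this count against the global budget $\E[\probeu]\le \frac{1}{16}\widl t$.

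First I would note that $\setai$, the set of cells probed during segment $i$, has size at most the total number of probes performed in segment $i$. Summing over $i$ therefore gives $\sum_{i=1}^{\br}|\setai|\le \probeu$, and taking expectations yields
\[
\sum_{i=1}^{\br}\E[|\setai|]\;\le\;\E[\probeu]\;\le\;\tfrac{1}{16}\widl t\;=\;\tfrac{1}{16}\br\widn t.
\]
Second, for every index $i$ with $\indprobe=0$ we have by definition $\E[|\setai|]>\widn t$, so the contribution of such indices to the left-hand side is at least $\widn t\cdot \#\{i:\indprobe=0\}$. Combining:
\[
\widn t\cdot\sum_{i=1}^{\br}(1-\indprobe)\;\le\;\sum_{i=1}^{\br}\E[|\setai|]\;\le\;\tfrac{1}{16}\br\widn t,
\]
and dividing through by $\widn t$ gives $\sum_{i=1}^{\br}(1-\indprobe)\le \tfrac{1}{16}\br$. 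The bound over the truncated range $i\le \br/2$ follows immediately since $1-\indprobe\ge 0$.

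There is no real obstacle here; the only subtlety worth stating explicitly in the writeup is that $\indprobe$ is \emph{not} a random variable (the expectation inside its definition is already taken over the hard distribution), so one is genuinely allowed to pull it out of the summation as a constant $0/1$ weight and do the Markov bookkeeping on $\E[|\setai|]$ rather than on $|\setai|$ itself. Once that point is made, the three displayed inequalities complete the claim.
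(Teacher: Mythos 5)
Your argument is correct and is essentially identical to the paper's proof: both bound $\sum_i \E[|\setai|]$ by $\E[\probeu]\le\frac{1}{16}\widl t=\frac{1}{16}\br\widn t$ and then use that each index with $\indprobe=0$ contributes more than $\widn t$ to that sum. Your observation that $\indprobe$ is deterministic (so the Markov-style bookkeeping applies to $\E[|\setai|]$ rather than $|\setai|$) matches the paper's own remark, and proving the bound over the full range $i\le\br$ before truncating is an inessential variation.
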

\begin{proof}
  Note that $\sum_{i=1}^{\br/2} \bigabs{\setai} \le \sum_{i=1}^{\br} \bigabs{\setai} \le \probeu$. Combined with $\indprobe \defeq \indicatorEx{\E\!\bigBk{\bigabs{\setai}} \le \widn\cdot t}$, we get
  \[
    \sum_{i=1}^{\br/2} (1-\indprobe) \cdot \widn \cdot t
    \le \sum_{i=1}^{\br / 2} \E\!\bigBk{\bigabs{\setai}}
    \le \E[\probeu] \le \frac{1}{16} \widl\cdot t
    \quad \Rightarrow \quad \sum_{i=1}^{\br/2} (1-\indprobe) \le \frac{1}{16} \br. \qedhere
  \]
\end{proof}

Next, a simple counting argument implies a lower bound on $\log \bigabs{\guessall} - \log \bigabs{\guessnbr}$.
\begin{claim}
  \label{clm:bound_guess}
  For $1 \le i \le \br / 2$, we have
  \[ \log\bigabs{\guessall} - \log \bigabs{\guessnbr} \ge \frac{1}{4} \widn \log \br. \]
\end{claim}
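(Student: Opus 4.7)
The plan is to directly estimate the two binomial quantities and show the ratio $|\guessall|/|\guessnbr|$ is at least $\br^{\widn/4}$. Since for $i\le\br/2$ we have $|\insrest|=(\br-i+1)\widn\ge(\br/2+1)\widn$, set $N\defeq|\insrest|$ so that $N/\widn\ge\br/2$. Then $|\guessall|=\binom{N}{\widn}$, while every $\insetb\in\guessnbr$ is built by choosing $j\ge\widn/2$ elements from the fixed $\widn$-set $\insaseti\subseteq\insrest$ and $\widn-j$ elements from $\insrest\setminus\insaseti$, giving
\[
  |\guessnbr|=\sum_{j=\widn/2}^{\widn}\binom{\widn}{j}\binom{N-\widn}{\widn-j}.
\]

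First I would upper-bound $|\guessnbr|$. For $j\ge\widn/2$, bound $\binom{\widn}{j}\le 2^{\widn}$ and $\binom{N-\widn}{\widn-j}\le\binom{N-\widn}{\widn/2}$ (since $\widn-j\le\widn/2\le N-\widn$ by $\br\ge 2^{64}$), which gives $|\guessnbr|\le(\widn/2+1)\,2^{\widn}\binom{N-\widn}{\widn/2}$. Next I would lower-bound $|\guessall|$ using $\binom{N}{\widn}\ge(N/\widn)^{\widn}\ge(\br/2)^{\widn}$ and upper-bound $\binom{N-\widn}{\widn/2}\le(2eN/\widn)^{\widn/2}\le(2e\br)^{\widn/2}$. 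Dividing,
\[
  \frac{|\guessall|}{|\guessnbr|}\ge\frac{(\br/2)^{\widn}}{(\widn/2+1)\,2^{\widn}\,(2e\br)^{\widn/2}}=\frac{\br^{\widn/2}}{(\widn/2+1)\cdot 2^{2\widn}\cdot(2e)^{\widn/2}}.
\]

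Finally, taking logarithms gives $\log|\guessall|-\log|\guessnbr|\ge\tfrac{\widn}{2}\log\br-C\widn-\log(\widn/2+1)$ for an absolute constant $C$. Using $\br\ge 2^{64}$ from the hypotheses of \cref{lm:outer}, the $\log\br$ term dominates: we have $C\widn+\log(\widn/2+1)\le\tfrac{\widn}{4}\log\br$ once $\log\br$ is a sufficiently large constant, which is guaranteed since $\log\br\ge 64$. This yields the claimed bound $\log|\guessall|-\log|\guessnbr|\ge\tfrac14\widn\log\br$.

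The whole argument is just careful combinatorial bookkeeping; the only mild subtlety is making sure the absolute constants hidden in the standard binomial estimates fit under the slack provided by $\br\ge 2^{64}$, but this is comfortable. No further information-theoretic input is needed.
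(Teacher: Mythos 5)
Your proposal is correct and follows essentially the same route as the paper: both lower-bound $\bigabs{\guessall}=\binom{|\insrest|}{\widn}$ via $(|\insrest|/\widn)^{\widn}\ge(\br/2)^{\widn}$ and upper-bound $\bigabs{\guessnbr}$ by the number of terms times the maximal term $\binom{\widn}{\widn/2}\binom{|\insrest|}{\widn/2}$, then absorb the $O(\widn)$ error terms using $\log\br\ge 64$. The constants work out with room to spare, so no further comment is needed.
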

\begin{proof}
  By definition, $\bigabs{\guessall} = \binom{|\insrest|}{\widn}$, 
  and since $i\le \br/2$, $\insrest$ satisfies $|\insrest|\ge \widl/2$, so
  \[
    \log \bigabs{\guessall} \ge \log \binom{\widl / 2}{\widn} \ge \log \bk{\frac{\widl / 2}{\widn}}^{\widn} = \widn \log \br - \widn. \numberthis \label{eq:guessall}
  \]
  
  On the other hand, we have
  \begin{align*}
    \bigabs{\guessnbr} &= \sum_{j = \widn / 2}^{\widn} \binom{\widn}{j} \binom{|\insrest| - \widn}{\widn - j} \\
                &\le \widn \binom{\widn}{\widn / 2} \binom{|\insrest|}{\widn / 2}, \\
    \log \bigabs{\guessnbr} &\le \log \widn + \widn + \log \binom{\widl}{\widn / 2} \\
                &\le 2\widn + \frac{\widn}{2} \log 2e \br\\
                &\le 4\widn + \frac{\widn}{2} \log \br.
  \end{align*}
  On the first line above, we count the number of elements $\insetb \in \guessnbr$ by enumerating $j = |\insaseti \cap \insetb|$; the two binomial coefficient factors represent the number of ways to select $\insaseti \cap \insetb$ and $\insetb \setminus \insaseti$, respectively. The inequality on the second line holds because both binomial coefficients reach their maximum values when $j = \widn / 2$. Combining the last line with \eqref{eq:guessall}, we have
  \begin{align*}
    \log \bigabs{\guessall} - \log \bigabs{\guessnbr} &\ge \frac{\widn}{2} \log \br - 5 \widn\\
                                        &\ge \frac{1}{4} \widn \log \br
  \end{align*}
  as we have assumed $\log \br \ge 64$. This proves the claim.
\end{proof}

Now we are ready to finish the proof of \cref{lm:outer}:

\begin{proofof}{\cref{lm:outer}}
  Suppose the lemma does not hold. Recall \eqref{eq:exp_msg}:
  \[
    |\msgout| \le \log \bigabs{\guessall} -\indbad\indnbr\indprobe\left(\log\bigabs{\guessall}-\log \bigabs{\guessnbr}\right) + 1.
  \]
  For the segments with $i \le \br / 2$, we use the previous claims to bound each term in this inequality. From \cref{clm:bound_indbad,clm:bound_indnbr,clm:bound_probe}, we can bound the expectation of $\indbad\indnbr\indprobe$ by writing it as
  \begin{align*}
    \E \Bk{\indbad\indnbr\indprobe} 
    \ge{}& \E\Bk{\indbad\indnbr\indprobe - (1 - \indbad)(1 - \indprobe)}\\
    ={}& \E\Bk{\indbad - (1 - \indprobe) - \indbad \indprobe (1 - \indnbr)}\\
    ={}& \E\Bk{\indbad} -  \bk{1 - \indprobe} - \indprobe\Pr\Bk{\indnbr = 0 \land \indbad = 1}.
  \end{align*}
  \cref{clm:bound_guess} is a direct upper bound on the term $\log\bigabs{\guessall}-\log \bigabs{\guessnbr}$.

  For other segments with $i > \br/2$, we simply apply the trivial bound $|\msgout| \le \log \bigabs{\guessall} + 1$. Now taking summation over all the segments, i.e., $i = 1, \ldots, \br$, we get
  \begin{align*}
    &\sum_{i=1}^{\br}\E[|\msgout|]
    \;\le\;\br+ \sum_{i=1}^{\br}\log\bigabs{\guessall}-\sum_{i=1}^{\br/2} \Bigbk{\E\Bk{\indbad} -  \bk{1 - \indprobe} - \indprobe\Pr\Bk{\indnbr = 0 \land \indbad = 1}}\cdot \frac{1}{4}\widn \log \br \\
    &=\br+ \sum_{i=1}^{\br}\log\bigabs{\guessall}-\bk{\E\Bk{\sum_{i=1}^{\br/2}\indbad} - \sum_{i=1}^{\br/2} (1-\indprobe) - \sum_{\substack{1 \le i \le \br / 2 \\ \indprobe=1}}\Pr[\indnbr=0 \land \indbad = 1]} \cdot \frac{1}{4}\widn \log \br \\
    &\le \br+ \sum_{i=1}^{\br}\log\bigabs{\guessall} - \bk{\frac{\br}{4} - \frac{\br}{16} - \frac{1}{8} \cdot \frac{\br}{2}} \cdot \frac{1}{4}\widn \log \br  \\
    \numberthis\label{eq:expofmsg2}
    &=\br+ \log \widl! - \br \log \widn!-\frac{\widl}{32}\log\br.
  \end{align*}
  On the other side, recall \eqref{eq:req_msg}:
  \[
    \sum_{i=1}^{\br} \E\Bk{|\msgout|} \ge \log \widl ! - \br \log \widn ! - \Redun.
  \]
  Combining it with \eqref{eq:expofmsg2}, we get
  \begin{align*}
    \Redun \ge \frac{\widl}{32} \log \br - \br  > \frac{\widl}{64} \log \br,
    \numberthis \label{eq:redun_contradict}
  \end{align*}
  where the second inequality is from $\br \le \widl$ and $\log \br \ge 64$. Finally, observe that~\eqref{eq:redun_contradict} contradicts with the premise of the lemma, $R \le \frac{\widl}{100} \log \br$.
  This completes the proof of Lemma~\ref{lm:outer}.
\end{proofof}

\section{Inner Lemma}
\label{sec:inner}

In this section, we prove \cref{lm:inner}, the last piece of our lower bound.

\InnerLemma*

The view here is a bit different from previous sections as we focus on a \emph{single} segment of length $\widn$ instead of all $\br$ segments within a node $u$. 
We start by restating the procedure within the single segment using more specific notations.

\begin{enumerate}
\item Parameters. We still use $n$ to represent the number of keys stored in the data structure, and $U=n^{1+\alpha}$ to represent the size of the key universe. Define $\wid<n/2$ as the number of meta-operations in the segment. When applying this lemma in \cref{sec:outer}, we set $\wid=\widn$.
  
  In the statement of \cref{lm:inner} above, the probability is conditioned on $|A \cap B| < \wid / 2$. Below, we first prove the lemma conditioned on $|A \cap B| = g$ for every fixed $g < \wid / 2$. Then \cref{lm:inner} can be directly implied via the law of total probability.
  
  The parameters $n, U, \wid, g$ are fixed, while all other involved variables are random according to the hard distribution. Their distributions will be explicitly stated below.
\item Initial keys. Let $\keyset$ be the key set stored in $\stabef$, the state of the data structure just before the current segment. It is chosen uniformly at random among all $n$-sized subsets of the universe $[U]$. $\stabef$ is also a random state, about which we only know that its key set $K$ follows the above uniform distribution.
  
\item Deletion sequence. In \cref{sec:outer}, we used $\delsi$ to denote the keys to be deleted in the $i$-th segment. Now we omit the superscript $i$ since we are focusing on a single segment, using $\dels$ to denote the deletion sequence in the segment. We further divide the information of $\dels$ into two parts: the set of keys $\delset \defeq \BK{\deli[1], \ldots, \deli[\wid]}$ to be deleted, and the order in which the keys in $\delset$ are deleted. The latter one is formulated by a permutation $\delorder$ over $[\wid]$.
  
  Conditioned on $K$, it is easy to see that $\delset$ follows the uniform distribution over all $\wid$-element subsets of $K$ and $\delorder$ is a random permutation.
  
\item Insertion sets. The two random insertion sets $A$ and $B$ are uniformly chosen over $\wid$-element subsets of $[U] \setminus K$, conditioned on $|A \cap B| = g$. Note that in the original description, $A, B$ are sampled within $\insrest$, but $\insrest$ is a random set by itself.
Thus, we can avoid using the intermediate variable $\insrest$ and directly describe the distribution of $(A, B)$ as above. $\insrest$ will not appear throughout the proof of \cref{lm:inner}. Also note that if we hide $B$ and only observe the distribution of $A$ (conditioned on $K$), it is a uniformly random $\wid$-sized subset of $[U] \setminus K$, which exactly matches the distribution of the insertion set in the current segment (induced from the hard instance).
  
  We use a permutation $\insorder$ to represent the order in which the keys are inserted. It was denoted by $\insorderi$ in the outer game. We denote by $\insas, \insbs$ the ordered sequences of inserted keys, which are obtained by permuting $A, B$ with the same permutation $\insorder$. ($\insas$ was denoted by $\insasi$ in the outer game.)
  
\item Ending states. For any $1 \le i \le \wid$, recall that meta-operation $i$ consists of three operations: \Query{$\deli$}, \Delete{$\deli$}, \Insert{$\insai$}, assuming the insertion sequence $\insas$ is processed. The state of the data structure after executing all $\wid$ meta-operations on $\stabef$ is denoted by $\staa$. Similarly, if we replace $\insas$ with $\insbs$ and execute these operations on $\stabef$, we obtain another ending state $\stab$. Let $\seta, \setb$ be the sets of probed cells during these two procedures, respectively. Note that from the condition of \Cref{lm:inner}, $\E[|\seta|] = \E[|\setb|]\le \wid t$.
  
\item Consistency. Recall \cref{def:consistency}: We say $\inseta$ and $\insetb$ are consistent if $\staa(\seta\cap\setb)$ and $\stab(\seta\cap\setb)$ have at most $\gamma\wid$ different cells, where $\gamma \defeq \frac{\alpha}{6(1 + \alpha)}$ is a fixed constant.
\end{enumerate}

Based on these definitions, we restate \cref{lm:inner} as follows:

\begin{lemma}[Inner Lemma Restated]
  \label{lm:inner_re}
  For integers $U, n, \wid, g$ satisfying $g < \wid / 2$, $\wid < n / 2$, $U = \poly(n)$, assume the random variables $\stabef, \dels, \insorder, A, B$ are randomly sampled according to the procedure above. Let $t > 0$ be a parameter such that the constraints $\frac{1}{2}\gamma\wid\log U \ge \Redun + 5\wid + 2(\beta + 1)\wid t$ ($\beta$ is a global fixed constant), $\wid\ge n^{1-\gamma/2}$, and $\E[|\seta|]\le \wid t$ are satisfied, we must have
  \[ \Pr\!\BigBk{ \inseta \textup{ and } \insetb \textup{ are consistent} \;\Big|\; |\inseta\cap\insetb|=g} \le U^{-\gamma\wid}. \]
\end{lemma}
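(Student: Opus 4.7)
The plan is to prove the lemma via an encoding / communication-game argument (the \emph{inner game}), which converts every consistent tuple $(\stabef, \keyset, \delset, \delorder, \insorder, A, B)$ from the conditional distribution $\{|A \cap B| = g\}$ into a uniquely decodable bit string whose expected length is strictly below the entropy of the tuple by an additive $\gamma\wid\log U$ bits. A standard counting argument then converts this compression gap into the bound $\Pr[\text{consistent} \mid |A \cap B| = g] \le U^{-\gamma\wid}$: fix $(\stabef, \keyset, \delset, \delorder, \insorder)$, note that the number of consistent pairs $(A, B)$ is at most $2^L$ where $L$ is the encoding length, and observe that $L \le \log(\#\text{pairs}) - \gamma\wid\log U$ in expectation gives the desired ratio.

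The baseline description sends $\stabef$ in at most $\log\binom{U}{n} + \Redun$ bits in expectation (using $H(\stabef \mid \keyset) \le \Redun$), followed by $\delset, \delorder, \insorder, A, A\cap B$, and $B\setminus A$, reaching essentially the full entropy of the tuple. The savings arise from two complementary structural facts that consistency implies. First, since $\staa(\seta \cap \setb)$ and $\stab(\seta \cap \setb)$ agree up to at most $\gamma\wid$ disagreements, the cells in $\seta \cap \setb$ do not distinguish the $A$-world from the $B$-world; on those cells, roughly $|\seta \cap \setb| \cdot w$ bits of state are ``wasted.'' Second, both segments delete the same $\delset$, whose conditional entropy given $\keyset$ is $\Theta(\wid \log(n/\wid))$, far exceeding $\Redun$ under the premise $\wid \ge n^{1-\gamma/2}$; since the data structure cannot afford to leave the physical storage of $\delset$ in place during the segment, $\seta$ (and symmetrically $\setb$) must touch the cells holding $\delset$'s information, forcing $|\seta \cap \setb|$ to be of order $\wid$ on most instances.

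Concretely, Alice simulates the $A$-segment to determine $(\staa, \seta)$---Bob, who is given $(\stabef, \keyset, \delset, \delorder, \insorder, A)$, can do so as well. To transmit $B$ indirectly, she sends $\setb$ explicitly, the contents $\stab(\setb \setminus \seta)$ in full, and a ``patch'' relative to $\staa$ on $\seta \cap \setb$ consisting of the at most $\gamma\wid$ disagreement positions together with their new cell values. Bob reconstructs $\stab$ (filling cells outside $\setb$ from $\stabef$) and recovers $B$ by reading the final key set off $\stab$ via queries and subtracting $\keyset \setminus \delset$. The accounting then compares the savings---obtained because ``wasted'' cells in $\seta \cap \setb$ are not charged $w$ bits each---against the overheads of encoding $\seta, \setb$, the disagreement positions, and the tail event $|\seta| > \wid t$ (handled by Markov using $\E[|\seta|] \le \wid t$). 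The premise $\tfrac{1}{2}\gamma\wid\log U \ge \Redun + 5\wid + 2(\beta+1)\wid t$ is calibrated so the overhead stays within half the savings, yielding net compression of at least $\tfrac{1}{2}\gamma\wid\log U$ bits; the constant $\beta$ is reserved to absorb low-order factors such as $\log\binom{N}{|\seta|}$-type terms.

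The main obstacle is making the compression survive in the regime where $t$ is as large as the premise allows---up to $t = \Theta(\gamma \log U / (\beta + 1))$. The key is the ``patch against $\staa$'' step, which replaces the $w$-bits-per-cell cost on the large set $\seta \cap \setb$ by the $O(\gamma\wid(w + \log N))$ total cost of listing just the $\le \gamma\wid$ disagreement positions and their new values, thereby reclaiming the savings whenever $|\seta \cap \setb| = \Omega(\wid)$ (the regime enforced by the deletion argument). A secondary subtlety is the generalization from $g = 0$ to $g < \wid/2$: the threshold ensures that $(\keyset \setminus \delset) \cup A$ and $(\keyset \setminus \delset) \cup B$ differ by enough elements that consistency is genuinely restrictive, and the overlap $A \cap B$ is encoded separately in $\log\binom{\wid}{g}$ bits outside the main structural argument. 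Once the expected encoding length is established, the probability bound $U^{-\gamma\wid}$ follows from the counting argument outlined in the first paragraph.
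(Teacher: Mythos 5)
There is a genuine gap in the accounting of your encoding of $\insetb$, and it is fatal precisely in the regime the lemma must cover. Your decoder is handed $(\stabef,\keyset,\delset,\inseta)$ and recovers $\insetb$ from a description consisting of $\setb$, the full contents $\stab(\setb\setminus\seta)$, and a patch on $\seta\cap\setb$. The term $\stab(\setb\setminus\seta)$ costs $|\setb\setminus\seta|\cdot w$ bits, and nothing bounds $|\setb\setminus\seta|$ by $O(\gamma\wid)$: the deletion argument (at best) forces $\seta\cap\setb$ to be \emph{large}, not $\setb\setminus\seta$ to be small, and $|\setb|$ can be of order $\wid t$ with $t$ as large as $\Theta(\gamma\log U/(\beta+1))$ under the stated premises. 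Since the total entropy of $\insetb$ given $\inseta,\keyset$ is only about $(\wid-g)\log U\le \wid\log U$, already for $t$ a moderate constant your description is \emph{longer} than the trivial encoding of $\insetb$, and no compression survives. Sending $\setb$ explicitly has the same flavor of problem ($\approx|\setb|\log(N/|\setb|)$ bits, not covered by the $2(\beta+1)\wid t$ budget). A second, independent problem is your counting step: fixing $(\stabef,\keyset,\delset,\insorder,\inseta)$ and bounding the number of consistent $\insetb$ cannot work per instance, because for a fixed instance consistency can be vacuous (e.g.\ whenever $\seta\cap\setb$ is empty or tiny, every $\insetb$ is consistent). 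The bound is inherently an average over the joint randomness of $(\keyset,\delset,\inseta,\insetb)$, so the encoding must cover the whole tuple.

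The paper's protocol is arranged specifically to dodge both issues: Alice never transmits $\setb$ or $\stab$ at all. She sends the \emph{keys} $\insetb$ at full price $\log\binom{U}{\wid}$, plus a single mixed memory state $\stamix$ (equal to $\stabef$ on $\setb$ and to $\staa$ elsewhere) costing $\log\binom{U}{n}+\Redun$ bits, with the partition information compressed to $O(\wid t)$ bits by a Bloomier filter. Bob then recovers $\delset$ by running the query operations on $\stabef(\setb^*)$ against all candidates (an index among the $\le\binom{n+\wid}{\wid}$ qualified ones costs only $\approx\frac{\gamma}{2}\wid\log U$ bits), \emph{simulates} the $\insetb$-segment to obtain $\setb$ and $\stab$ for free, and reconstructs $\staa$ from the $\le\gamma\wid$-cell patch on $\seta\cap\setb$; every set ever transmitted is a subset of an already-known set and costs one bit per element. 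The net gain is that $\keyset$, $\delset$ and $\inseta$ — whose joint conditional entropy includes the $\approx 3\gamma\wid\log U$ bits of $\inseta$ — are recovered for roughly $\Redun+O(\wid t)+\gamma\wid\log U$ extra bits, i.e.\ the compression comes from getting the \emph{second} insertion set and the deletion set almost for free out of one memory state, not from shaving bits off the transmission of $\insetb$ given $\inseta$. To repair your argument you would have to eliminate the explicit transmission of $\stab(\setb\setminus\seta)$ and of $\setb$, and move the savings onto the $(\keyset,\delset,\inseta)$ side, which essentially is the paper's construction.
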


Next, we introduce a new communication game to prove this lemma.

\subsection{Inner Game}\label{sec:inner_game_protocol}

Sender Alice and receiver Bob are playing a communication game based on the random process above. The public information known by them are the fixed parameters $g,\wid,n,U$. In addition, Alice knows the initial state $\stabef$ and sequences $\insas,\insbs,\dels$; Bob only knows permutations $\insorder$ and $\delorder$. Now Alice wants to send Bob $A, B, D$ and the initial key set $\keyset$ when $\inseta$ and $\insetb$ are consistent. Since Bob knows the two permutations, after learning $A, B, D$, he knows the complete sequences $\insas, \insbs, \dels$. To achieve this goal, they follow the protocol introduced below, which makes use of the following data structure from \cite{chazelle2004bloomier}.

\begin{theorem}[Bloomier Filter \cite{chazelle2004bloomier}]
  \label{thm:bloomier}
  Let $A, B$ be two given disjoint subsets of $[U]$. There is a set $S$ separating $A, B$, i.e., $A \subseteq S$ and $B \cap S = \emptyset$, that we can encode within $O(|A| + |B|)$ bits in expectation. We denote the leading factor within the big-$O$ notation by $\beta$, which is a global fixed constant that appeared in \cref{lm:outer,lm:inner}. (Formally, the expected encoding length is at most $\beta(|A| + |B|)$ for all $|A| + |B| > 0$.)
\end{theorem}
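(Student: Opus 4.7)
The plan is to invoke the Bloomier-filter construction of~\cite{chazelle2004bloomier} and briefly sketch its idea, since the stated bound is essentially a specialization of their main theorem to the ``separator'' setting. Concretely, I would represent the separating set $S$ implicitly as $S = f^{-1}(1)$ for a function $f : [U] \to \{0,1\}$ built from $k = O(1)$ hash functions $h_1, \ldots, h_k : [U] \to [\widetilde m]$ with $\widetilde m = \Theta(|A|+|B|)$ and an auxiliary table $T \in \{0,1\}^{\widetilde m}$, via $f(x) := \bigoplus_{j=1}^{k} T[h_j(x)]$. The conditions $f|_A \equiv 1$ and $f|_B \equiv 0$ then translate into a sparse $\mathbb F_2$-linear system of $|A|+|B|$ equations in the $\widetilde m$ unknown entries of $T$.

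The heart of the argument is the combinatorial lemma of~\cite{chazelle2004bloomier}: for a suitable constant $c$, once $\widetilde m \geq c(|A|+|B|)$ and the $h_j$'s are drawn from a sufficiently independent family, the induced $k$-uniform constraint hypergraph is ``peelable'' with probability bounded away from zero. Peelability yields a linear-time back-substitution algorithm that solves the system. I would then rejection-sample the hash seed until peelability succeeds; constant per-attempt success probability gives $O(1)$ expected attempts. The final encoding consists of the $O(\log U)$-bit hash seed together with the $\widetilde m$-bit table $T$, summing to $\beta(|A|+|B|)$ bits in expectation for an absolute constant $\beta$ determined by the peeling threshold and the bit-width of $T$'s entries. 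The only subtlety---guaranteeing this bound for \emph{every} nontrivial input size $|A|+|B|\ge 1$, including the regime where the $O(\log U)$ seed overhead looks comparable to $|A|+|B|$---is handled in~\cite{chazelle2004bloomier} by sharing a single global hash seed across problem instances (or by an explicit encoding in the small-input case), so no new idea is required beyond their construction. The main obstacle is really just verifying that the expected-length guarantee is robust under the worst-case choice of $(A,B)$, which in turn reduces to the standard concentration analysis of peelability for random $k$-uniform hypergraphs.
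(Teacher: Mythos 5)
The paper does not prove this statement itself --- it imports it directly from~\cite{chazelle2004bloomier} --- and your proposal does the same, supplementing the citation with an accurate sketch of the standard construction (XOR of $O(1)$ hashed table cells, peelability of the induced sparse $\mathbb{F}_2$-system at linear table size, rejection sampling of the seed, and the shared-seed fix for the small-input regime). This matches the paper's approach, and your sketch is correct.
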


The protocol is as follows:

\begin{enumerate}
\item Check consistency. First, Alice determines an indicator variable $\indin \defeq \indicator{A \textup{ and } B \textup{ are consistent}}$ and sends it to Bob. If $\inseta$ and $\insetb$ are not consistent, the game is terminated immediately. Otherwise, they continue with the subsequent steps.
  
\item Send $B$. Alice directly sends $\insetb$ using $\log\binom{U}{\wid}$ bits. Since Bob knows $\insorder$, he learns $\insbs$ after receiving the message.

\item Send cell contents. Alice sends an artificial memory state $\stamix$, which is a combination of $\stabef$ and $\staa$. For the cells in $\setb$, $\stamix\bk{\setb} = \stabef(\setb)$; for the cells outside $\setb$, $\stamix(\setnotb) = \staa(\setnotb)$. Sending this state takes $Nw = \log \binom{U}{n} + \Redun$ bits.
  
  To help Bob use this mixed state correctly, Alice also needs to tell Bob which cells in $\stamix$ come from $\stabef$ and which cells come from $\staa$. The most trivial way of sending this ``partition information'' is to send one bit for each cell to indicate where it comes from, which takes a total of $O(n)$ bits. Unfortunately, this trivial way does not suffice for all the cases we concern,\footnote{Actually, it is enough to use this $O(n)$-bit approach to prove \cref{thm:main} where we only use \cref{lm:inner_re} for $m\ge \frac{n}{\log \log n}$. The following improved approach is needed in the later discussion about an extension of \cref{thm:main} (i.e., \cref{thm:ext}), where we need to use \cref{lm:inner_re} for much smaller $m$ where an $O(n)$ cost is unaffordable.} and we use a better approach as follows.

  Notice that $\stabef(\setnotab) = \staa(\setnotab)$. For cells in $\setnotab$, it is correct to treat them as coming from either $\stabef$ or $\staa$. Hence Alice only needs to send \emph{some} partition $\bk{\setb^*; \setnotbstar}$ that agrees with the true partition $\bk{\setb; \setnotb}$ in $\seta \cup \setb$ (i.e., $\setb \subseteq \setb^*$ and $\setab \subseteq \setnotbstar$).

  Fortunately, the Bloomier Filter stated in \cref{thm:bloomier} solves this task. It can encode such a partition $\bk{\setb^*; \setnotbstar}$ using $\beta |\seta \cup \setb|$ bits in expectation for a global constant $\beta > 0$. Alice directly sends the Bloomier Filter to Bob, which takes $\beta |\seta \cup \setb| \le 2 \beta \wid t$ bits in expectation, since we have assumed $\E[|\seta|] = \E[|\setb|]\le\wid t$. Then Bob learns the partition $\bk{\setb^*; \setnotbstar}$; in particular, Bob now knows $\stabef(\setb^*)$.

\item\label{step:inner_deleted_set} Do simulation and recover $D$. Bob runs the following test for all possible choices of $\delset$ on $\stabef(\setb^*)$. He enumerates all $\wid$-sized subsets $\delset^* \subset [U]$, and permutes it according to $\delorder$ to get the deletion sequence $\dels^*$. He then performs $\wid$ meta-operations from $\stabef$ of the form ``\Query{$\deli^*$} -- \Delete{$\deli^*$} -- \Insert{$\insbi$}'', pretending that $\delset^*$ is the true deletion set. During the test, once a cell outside $\setb^*$ is probed, or any \Query returns ``not exist'', the test terminates and $\delset^*$ \emph{fails the test}. If no such exception occurs during the whole simulation, we say $\delset^*$ \emph{passes the test}.
  
  It is clear that the correct $\delset$ will pass the test. In fact, to pass the test, $\delset^*$ must satisfy $\delset^*\subseteq \keyset\cup\insetb$, as the data structure will not return ``exist'' for any key it does not store. Let $\guess=\{\delset^* : \delset^*\textup{ passes the test}\}$, then $|\guess|\le \binom{|\keyset\cup\insetb|}{\wid} = \binom{n+\wid}{\wid}$. So Alice can send Bob an index in $\guess$ within $\log \binom{n+\wid}{\wid}$ bits, to tell him the index of $\delset$ in $\guess$. After receiving the message, Bob recovers $\delset$ and thus knows $\dels$.

\item Recover $\setb$ and $\seta \cap \setb$. After knowing $\dels$ and $\insbs$, Bob can simulate the correct operation sequence again, recording which cells are probed during the process, namely $\setb$. Then, Alice sends the subset $\seta \cap \setb \subseteq \setb$ to Bob \emph{given} $\setb$. It takes at most $\E[|\setb|] \le \wid t$ bits in expectation.
  
\item Recover $\inseta$ and $\keyset$.
  Alice sends Bob the cells where $\staa(\seta\cap\setb)$ and $\stab(\seta\cap\setb)$ are different, including both the address information (i.e., the set of indices of these cells) and their cell contents. The former costs $\E\Bk{|\setb|} \le \wid t$ bits in expectation, as it is a subset of $\setb$. For the latter, since $\inseta$ and $\insetb$ are consistent, the number of such cells will not exceed $\gamma\wid$ by definition. Thus, we need totally $\wid t + \gamma \wid w = \wid t + \gamma\wid\log U$ bits to send the message (we have assumed $w=\log U$).

  After that, Bob can use this message to obtain $\staa$:
  \begin{itemize}
  \item In the cells $\seta \cap \setb$, $\staa$ and $\stab$ have limited difference, which are already sent to Bob, so Bob can edit $\stab(\seta\cap\setb)$ to obtain $\staa(\seta\cap\setb)$.
  \item The cells $\setnota$ are not probed before reaching $\staa$, so in these cells we have $\staa(\setnota) = \stabef(\setnota)=\stamix(\setnota)$.
  \item Moreover, $\staa(\setab)=\stamix(\setab)$.
  \end{itemize}
  Specifically, the last two parts imply that $\staa(\setnotacapb) = \stamix(\setnotacapb)$.
  Bob learns $\staa$ by merging it with the first part $\staa(\seta \cap \setb)$. Note that Bob does not know $\seta$ (yet); he only knows $\setb$ and $\seta \cap \setb$. But $\seta$ is not needed for this merging process.

  At this point, Bob learns the whole state $\staa$, and hence knows its key set $(K \setminus D) \cup A$. The final step is to send a $\log\binom{n}{\wid}$-bit message specifying $A$ given $(K \setminus D) \cup A$. Then Bob can infer $A$ and $K$ respectively.
\end{enumerate}

  Thus, Bob recovers the four sets $A,B,D,K$ in this whole process when $A$ and $B$ are consistent ($W=1$).
Denote the message Alice sends (including $\indin$) by $\msgfrom$, and denote the information Bob learns about $\inseta, \insetb, \delset, \keyset$ by $\msgto$. Here $\msgfrom$ and $\msgto$ can be seen as random variables about the random process. Since $\msgto$ can be inferred from $\msgfrom$ under any condition, we have
\begin{align*}
  H(\msgfrom \mid \indin = 1) \ge H(\msgto \mid \indin = 1). \numberthis \label{eq:msg_ent_cmp}
\end{align*}
In the next subsection, we will analyze the entropy of $\msgfrom$ and $\msgto$ according to each step of communication.

\subsection{Entropy Calculation}

For $\msgfrom$, we add the message sent in each step together:
\[
  \begin{aligned}
  H(\msgfrom \mid \indin=1)
	\le& \log\binom{U}{\wid} + \bk{\log\binom{U}{n} + \Redun} + 2\beta \wid t + \log\binom{n+\wid}{\wid} \\ &+ \wid t +\bk{\wid t + \gamma \wid \log U} + \log\binom{n}{\wid},
  \end{aligned}
  \numberthis\label{eq:entropy_ma}
\]
where the terms on the RHS correspond to the cost of sending $\insetb$, $\stamix$, the Bloomier Filter, the index of $\delset$, $\seta \cap \setb$ conditioned on $\setb$, the difference between $\staa(\seta \cap \setb)$ and $\stab(\seta \cap \setb)$, and $\inseta$ conditioned on $(K \setminus D) \cup A$, respectively.

All information Bob learns is $\msgto=(\inseta,\insetb,\delset,\keyset)$, so we have
\begin{align*}
	\numberthis\label{eq:entropy_mb}
	H(\msgto\mid\indin=1)
	&\ge\log\left(1 \Big/ \!
	\max_{\inseta^*,\insetb^*,\delset^*,\keyset^*} \, \Pr_{\inseta,\insetb,\delset,\keyset}
	\bigBk{
		\inseta \!=\! \inseta^*,
		\insetb \!=\! \insetb^*,
		\delset \!=\! \delset^*,
		\keyset \!=\! \keyset^*
		\mid\indin=1}
	\right).
\end{align*}
We can see that for any $\inseta^*,\insetb^*,\delset^*,\keyset^*$, the probability term can be rewritten as
\begin{align*}
	&\Pr[\inseta=\inseta^*,\insetb=\insetb^*,\delset=\delset^*,\keyset=\keyset^*\mid \indin=1]\\
	&\le\frac{\Pr[\inseta=\inseta^*,\insetb=\insetb^*,\delset=\delset^*,\keyset=\keyset^*]}{\Pr[\indin=1]}\\\numberthis\label{eq:probterm}
	&=\left(\Pr[\indin=1] \cdot \binom{U}{n}\cdot \binom{n}{\wid}\cdot \binom{U-n}{\wid}\cdot \binom{\wid}{g}\cdot \binom{U-n-\wid}{\wid-g}\right)^{-1},
\end{align*}
where the binomial coefficients in the last line represent the number of ways to choose $\keyset$, $(\delset \mid \keyset)$, $(\inseta \mid \keyset)$, $(\inseta \cap \insetb \mid \inseta)$, and $(\insetb \setminus \inseta \mid \keyset, \inseta)$, respectively.

Substituting \eqref{eq:probterm} into \eqref{eq:entropy_mb}, we get
\begin{align*}
	&H(\msgto\mid \indin=1)\\
	&\ge \log \Pr[\indin=1]+\log \binom{U}{n}+\log \binom{n}{\wid}+\log \binom{U-n}{\wid}+\log \binom{\wid}{g}+\log \binom{U-n-\wid}{\wid-g}.
\end{align*}
Again, substituting this bound and \eqref{eq:entropy_ma} into \eqref{eq:msg_ent_cmp} gives
\begin{align*}
  &\log \frac{1}{\Pr[\indin=1]} \\
  \ge \;
  & \log \binom{U}{n} + \log \binom{n}{\wid} + \log \binom{U - n}{\wid} + \log \binom{\wid}{g} + \log \binom{U-n-\wid}{\wid-g} \\
  & - \log \binom{U}{\wid} - \log \binom{U}{n} - R - 2\wid(\beta + 1)t - \log \binom{n + \wid}{\wid} - \gamma \wid \log U - \log \binom{n}{\wid} \\
  = \;
  & \log \binom{\wid}{g} + \log \binom{U - n - \wid}{\wid - g} - \bk{\log \binom{U}{\wid} - \log \binom{U - n}{m}} \\
  & - \log \binom{n + \wid}{\wid} - R - 2\wid(\beta + 1) t - \gamma \wid \log U.
    \numberthis\label{eq:final_pr}
\end{align*}
We simplify the RHS with the following facts:
\begin{itemize}
\item $\log\binom{U}{\wid}-\log\binom{U-n}{\wid} \le \wid\log\frac{U-\wid}{U-n-\wid} \le \wid$ as long as $n$ is sufficiently large.
\item Similarly, $\log \binom{U - n - \wid}{\wid - g} \ge \log \binom{U}{\wid - g} - m$.
\item $\log\binom{n + \wid}{\wid} \le \wid \log \frac{e(n + m)}{\wid} \le \wid \log \frac{2en}{m} \le \wid\bk{3 + \log \frac{n}{\wid}}$; since we required $\wid\ge n^{1-\gamma/2}$, this term is at most $\frac{\gamma}{2}\wid \log n + 3\wid \le \frac{\gamma}{2}\wid \log U + 3 \wid$.
\item Recall that we have $g < \frac{\wid}{2}$, $\wid < \frac{U}{2}$, and $\gamma=\frac{\alpha}{6(1+\alpha)}$, so
  \begin{align*}
    \log \binom{U}{\wid-g} \ge \log \binom{U}{\wid / 2}
    \ge \frac{\wid}{2}\log \frac{U}{\wid}
    \ge\frac{\wid}{2}\log\frac{U}{U^{1/(1+\alpha)}}
    = \frac{m}{2} \bk{1 - \frac{1}{1 + \alpha}} \log U
    = 3\gamma \wid\log U,
  \end{align*}
  where the third inequality holds as $m \le n = U^{1 / (1 + \alpha)}$. Therefore, $\log \binom{U - n - \wid}{\wid - g} \ge \log \binom{U}{\wid - g} - m \ge 3 \gamma \wid \log U - \wid$.
  \item $\log\binom{\wid}{g} \ge 0$, we just omit it.
\end{itemize}
Therefore, \cref{eq:final_pr} can be rewritten as
\begin{align*}
  \log \frac{1}{\Pr[\indin = 1]} &\ge (3 \gamma \wid \log U - m) - m - \bk{\frac{\gamma}{2}m \log U + 3m} - R - 2\wid (\beta + 1) t - \gamma \wid \log U \\
  &\ge \frac{3}{2} \gamma \wid \log U - (5m + R + 2\wid (\beta + 1) t).
\end{align*}
Recall that we have $R + 5m + 2\wid (\beta + 1) t \le \frac{1}{2} \gamma \wid \log U$ as a premise of \cref{lm:inner_re}, so we have $\log (1/\Pr[\indin = 1]) \ge \gamma \wid \log U$.
Thus,
\[\Pr\left[\vphantom{\sum} A, B \textup{ are consistent} \;\middle|\; |A \cap B| = g\right] = \Pr[\indin = 1] \le 2^{-\gamma \wid \log U} = U^{-\gamma \wid},\]
which concludes the proof of \cref{lm:inner_re}. \cref{lm:inner} is a direct corollary of \cref{lm:inner_re}, as we only need to apply the law of total probability over $g < \wid/2$.

\section{Extended Lower Bound}
\label{sec:extend}

In the previous sections, we have proved a space-time lower bound for dynamic dictionaries: If the data structure has redundancy at most $n \log^{(k)} n$ (i.e., incurs at most $\log^{(k)} n$ wasted bits per key), its expected amortized time complexity is at least $\Omega(k)$. In this section, we extend this result to dictionaries with sublinear redundancy $\Redun < n$. The result is stated as the following theorem, rephrasing~\cref{thm_main_2} with respect to \cref{alg_hard_dist}.
\begin{theorem}
  \label{thm:ext}
  For any dynamic dictionary with redundancy $\Redun \le n$, running it on \cref{alg_hard_dist} takes at least $\Omega(\log (n / \Redun))$ expected time per operation.
\end{theorem}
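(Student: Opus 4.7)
The plan is to run the same framework as the proof of \cref{thm:main}, but replace the iteratively-growing branching factors by a single \emph{constant} branching factor $\lambda = 2^{64}$. Set $m_1 \defeq \max\{200\Redun,\,\lceil n^{1-\gamma/2}\rceil\}$, $m_\ell \defeq m_1 \lambda^{\ell-1}$, and terminate the tree at the first level $h$ with $m_h \ge n$. When $\Redun \ge n^{1-\gamma/2}$, we have $m_1 = 200\Redun$ and so $h = \Theta(\log(n/\Redun))$ directly. When $\Redun < n^{1-\gamma/2}$, we have $m_1 = \lceil n^{1-\gamma/2}\rceil$ and $h = \Theta(\log n)$; since $\log(n/\Redun) \le \log n$ always, $h = \Omega(\log(n/\Redun))$ still holds. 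In either regime the tree has height $\Omega(\log(n/\Redun))$, exactly what the target bound requires.

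The main technical step is to verify the four hypotheses of \cref{lm:outer} at every non-root, non-leaf level of this tree. The range $2^{64} \le \lambda \le \frac{\alpha}{12}\log n$ is immediate for sufficiently large $n$. The inequality $m_\ell \log \lambda \ge 100\Redun$ follows from $\log \lambda = 64$ and $m_\ell \ge m_1 \ge 200\Redun$. The constraint $m_\ell \ge n^{1-\gamma/2}$ holds by the second argument of the $\max$ in our choice of $m_1$. The space inequality $\frac{1}{2}\gamma m_\ell \log U \ge \Redun + 5 m_\ell + 2(\beta+1) m_\ell t$ holds once $t = \gamma \log U / (16(\beta+1))$, say, is chosen, because then the RHS is $O(\Redun + m_\ell \log U)$ with a small constant, using $\Redun \le m_\ell$ and $U = n^{1+\Theta(1)}$. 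Once these hypotheses are verified, the two-case analysis from the proof of \cref{thm:main} applies: either some level has at least half its nodes satisfying Prop 2, in which case the expected total probe count is $\Omega(n t) = \Omega(n \log U) = \Omega(n \log(n/\Redun))$; or every level has half its nodes satisfying Prop 1, in which case summing $\Omega(m_\ell) \cdot (n/m_\ell) = \Omega(n)$ over the $h = \Omega(\log(n/\Redun))$ levels again yields $\Omega(n \log(n/\Redun))$ total probes. Dividing by $n$ operations produces the theorem.

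The hardest part, conceptually, is that the small values of $m_\ell$ used here stress the inner-game bookkeeping much more than in \cref{thm:main}. In particular, the Bloomier-filter encoding of the partition $(\setb^*;\setnotbstar)$ costing only $\beta|\seta\cup\setb|$ expected bits, rather than a naive $O(n)$ bits, is essential: for $m_\ell \approx n^{1-\gamma/2}$ an $O(n)$-bit term in Alice's message would completely swamp the $\tfrac{3}{2}\gamma m_\ell \log U$ savings that drive the contradiction. Fortunately, the proof of \cref{lm:inner} already uses this improved encoding (as flagged in the footnote there), so the inner-game analysis carries over without modification; the only remaining bookkeeping is checking that the hypotheses $\frac{1}{2}\gamma m \log U \ge \Redun + 5 m + 2(\beta+1) m t$ and $m \ge n^{1-\gamma/2}$ of \cref{lm:inner_re} are inherited from the Outer-Lemma hypotheses verified above, which follows immediately from the same calculation.
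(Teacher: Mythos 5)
Your proposal is correct and follows essentially the same route as the paper's own proof: a tree with constant branching factor $2^{64}$ above level $1$, level-$1$ width $m_1=\max(\Redun,n^{1-\gamma/2})$ (your extra factor $200$ and the slightly different choice of $t$ are harmless), verification of the four premises of \cref{lm:outer} at levels $2,\dots,h$, and the same two-case conclusion. Your remark that the Bloomier-filter encoding in the inner game is what makes the small widths $m_\ell\approx n^{1-\gamma/2}$ affordable matches the paper's own footnote on this point.
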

\begin{proof}
  The proof is similar to \cref{sec:hard_instance}, based on a tree structure, but we will use different parameters.

  Same as before, there are $n$ (level-0) leaf nodes on the tree, each corresponds to a single meta-operation. Then, we let each level-1 node represent an interval of $m_1 \defeq \max(R, n^{1 - \gamma / 2})$ consecutive meta-operations, i.e., it is a parent of $m_1$ leaf nodes.\footnote{For simplicity, we assume that all parameters are integers, as is the case throughout this paragraph.} Beyond level 1, we fix $\branch = 2^{64}$ as a large constant and build a $\branch$-ary tree, letting every node (except the root) have exactly $\branch$ children. Finally, there is a single root node at level $h = \log_{\branch}(n / m_1)= \Omega(\log(n/R))$.\footnote{If $m_1 = R$, it is clear that $h = \log_{\branch} (n / R) = \Theta(\log (n/R))$; if $m_1 = n^{1 - \gamma/2}$, we also have $h = \log_{\branch} n^{\gamma / 2} = \Theta(\log n) = \Omega(\log (n/R))$.}

  Recall that $\widl$ represents the number of operations within each level-$\lv$ node. For $1 \le \lv < h$, we have $\widl = m_1 \branch^{\l - 1} \ge \max(R, n^{1 - \gamma / 2})$.

  We apply \cref{lm:outer} on every node in level $\lv \in \left[ 2, \, h \right]$. First, we verify the remaining premises in \cref{lm:outer}:
  \begin{itemize}
  \item $2^{64} \le \branch \le \alpha \log n / 12$. This condition is satisfied as long as $n$ is sufficiently large.
  \item $\widl\log\branch \ge 100\Redun$. For every $\lv \ge 2$, the width $\widl$ satisfies $\widl \log \branch > \branch m_1 > 100R$.
  \item $\frac{1}{2} \gamma \widl \log U \ge \Redun + 5\widl + 2(\beta + 1) \widl t$. We still set $t=\frac{\gamma}{8(\beta + 1)} \log U$, so we only need to show $\frac{1}{4} \gamma \widl \log U \ge \Redun + 5\widl$. When $n$ is sufficiently large, there is
  \begin{align*}
  	\gamma\widl\log U
  	\ge 24\widl
  	\ge 4(\Redun + 5\widl).
  \end{align*}
  \item $\widl\ge n^{1-\gamma/2}$ holds due to the value of $m_1$.
  \end{itemize}
  Therefore, for the levels $\lv \in \left[2, \, h\right]$, we can apply \cref{lm:outer} on every node. Finally, similar to the final step of \cref{thm:main}, we can consider the following two cases to finish the proof:
  \begin{itemize}
  \item If there exists a level $\lv$ in which at least half of the nodes satisfy $\E[\probeu]\ge \frac{1}{16}\widl t$, then we bound the expected time by $\frac{1}{2} \cdot \frac{n}{\widl} \cdot \frac{1}{16}\widl t = \frac{\gamma}{256(\beta + 1)} \cdot n\log U$.  As the coefficient is a constant and $\log U \ge \log (n/\Redun)$, the expected time per operation is at least $\Omega(\log (n/\Redun))$.
  \item Otherwise, for every level $\lv$, at least half of the nodes in level $\lv$ satisfy $\E[\costu]\ge \frac{\gamma}{100}\widl$. Taking summation of all $\E[\costu]$, the total time cost is at least $\sum_{\lv = 2}^{h} \frac{\gamma}{100} \widl \cdot \frac{1}{2} \cdot \frac{n}{\widl} = \frac{\gamma}{200} n(h - 1)$ in expectation, so the expected time cost per operation is at least $\Omega(h) = \Omega(\log(n/\Redun))$. \qedhere
  \end{itemize}
\end{proof}

\section{Key-Value Reduction}
\label{sec:reduction}

\newcommand{\Ckv}{D_{\textup{kv}}}
\newcommand{\Ckonly}{D_{\textup{k-only}}}

So far we have only considered the lower bound for key-only dictionaries that store $n$ different keys in the universe $[U]$. In this section, we will extend this result to dictionaries with values associated with keys:

\begin{definition}
  A \emph{key-value dictionary}, denoted by $\Ckv$, is a dictionary that stores $n$ key-value pairs $(k,v)\in [U]\times[V]$, where all keys are distinct. $\Ckv$ supports insertions and deletions of key-value pairs. Moreover, when querying some key $k$, it returns whether $k$ is present, together with the corresponding value $v$ of $k$ if $k$ is in $\Ckv$.
\end{definition}

In this definition, we call $[U]$ the key universe and $[V]$ the value universe. Without loss of generality, we assume that the word-size $w$ satisfies $w=\log U+\log V$, so we can store both the key and the value in a single word.

The memory usage of $\Ckv$ can be represented by $\log\binom{U}{n}+n\log V+R$, where the first two terms are for succinct storage of $n$ keys and $n$ values, and $R$ is the redundancy of $\Ckv$. Below we will derive a reduction from $\Ckv$ to some key-only dictionary denoted by $\Ckonly$, and prove that we can use $\Ckv$ to simulate $\Ckonly$ with little additional time and space consumption. This theorem implies a time-space lower bound similar to the key-only version.

\begin{theorem}
  \label{thm:key_value_reduction}
  Assume $U=n^{1+\alpha}$ where $\alpha$ is a constant. For sufficiently large $n$, if we have a key-value dictionary $\Ckv$ with redundancy $R=\Omega(n^{1-\alpha / 2})$ and running time $T$, then setting $U'=U\times V$, we can construct a key-only dictionary $\Ckonly$ on universe $[U']$, which can solve Distribution~\ref{alg_hard_dist} with redundancy $R + o(R)$ and running time $T + O(1)$ in expectation.
\end{theorem}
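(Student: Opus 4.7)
The plan is a reduction that identifies $[U']=[UV]$ with $[U]\times[V]$ through a canonical bijection $x\leftrightarrow(k,v)$ and stores each simulated element $(k,v)\in\Ckonly$ inside $\Ckv$ using $k$ as the key and $v$ as the associated value. The obstacle is that under \cref{alg_hard_dist} on $[U']$ several resident elements can share the same first coordinate $k$, whereas $\Ckv$ requires distinct keys. I would resolve this by maintaining, alongside $\Ckv$, a small auxiliary dictionary $\mathcal{E}$ of ``extras''---each extra is a pair $(k,v)$ whose first coordinate $k$ already has a representative in $\Ckv$. A standard dynamic chaining hash table keyed by $k$ suffices for $\mathcal{E}$, supporting insertion, deletion of a specified $(k,v)$, and retrieval of any pair with a given $k$ in $O(1)$ expected time.

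With this two-level structure, $\Ckonly$ operations route through $\Ckv$ first and $\mathcal{E}$ as needed. Insertions go into $\Ckv$ if the key is absent and otherwise into $\mathcal{E}$; deletions remove the element from whichever structure currently stores it, promoting an extra from $\mathcal{E}$ back into $\Ckv$ when a $\Ckv$ slot is vacated; queries look up $k$ in $\Ckv$ and fall through to $\mathcal{E}$ on value mismatch. Correctness follows from the invariant ``stored set equals the union of pairs in $\Ckv$ and pairs in $\mathcal{E}$''. Under \cref{alg_hard_dist} the probability that a given operation encounters a collision on its first coordinate is only $O(n/U)=O(n^{-\alpha})$, so its expected number of $\Ckv$ calls is $1+o(1)$ and its expected number of $\mathcal{E}$ calls is $O(1)$.

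For the space accounting, a direct second-order expansion of binomial coefficients yields $\log\binom{UV}{n}-\log\binom{U}{n}-n\log V=\Theta(n^2/U)=\Theta(n^{1-\alpha})$. Thus $\Ckv$'s footprint of $\log\binom{U}{n}+n\log V+R$ bits equals $\log\binom{U'}{n}+R-\Theta(n^{1-\alpha})$, so the $\log\binom{U'}{n}+R+o(R)$ budget allocated to $\Ckonly$ leaves $\Theta(n^{1-\alpha})+o(R)=o(R)$ bits of slack (using $R=\Omega(n^{1-\alpha/2})$ to dominate $n^{1-\alpha}$). One can afford, say, $\Theta(R/\log n)$ bits for $\mathcal{E}$, holding $\omega(n^{1-\alpha})$ extras. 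A Chebyshev-style argument on the pairwise-collision count shows that the simultaneously resident extras exceed this capacity only with inverse-polynomial probability, and those rare overflows can be absorbed by a trivial fallback, e.g.\ a plain $O(n\log U')$-bit representation of the current set, contributing $o(1)$ to expected amortized cost.

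The main subtlety I anticipate is verifying that the workload forwarded to $\Ckv$ indeed incurs expected time $T$ as assumed. The operations passed to $\Ckv$ form a subsequence of the original $[U']$-operations whose first coordinates are essentially uniform in $[U]$ and whose values are essentially uniform in $[V]$, matching the natural key-value analogue of \cref{alg_hard_dist} up to the minor rerouting introduced by collisions; a coupling between the induced distribution on $\Ckv$'s workload and the canonical hard distribution for key-value dictionaries should show that $\Ckv$'s expected per-operation cost is $T(1+o(1))$, which is $T+O(1)$ in the regime of interest where $T$ is at most polylogarithmic. Combining this with the $O(1)$ auxiliary overhead then yields the claimed $T+O(1)$ expected amortized time and $R+o(R)$ redundancy.
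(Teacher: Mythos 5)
Your proposal is correct and follows essentially the same route as the paper: identify $[U']$ with $[U]\times[V]$, simulate via $\Ckv$, and absorb the $O(n^2/U)=O(n^{1-\alpha})$ expected first-coordinate collisions in a small auxiliary hash table whose $O(n^{1-\alpha}w)=o(R)$ footprint is covered by the slack, with $O(1)$ expected extra time per operation. The paper states this more tersely (it does not spell out the binomial-coefficient accounting or the overflow fallback), but the underlying argument is the same.
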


\begin{proof}
  The intuition is to split a key $x\in [U']$ in $\Ckonly$ into two parts $(k,v)$ with length $\log U$ and $\log V$ bits, respectively. We regard the two parts as a key-value pair, and transform the operation on $\Ckonly$ to an operation on $\Ckv$. Roughly speaking, an insertion/deletion of $x$ in $\Ckonly$ is turned into an insertion/deletion of $(k,v)$ in $\Ckv$, and a query of $x$ in $\Ckonly$ will return ``true'' if and only if $k$ is in $\Ckv$ while the corresponding value equals $v$.

  If no key collisions occur, i.e., all pairs $(k, v)$ from the transformation have different $k$, we can directly simulate $\Ckonly$ with $\Ckv$ to complete all operations. In fact, among all $2n$ keys inserted in Distribution~\ref{alg_hard_dist}, the expected number of collisions is only $O(n^2 / U) = O(n^{1-\alpha})$; we maintain a separate compact hash table to store those collided keys, which takes additional $O(n^{1-\alpha} w) = o(n^{1-\alpha/2}) = o(R)$ space and $O(1)$ update/query time.
\end{proof}

When $R = \Theta(n^{1 - \alpha/2})$, the above theorem combined with \cref{thm:ext} already gives the best possible time lower bound $T \ge \Omega(\log (n/R)) = \Omega(\log n)$, which also applies for $R = o(n^{1 - \alpha / 2})$. Therefore, we deduce that the time-space trade-off of key-value dictionaries is not weaker than that of key-only dictionaries. Thus both \cref{thm:main} and \cref{thm:ext} can be applied to show
\begin{itemize}
\item $T \ge \Omega(k)$ when $R = O(n \log^{(k)} n)$;
\item $T \ge \Omega(\log (n/R))$ for all $R \le O(n)$.
\end{itemize}

\section{Update-Only Lower Bound}
\label{sec:no_query}

\newcommand{\va}{\vec v_{\!\smallsub \inseta}}
\newcommand{\vb}{\vec v_{\!\smallsub \insetb}}
\newcommand{\vab}{\vec v_{\!\smallsub \inseta \cup \insetb}}
\newcommand{\vd}{\vec v_{\!\smallsub \delset}}
\newcommand{\vk}{\vec v_{\!\smallsub \keyset}}
\newcommand{\vkd}{\vec v_{\!\smallsub \keyset \setminus \delset}}
\newcommand{\vset}[1]{\vec v_{\!\smallsub #1}}
\newcommand{\vktall}{\vec v_{\!\smallsub \keyset}^{\vphantom{*}}}
\newcommand{\vabtall}{\vec v_{\!\smallsub \inseta \cup \insetb}^{\vphantom{*}}}

In this section, we extend the lower bound from the previous sections by relaxing the restriction on query time and focusing only on the trade-off between space and update time. Specifically, we prove that if the keys are associated with long values, the updates must follow the same time-space lower bound even if the queries are allowed to take arbitrarily long time.

First, note that although \cref{alg_hard_dist} combines a query, a deletion, and an insertion into every meta-operation, the existence of the query is only used once in the inner game. In Step~\ref{step:inner_deleted_set} of the inner game protocol, Bob needs to learn the deleted set $\delset$ by performing queries on $\stabef(\setb^*)$. The inclusion of queries in the meta-operations ensures that the cell set $\setb^*$ is sufficient to answer queries for the true set $\delset$, allowing $\delset$ to pass the test.

This observation suggests that, if we can modify the protocol of the inner game to avoid using queries in Step~\ref{step:inner_deleted_set}, we can eliminate queries from the hard distribution and prove a trade-off between space and update time (without query-time requirements).
We demonstrate below that for the \emph{key-value dictionary} with a relatively long value length, we can make such a modification and prove an update-only lower bound.\footnote{However, the problem of proving an update-only lower bound for a key-only dictionary remains open and will be discussed in \cref{sec:discuss}.}

\UpdateOnly*

The proof of \cref{thm:update_only} is based on directly modifying the proof of \cref{thm_main}, instead of reductions as in \cref{sec:reduction}. It is worth noting that the combination of \cref{thm:key_value_reduction} and \cref{thm:update_only} covers all cases for key-value dictionaries: \cref{thm:key_value_reduction} works for the cases with long keys, i.e., $U=n^{1+\Theta(1)}$, while \cref{thm:update_only} covers the cases with short keys, i.e., $U=n^{1+o(1)}$ and $V=n^{\Theta(1)}$.

\begin{proof}
  The proof of this theorem follows the same framework as the proof of \cref{thm_main}. Similar to Distribution~\ref{alg_hard_dist}, the hard distribution used here is a sequence of $n$ meta-operations, each consisting of a deletion and an insertion only (no query). When we insert a key, its associated value is sampled uniformly at random from $[V]$. We build a tree on top of these $n$ meta-operations and assign each cell-probe to an internal node of the tree as before. We bound the total cost assigned to an internal node using the \emph{outer lemma}.

  The proof of the outer lemma remains the same as in \cref{lm:outer}, which makes use of a communication game (the \emph{outer game}). Before the game starts, we give both Alice and Bob the associated values to all keys. During the game, Alice sends a message to tell Bob each key is inserted in which segment, where the length of the message is analyzed to complete the proof-by-contradiction. The proof of the outer lemma relies solely on the randomness of the \emph{order} of inserted keys, so it works for short keys ($U = n^{1 + o(1)}$) as well.

  It remains to modify the inner lemma to avoid using query operations. The modified inner lemma is shown below. (Some of the constants are different from those in \cref{lm:inner}, but they are not essential to the application in the outer lemma.) Throughout the remainder of this section, we adopt the notations from \cref{sec:inner} unless otherwise specified.

  \begin{lemma}
    \label{lm:inner_no_query}
    For integers $U, V, n, \wid, g$ and real number $\gamma > 0$ satisfying $g < \wid / 2$, $\wid < n / 2$, $U \ge 3n$, $V\ge U^{2 + 5\gamma}/n^2$, assume the random variables $\stabef, K, \dels, \insorder, A, B$ are randomly sampled according to the procedure in \cref{sec:inner}, and assume the associated values of all keys are sampled independently and uniformly at random from $[V]$. Let $t > 0$ be a parameter such that the constraints $\frac{1}{2}\gamma\wid\log U \ge 2\wid t + 2\Redun + 4\wid$ and $\E[|\seta|]\le \wid t$ are satisfied, we must have
    \[ \Pr\!\BigBk{ \inseta \textup{ and } \insetb \textup{ are consistent} \;\Big|\; |\inseta\cap\insetb|=g} \le U^{-\gamma\wid}. \]
  \end{lemma}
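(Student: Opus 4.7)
The plan is to prove \cref{lm:inner_no_query} by rerunning the inner-game argument of \cref{sec:inner_game_protocol} with a single modification: eliminate the query-based recovery of $\delset$ in Step~\ref{step:inner_deleted_set} and replace it by an explicit encoding of $\delset$, paid for by the extra entropy carried by the long random values $\vset{K},\vset{A},\vset{B}$. The public information, each player's input, and the overall game structure remain as before, except that all keys' values (sampled independently uniformly from $[V]$) are now among the quantities Alice must convey to Bob.

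First I would keep Steps 1--3 of the original protocol intact: Alice sends the consistency indicator $\indin$, then the set $\insetb$ using $\log\binom{U}{\wid}$ bits, then $\vset{B}$ using $\wid\log V$ bits, then the mixed state $\stamix$ (with $\stamix(\setb)=\stabef(\setb)$ and $\stamix(\setnotb)=\staa(\setnotb)$) of cost $N w=\log\binom{U}{n}+n\log V+\Redun$, together with the partition $(\setb^*,\setnotbstar)$ of $\seta\cup\setb$. Step~\ref{step:inner_deleted_set} is the one that has to change: instead of enumerating candidate sets $\delset^*$ and testing them via queries, Alice simply encodes $\delset$ directly --- most crudely as a subset of $[U]$ at cost $\log\binom{U}{\wid}$, or more tightly as a subset of the $O(|\setb^*|)$ keys already visible to Bob inside $\stabef(\setb^*)$ (since every deleted key must have its cells in $\setb\subseteq\setb^*$). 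Once Bob has $\delset$, he simulates the ``delete-insert'' sequence from $\stabef(\setb^*)$ and obtains $\stab(\setb)$; the remaining Steps 5--6 of \cref{sec:inner_game_protocol}, which send $\seta\cap\setb$ inside $\setb$ and the $\le\gamma\wid$ differing cells between $\staa(\seta\cap\setb)$ and $\stab(\seta\cap\setb)$, go through verbatim. Finally Alice sends $\inseta$ as a subset of the key set of $\staa$ and any remaining value bits of $\vset{\inseta\setminus\insetb}$; from this Bob recovers $\keyset,\inseta,\insetb,\delset$ together with all their values.

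The analysis follows the same template as \cref{sec:inner_game_protocol}. Bob's learning entropy now gains $(n+2\wid-g)\log V$ bits of value information, while Alice's message carries at most $(n+\wid)\log V$ value bits explicitly ($n\log V$ inside $\stabef$ and $\wid\log V$ in $\vset{B}$); the remaining $(\wid-g)\log V$ bits are deduced from the reconstructed $\staa$. This value saving of $(\wid-g)\log V$ must pay simultaneously for the enlarged difference-step cost $\gamma\wid\log V$ (because the cell width is now $w=\log U+\log V$) and for the extra bits in the new direct encoding of $\delset$. After using the binomial estimates already employed in \cref{sec:inner} --- in particular $\log\binom{U-n-\wid}{\wid-g}\ge 3\gamma\wid\log U-O(\wid)$ and $\log\binom{U-n}{\wid}\ge\log\binom{U}{\wid}-O(\wid)$ --- the premise $\log V\ge(2+5\gamma)\log U-2\log n$ (equivalent to $V\ge U^{2+5\gamma}/n^2$) yields a net $\Omega(\wid\log U)$ of extra savings, and the premise $\tfrac{1}{2}\gamma\wid\log U\ge 2\wid t+2\Redun+4\wid$ then converts this into $\log(1/\Pr[\indin=1])\ge\gamma\wid\log U$, as desired.

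The hard part will be making all constants cohere under the tighter ``$2\Redun+2\wid t+4\wid$'' budget. The Bloomier-filter cost $2\beta\wid t$ of the original protocol is no longer affordable, so I would replace it either by a direct set encoding of the partition within $\seta\cup\setb$ at cost $O(\wid t)$ or by absorbing the extra $\Redun$ term into a second copy of part of the state. I would also need to check carefully that the excess cost of transmitting $\delset$ without the benefit of the query test, together with the $\gamma\wid\log V$ difference term, is fully absorbed by the $(\wid-g)\log V$ value savings; this is the precise place where the quantitative bound $V\ge U^{2+5\gamma}/n^2$ is used, so the bookkeeping with this exact threshold is the most delicate part of the argument.
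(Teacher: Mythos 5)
Your protocol has a fatal accounting gap around $\vd$, the values of the deleted keys. You claim Bob's learned entropy gains $(n+2\wid-g)\log V$ bits of value information, which counts the values of all of $\keyset$ (including $\delset$) together with those of $\inseta\cup\insetb$. But nothing in your protocol lets Bob recover $\vd$: the reconstructed $\staa$ only contains the values of $(\keyset\setminus\delset)\cup\inseta$, and $\stamix(\setb)=\stabef(\setb)$ is only a partial view of $\stabef$ from which the deleted keys' values cannot in general be extracted (a deletion need not even read the value, and a general cell-probe structure need not localize a key's value to the probed cells). So Bob actually learns only $(n+\wid-g)\log V$ value bits while Alice explicitly spends $(n+\wid)\log V$ of them ($n\log V$ inside $\stamix$ plus $\wid\log V$ for $\vb$) --- a net \emph{loss} of $g\log V$ rather than the gain of $(\wid-g)\log V$ on which your entire calculation rests. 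Since without queries the key-entropy terms alone cannot yield $\Omega(\wid\log U)$ (this is exactly what the paper's counterexample in the discussion section shows), the argument collapses. The paper closes precisely this hole with a step you are missing: Alice additionally sends $\stab$ \emph{conditioned on} its key set and values (all known to Bob at that point), costing only $\Redun$ bits, and Bob reconstructs the full $\stabef$ from $\stabef(\setb)=\stamix(\setb)$ and $\stabef(\setnotb)=\stab(\setnotb)$, then extracts $\vd$ from it. This is why the lemma's budget carries $2\Redun$ rather than $\Redun$.

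Two secondary problems. First, your fixes for the partition/Bloomier-filter cost do not work: Bob does not know $\seta\cup\setb$, and a plain subset encoding of $\setb^*$ inside $[N]$ costs $\Theta(\wid t\log(N/\wid t))$, not $O(\wid t)$. The paper sidesteps the partition entirely by restructuring the game so that the key sequences $\insas,\insbs,\dels$ are pre-given to both players; after receiving $\vb$, Bob simulates the segment directly on $\stamix$ (which agrees with $\stabef$ on every cell the simulation touches) and discovers $\setb$ on his own at zero cost. That same restructuring makes your explicit transmissions of $\insetb$ and $\delset$ unnecessary; in particular, your ``tighter'' encoding of $\delset$ as a subset of ``the keys visible in $\stabef(\setb^*)$'' is not meaningful in the cell-probe model, where cell contents do not identify keys, and the crude $\log\binom{U}{\wid}$ encoding forces you to re-win roughly $\wid\log(U/n)$ bits elsewhere without the $\wid\ge n^{1-\gamma/2}$ premise that the query-based lemma had available. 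Your observation that the difference step now costs $\gamma\wid(\log U+\log V)$ is a fair point of care, but it does not repair the missing $\vd$ recovery.
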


  The proof of \cref{lm:inner_no_query} leverages the randomness of the values in addition to the keys. Let $\va$ denote the values associated with keys in the inserted key set $\inseta$. Similarly, we can define $\vb$, $\vd$, $\vk$, $\vkd$, and $\vab$ for $\insetb$, $\delset$, $\keyset$, $\keyset \setminus \delset$, and $\inseta \cup \insetb$, respectively. The inner game is modified as follows by making the key sequences pre-given.
  \begin{itemize}
  \item In addition to the pre-given fixed parameters $g$, $m$, $n$, $U$, and $V$, the sequences of keys $\insas$, $\insbs$, and $\dels$ are also given to both Alice and Bob before the game. (Bob does not know $\keyset$ in advance.)
  \item Alice further knows $\stabef$, $\va$, $\vb$, and $\vd$ before the game.
  \item The goal of the game is to let Bob learn $\vab$, $\vk$, and $\keyset \setminus \delset$ when $\inseta$ and $\insetb$ are consistent.
  \end{itemize}
  The protocol is as follows.
  \begin{enumerate}
  \item Check consistency. Alice sends $\indin \defeq \indicator{A \textup{ and } B \textup{ are consistent}}$ to Bob, and the game terminates if $W = 0$.
  \item Send $\vb$. Alice directly sends $\vb$ using $\wid \log V$ bits.
  \item Send cell contents. Alice sends the artificial memory state $\stamix$ using $Nw = n \log V + \log \binom{U}{n} + \Redun$ bits. Note that Alice does not need to send the partition information by Bloomier Filter as before, since $\setb$ can be learned by Bob in the next step.
  \item Recover $\setb$ and $\stab(\setb)$. As Bob already knows $\insbs$, $\vb$ and $\dels$, he can simulate the meta-operation sequence of $\insbs$ on $\stamix$. (Note that deleting a key from the data structure only requires knowledge of the key but not its value, thus Bob does not need $\vd$ to complete the simulation.) We still let $\setb$ represent the set of cells probed during this simulation, which Bob can learn via the simulation. Since $\stamix(\setb) = \stabef(\setb)$, the simulation process is the same as doing these operations on $\stabef$, so Bob can also learn $\stab(\setb)$, the memory state after the simulation.
  \item Recover $\va, \vkd$ and $\keyset \setminus \delset$. Alice sends $\seta \cap \setb$ (conditioned on $\setb$) together with the difference between $\staa(\seta \cap \setb)$ and $\stab(\seta \cap \setb)$ using $2\wid t + \gamma \wid \log U$ bits in expectation. Then, Bob can learn $\staa$ by combining $\staa(\seta \cap \setb)$ and $\stamix(\setnotacapb)$. He can further extract $\va, \vkd$, and $\keyset \setminus \delset$ from $\staa$.\footnote{Extracting information from the data structure can be done by querying all possible elements in the key universe. This can take arbitrarily long time.}

  \item Recover $\vd$. This step is different from \cref{sec:inner} in the sense that we do not rely on query operations here.
    \begin{itemize}
    \item First, Alice sends $\stab$: Since all the keys stored in $\stab$ (i.e., $\keyset \setminus \delset$ and $\insbs$) and their corresponding values ($\vkd$ and $\vb$) are known to Bob, Alice only needs to send $\stab$ \emph{conditioned on} these keys and values. This can be done using at most $\Redun$ bits.
    \item Next, Bob computes $\stabef$: It is obtained by combining $\stabef(\setb) = \stamix(\setb)$ and $\stabef(\setnotb) = \stab(\setnotb)$. The first equality holds due to the definition of $\stamix$, while the second holds because cells in $\setnotb$ are not probed in the process of transforming $\stabef$ to $\stab$.
    \item Finally, Bob extracts $\vd$ from $\stabef$.
    \end{itemize}
  \end{enumerate}

  Similar to \cref{lm:inner}, we can derive the following inequality from the modified inner game: 
  \begin{align*}
    \label{eq:no_query_inner}
    H(\msgfrom \mid \indin = 1) \;\ge\; H(\msgto \mid \indin = 1). \numberthis
  \end{align*}
  To compute the left-hand side, we add up the messages sent in each step:
  \begin{align*}
    H(\msgfrom \mid \indin = 1) \;\le\; \wid \log V + \bk{n \log V + \log \binom{U}{n} + \Redun} + \bk{2\wid t + \gamma \wid \log U} + R.
  \end{align*}
  For the right-hand side, we can apply the same technique as in \cref{lm:inner} to handle the condition $\indin = 1$. Here, $\msgto = (\vab, \vk, \keyset \setminus \delset \mid \insas, \insbs, \dels)$, so we can obtain
  \begin{align*}
    & H(\msgto \mid \indin = 1)
    \\
    & \ge \log \bk{1 \middle/\! \max_{\vab^*, \vk^*, \keyset^*} \; \Pr_{\vabtall, \vktall, \keyset} \!\BigBk{(\vab, \vk, \keyset \setminus \delset) = (\vab^*, \vk^*, \keyset^* \setminus \delset) \;\Big|\; \insas, \insbs, \dels, \indin = 1}} \\
    & \ge \log \bk{\Pr[\indin = 1] \middle/\! \Pr_{\vabtall, \vktall, \keyset} \!\BigBk{(\vab, \vk, \keyset \setminus \delset) = (\vab^*, \vk^*, \keyset^* \setminus \delset) \;\Big|\; \insas, \insbs, \dels}} \\
    & = \log \Pr\Bk{W = 1} + \bk{2m - g + n} \log V + \log \binom{U - 3m + g}{n-m} \\
    & \ge \log \Pr\Bk{W = 1} + \bk{2m - g + n} \log V + \log \binom{U - 3m}{n-m}.
  \end{align*}
  Plugging these bounds back into \eqref{eq:no_query_inner}, we get
  \begin{align*}
    \log \frac{1}{\Pr \Bk{W = 1}} 
    \ge{}& (\wid - g) \log V + \log \binom{U - 3m}{n - \wid} - \log \binom{U}{n} - \bk{\gamma \wid \log U + 2\wid t + 2R} \\
    \ge{}& \frac{1}{2}\wid \log V - \bk{\wid \log \frac{U}{n} + 4 \wid} - \bk{\gamma \wid \log U + 2\wid t + 2R}\\
    \ge{}& \frac{5}{2}\gamma\wid \log U - \bk{\gamma \wid \log U + 2\wid t + 2R + 4m}\\
    \ge{}& \gamma\wid \log U.
  \end{align*}
  Here the second inequality is due to $g < m / 2$ and
  \begin{itemize}
  \item $\log \binom{U}{n - m} - \log \binom{U - 3m}{n - m} \le (n - m) \log \frac{U - (n - m)}{U - 3m - (n - m)} \le (n - m) \log \frac{2n + m}{2n - 2m} \le \frac{3m}{2 \ln 2} < 3m$ (here we used $\log (1 + x) \le x / \ln 2$);
  \item $\log \binom{U}{n} - \log \binom{U}{n - \wid} \le \wid \log \frac{U - n + \wid}{n - \wid + 1} \le \wid \log \frac{U}{n/2} = \wid \log \frac{U}{n} + \wid$.
  \end{itemize}
  The third inequality is due to the condition $V \ge U^{2 + 5\gamma}/n^2$, and the fourth inequality is due to the condition $\frac{1}{2}\gamma \wid \log U \ge 2\wid t + 2R + 4m$. This proves \cref{lm:inner_no_query}.
\end{proof}

\section{Lower Bounds for Related Problems}
\subsection{Strongly History-Independent Dictionaries}
\label{sec:hist_ind}
In this subsection, we show a brief overview of the lower bounds for strongly history-independent dictionaries. Recall that a strongly history-independent dictionary's memory state only depends on the current set of keys stored in it, and possibly some random bits; moreover, by Yao's Minimax Principle, we may assume without loss of generality that the algorithm is deterministic, which means we can fully recover its memory state by knowing only the current key set.

This fact benefits our Protocol~\ref{ptc:outer} for the outer game: Bob already knows the starting state $\stabegin$, the keys to insert $\insaset$, and the keys to delete $\dels$, which together can infer the key set $\keyend$ at the end, and further, the memory state $\staend$ at the end. Hence, Alice no longer needs to send $\staend$ to Bob, which costed $R$ bits of information and was the only step involving the redundancy $R$ in \cref{sec:outer}. After eliminating this cost, the proof in \cref{sec:outer} works regardless of the redundancy $R$, except that it still relies on the Inner Lemma~\ref{lm:inner}.

We do not change the statement (or the proof) of the inner lemma; however, it has a larger tolerance of redundancy $R$ than the initial outer lemma: $R \le \frac{1}{100} \gamma \widl \log U = \Theta(\widl \log U)$ suffices for its premise, where $\widl$ is the number of meta-operations the current node represents.

Recall that the entire proof is based on a tree on top of $n$ meta-operations. Now, we set the tree parameters similarly to \cref{sec:extend}: the branching factor is a fixed large constant $\branch = 2^{64}$, while every level-1 node (parent of leaves) represents $m_1 \defeq \max\bigbk{\frac{100 R}{\gamma \log U}, \, n^{1 - \gamma / 2}}$ consecutive meta-operations. Under these parameters, the height of the tree is $\Theta\bigbk{\log \frac{n \log U}{R}}$, and for every internal node of the tree,
\[
  \frac{1}{100} \gamma \widl \log U \ge \frac{1}{100} \gamma m_1 \log U \ge R,
\]
which means the inner lemma's premise is satisfied, thus the outer lemma holds on levels $\l \ge 2$. By a similar argument as the proof of \cref{thm:main}, we conclude the following result.

\HistIndLB*

\subsection{Stateless Allocation}
\label{sec:stateless}

We recall the \emph{stateless allocation} problem mentioned in \cref{sec:intro}: The algorithm is given a set $S \subseteq [U]$ of at most $(1-\eps) n$ elements, and it should allocate these elements to $n$ slots $\{1, 2, \ldots, n\}$, where each slot can accommodate at most one element. The allocation should be an injection from $S$ to $[n]$ that only depends on the current set $S$ as well as random bits $r$ that are fixed over time, written $\sigma_{S, r} : S \to [n]$. When an insertion/deletion changes $S$ to $S'$, we define the expected switching cost to be
\[
  \E_r\bigBk{\bigabs{\midBK{x \in S \cap S' \mid \sigma_{S, r}(x) \ne \sigma_{S', r}(x)}}},
\]
which equals the number of elements that change their assigned slots during the update. This problem is very similar to the \emph{slot model} for dictionaries introduced in \cref{sec:overview}, where we have to assign $n$ keys to $n$ slots, with the following minor differences:
\begin{itemize}
\item Stateless allocation allows $\eps$-fraction of the slots to be empty, while the slot model utilizes all $n$ slots.
\item Slot model additionally allows the mapping from keys to slots to be determined by not only the current key set but also $O(n)$ bits of redundancy, making it not strongly history-independent.
\end{itemize}
Below, we first show an $\Omega(\log n)$ lower bound on the expected switching cost of any stateless allocation algorithm with $\eps = 0$, which one can think of as the slot model with 0 redundancy.

\begin{proof}[Proof Sketch for $\eps = 0$]
  The proof structure is again similar to that of \cref{thm:main}: We build a tree structure over $n$ meta-operations each consisting of one deletion followed by one insertion, where the branching factor $\lambda = 2^{64}$ is a fixed constant for every node, which implies that the height of the tree equals $\Theta(\log n)$. When some key (element) is moved in two meta-operations $t_1 < t_2$ but not in between, we add one cost to the LCA of these two meta-operations on the tree. Then, using almost the same argument as \cref{sec:overview}, we are able to prove that the expected cost on every node $u$ that represents $m_\l$ meta-operations is at least $\Omega(m_\l)$. Note that the branching factors used here are smaller than those we used in \cref{sec:overview} to prove the $\Omega(\log^* n)$ lower bound for the slot model, and the proof still works because the redundancy $R = 0$. Finally, we sum up the cost on all internal nodes and conclude an $\Omega(\log n)$ lower bound.
\end{proof}

\newcommand{\virt}{\bot}

When $\eps > 0$, there are $\eps n$ slots left empty. We add $\eps n$ placeholder elements $\virt_1, \ldots, \virt_{\eps n}$ and put $\virt_i$ into the $i$-th empty slot, which leads to an algorithm allocating $S \cup \midBK{\virt_1, \ldots, \virt_{\eps n}}$ to all $n$ slots, without leaving any slot empty, i.e., the new problem with placeholders is a stateless allocation problem with $\eps = 0$.

We adapt the proof for $\eps = 0$ to the new instance with placeholders. Again, we build a $\lambda$-ary tree over $n$ meta-operations, and when some element is moved in two meta-operations $t_1 < t_2$ but not in between, we add one cost to the LCA of these two leaves: This cost is said to be a \emph{real cost} if the moved element is a real element in $S$, or a \emph{virtual cost} if the moved element is a placeholder. The proof of $\eps = 0$ tells us that, for an internal node $u$ representing $m_\l$ meta-operations, the real and virtual cost add up to $\Omega(m_\l)$. However, only real cost will cause switching cost of the initial stateless allocation problem.

Fortunately, the virtual cost on any node $u$, which is maximized when every placeholder is moved in all $\lambda$ segments (children), cannot exceed $\lambda \eps n$. When $m_\l \ge c \eps n$ for some large constant $c$, the sum of virtual and real cost is at least $\Omega(m_\l) \ge 2 \lambda \eps n$, which implies that the real cost on this node is at least $\Omega(m_\l) - \lambda \eps n \ge \frac{1}{2} \cdot \Omega(m_\l) = \Omega(m_\l)$. This inequality can apply to the top $\Theta(\log \eps^{-1})$ levels of the tree where $m_\l \ge c \eps n$, thus we conclude an $\Omega(\log \eps^{-1})$ lower bound on the switching cost, as stated in the following theorem.

\AllocLB*

(Similar to \cref{sec:no_query}, $U \ge 3n$ suffices for the proof, because the outer communication game only relies on the randomness of the \emph{order} of inserted keys rather than the keys themselves.)

\section{Discussions}
\label{sec:discuss}

We have proved tight bounds for dynamic succinct dictionaries in the previous sections.
Now let us discuss the limitations of our methods.

\paragraph{The case when \texorpdfstring{$U=n^{1+o(1)}$}{U=n\^{1+o(1)}}.} Recall that all our conclusions are based on the assumption $U=n^{1+\alpha}$ for $\alpha=\Theta(1)$. If we allow slightly subconstant $\alpha$ here, the proof of \Cref{thm:main} can only prove a bound of $\Omega(\alpha k)$ time when the wasted bits per key is $\log^{(k)} n$. This multiplicative factor $\alpha$ comes from \Cref{lm:outer} which gives the proposition $\costu \ge \gamma m/100$. When $\alpha$ was a constant, we could regard $\gamma$ as a constant and sum up all $\costu$ to prove a time lower bound $\Omega(k)$; however, when $\alpha=o(1)$, $\gamma = \Theta(\alpha)$ is no longer a constant, so we can only obtain the bound $\Omega(\alpha k)$. This still implies an $\omega(1)$ time lower bound when $\alpha = \omega(1 / \log^* n)$ under $O(1)$ wasted bits per key.
In contrast,~\cite{liu22} showed that when $\alpha=1/\log^{(t)} n$ for any constant $t$, one can achieve $o(1)$ wasted bits per key with constant running time. So the $\Omega(\alpha k)$ lower bound cannot be improved much. Similarly, when $R < n$, \Cref{thm:ext} is weakened to $\Omega\bk{\min\BK{\alpha \log (n/R), \alpha^2 \log n}}$.\footnote{The second $\alpha$ factor comes from the tree depth $h \approx \min\{\log (n/R), \log n^{\gamma/2}\} = \Theta(\min\BK{\log (n/R), \alpha \log n})$.}

\paragraph{Key-only dictionaries with no query time constraints.} In previous sections, we have obtained the time-space trade-off for key-only dictionaries with query time requirements, as well as key-value dictionaries without query time requirements. However, the problem of proving a lower bound for key-only dictionaries without query time constraints still remains open. Below is a simple example that shows why our method does not work for this goal.

Recall that in \Cref{lm:inner_re}, we claim that for two $m$-sized sets $A,B$, the probability that $C_A(S_A\cap S_B)$ and $C_B(S_A\cap S_B)$ has at most $\gamma m$ different cells is at most $U^{-\gamma m}$. The proof of this lemma makes use of the queries. If there are no queries in the meta-operations, we can construct a data structure that contradicts \Cref{lm:inner_re}:

Suppose we use three cells $C_1,C_2,C_3$ to maintain keys $x_1, x_2, x_3 \in [0, 5U)$, allowing $O(1)$-bit redundancy. If $x_1\in[0,3U),x_2\in[3U,4U),x_3\in[4U,5U)$ (which happens with constant probability, and we call it the \emph{good case}), we organize the keys according to the three cases listed in \cref{table:keyonly_counter_ex}. Otherwise, we give up, succinctly store all keys, and check all cells for every operation. We store in an $O(1)$-bit extra memory whether the good case applies, and if so, which case in \cref{table:keyonly_counter_ex} applies. It is easy to see that the whole data structure incurs $O(1)$ bits of redundancy.

\begin{table}[H]
  \centering
  \caption{Data structure organization under the good case.}
  \label{table:keyonly_counter_ex}
  \begin{tabular}{|l|c|c|c|}
    \hline
    & $C_1$ & $C_2$ & $C_3$ \\
    \hline
    Case 1: $x_1 \in [0, U)$ & $x_2 \oplus x_1$ & $x_3$ & $x_2 \oplus x_3$ \\
    Case 2: $x_1 \in [U, 2U)$ & $x_2$ & $x_3 \oplus x_1$ & $x_2 \oplus x_3$ \\
    Case 3: $x_1 \in [2U, 3U)$ & $x_2$ & $x_3$ & $x_2 \oplus x_3 \oplus x_1$ \\
    \hline
  \end{tabular}
\end{table}

\newcommand{\keywithname}[1]{x_{\textup{#1}}}

We let $m = 1$ in \cref{lm:inner_re}, i.e., there is only one meta-operation. We randomly delete a key and consider two possible keys to be inserted, denoted by $\keywithname{A}$ and $\keywithname{B}$ (they correspond to $A, B$ in \cref{lm:inner_re}). With constant probability, we will be deleting $x_1$ and inserting back two new $x_1$'s ($\keywithname{A}, \keywithname{B} \in [0, 3U)$), which means the good case still applies after both possible insertions. In this case, we denote the initial $x_1$ by $\keywithname{1D}$, and denote the two inserted keys by $\keywithname{1A} \defeq \keywithname{A}$ and $\keywithname{1B} \defeq \keywithname{B}$.

While $\keywithname{1D}, \keywithname{1A}, \keywithname{1B}$ are selected uniformly at random, there is a constant probability that $\keywithname{1D}$, $\keywithname{1A}$, $\keywithname{1B}$ lead to Case 2, 1, and 3, respectively. Suppose we are deleting $\keywithname{1D}$ and inserting $\keywithname{1A}$, we can complete the meta-operation by probing only $C_1$ and $C_2$:
\begin{itemize}
\item $C_2$ initially stores $x_3 \oplus \keywithname{1D}$. We read $\keywithname{1D}$ from the deletion operation itself and read $x_3$ by probing $C_2$.
\item We probe $C_1$ to read $x_2$.
\item After knowing all three keys $\keywithname{1A}, x_2, x_3$ that should be stored after the insertion, we directly write $x_2 \oplus \keywithname{1A}$ and $x_3$ into $C_1$ and $C_2$, respectively. (Recall that Case 1 is the target case.)
\end{itemize}
Similarly, when we are deleting $\keywithname{1D}$ and inserting $\keywithname{1B}$, we can probe only $C_2$ and $C_3$. The only commonly probed cell is $\seta \cap \setb = \{C_2\}$, whose content will be the same after both processes, which means that $\staa(\seta \cap \setb)$ and $\stab(\seta \cap \setb)$ have a constant probability to be equal. With a sufficiently large $U$, this example contradicts \cref{lm:inner_re}.\footnote{Strictly speaking, it contradicts the following variant of \cref{lm:inner_re}: An $O(1)$-bit extra memory can be accessed which is not taken into account in the definition of consistency, and we only require $U$ (but not $n$) to be sufficiently large. If the meta-operations include queries, then this variant can also be proved similarly to the proof of \cref{lm:inner_re}.}
Thus, to prove a key-only lower bound without query time constraints, we need to develop a new method that does not rely on \cref{lm:inner_re}.

\bibliographystyle{alpha}
\bibliography{reference.bib}

\newcommand{\etalchar}[1]{$^{#1}$}
\begin{thebibliography}{BFCG{\etalchar{+}}18}

\bibitem[ANS09]{ANS09}
Yuriy Arbitman, Moni Naor, and Gil Segev.
\newblock De-amortized cuckoo hashing: Provable worst-case performance and
  experimental results.
\newblock In {\em Proceedings of the 36th International Colloquium on Automata,
  Languages and Programming: Part I}, ICALP '09, pages 107--118, Berlin,
  Heidelberg, 2009. Springer-Verlag.

\bibitem[ANS10]{ANS10}
Yuriy Arbitman, Moni Naor, and Gil Segev.
\newblock Backyard cuckoo hashing: Constant worst-case operations with a
  succinct representation.
\newblock In {\em Proceedings of the 51st IEEE Annual Symposium on Foundations
  of Computer Science (FOCS)}, pages 787--796, 2010.

\bibitem[AWY18]{AWY18}
Josh Alman, Joshua~R. Wang, and Huacheng Yu.
\newblock Cell-probe lower bounds from online communication complexity.
\newblock In {\em Proceedings of the 50th Annual {ACM} {SIGACT} Symposium on
  Theory of Computing (STOC)}, pages 1003--1012. {ACM}, 2018.

\bibitem[AY20]{AY20}
Josh Alman and Huacheng Yu.
\newblock Faster update time for turnstile streaming algorithms.
\newblock In {\em Proceedings of the {ACM-SIAM} Symposium on Discrete
  Algorithms ({SODA})}, pages 1803--1813. {SIAM}, 2020.

\bibitem[BCFC{\etalchar{+}}21]{bender2021all}
Michael~A Bender, Alex Conway, Martin Farach-Colton, William Kuszmaul, and
  Guido Tagliavini.
\newblock All-purpose hashing.
\newblock {\em arXiv preprint arXiv:2109.04548}, 2021.

\bibitem[BE20]{Bercea2020ADS}
{Ioana O.} Bercea and Guy Even.
\newblock A dynamic space-efficient filter with constant time operations.
\newblock In {\em Scandinavian Workshop on Algorithm Theory}, 2020.

\bibitem[BFCG{\etalchar{+}}18]{Bloom18}
Michael~A. Bender, Martin Farach-Colton, Mayank Goswami, Rob Johnson, Samuel
  McCauley, and Shikha Singh.
\newblock Bloom filters, adaptivity, and the dictionary problem.
\newblock In {\em Proceedings of the 59th IEEE Annual Symposium on Foundations
  of Computer Science (FOCS)}, pages 182--193, 2018.

\bibitem[BFCK{\etalchar{+}}22]{liu22}
Michael~A. Bender, Mart\'{\i}n Farach-Colton, John Kuszmaul, William Kuszmaul,
  and Mingmou Liu.
\newblock On the optimal time/space tradeoff for hash tables.
\newblock In {\em Proceedings of the 54th Annual ACM SIGACT Symposium on Theory
  of Computing}, STOC 2022, pages 1284--1297, New York, NY, USA, 2022.
  Association for Computing Machinery.

\bibitem[BHN19]{BHN19}
Sayan Bhattacharya, Monika Henzinger, and Stefan Neumann.
\newblock New amortized cell-probe lower bounds for dynamic problems.
\newblock {\em Theor. Comput. Sci.}, 779:72--87, 2019.

\bibitem[BKP{\etalchar{+}}20]{BerK20}
Aaron Berger, William Kuszmaul, Adam Polak, Jonathan Tidor, and Nicole Wein.
\newblock Memoryless worker-task assignment with polylogarithmic switching
  cost.
\newblock In {\em Proceedings of International Colloquium on Automata,
  Languages and Programming (ICALP)}, 2020.

\bibitem[CJ11]{CJ11}
Rapha{\"{e}}l Clifford and Markus Jalsenius.
\newblock Lower bounds for online integer multiplication and convolution in the
  cell-probe model.
\newblock In {\em Proceedings of the 38th International Colloquium on Automata,
  Languages and Programming ({ICALP}), Part {I}}, volume 6755 of {\em Lecture
  Notes in Computer Science}, pages 593--604. Springer, 2011.

\bibitem[CJS15]{CJS15}
Rapha{\"{e}}l Clifford, Markus Jalsenius, and Benjamin Sach.
\newblock Cell-probe bounds for online edit distance and other pattern matching
  problems.
\newblock In {\em Proceedings of the 26th Annual {ACM-SIAM} Symposium on
  Discrete Algorithms (SODA)}, pages 552--561. {SIAM}, 2015.

\bibitem[CJS16]{CJS16}
Rapha{\"{e}}l Clifford, Markus Jalsenius, and Benjamin Sach.
\newblock Cell-probe lower bounds for bit stream computation.
\newblock In {\em Proceedings of the 24th Annual European Symposium on
  Algorithms (ESA)}, volume~57 of {\em LIPIcs}, pages 31:1--31:15. Schloss
  Dagstuhl - Leibniz-Zentrum f{\"{u}}r Informatik, 2016.

\bibitem[CKRT04]{chazelle2004bloomier}
Bernard Chazelle, Joe Kilian, Ronitt Rubinfeld, and Ayellet Tal.
\newblock The {Bloomier} filter: An efficient data structure for static support
  lookup tables.
\newblock In {\em Proceedings of the Fifteenth Annual ACM-SIAM Symposium on
  Discrete Algorithms}, SODA '04, pages 30--39, USA, 2004. Society for
  Industrial and Applied Mathematics.

\bibitem[DdHPP06]{DdHPP06}
Erik~D. Demaine, Friedhelm Meyer~auf der Heide, Rasmus Pagh, and Mihai
  P{\v{a}}tra{\c{s}}cu.
\newblock De dictionariis dynamicis pauco spatio utentibus.
\newblock In {\em LATIN 2006: Theoretical Informatics}, pages 349--361, Berlin,
  Heidelberg, 2006. Springer Berlin Heidelberg.

\bibitem[DH90]{RealTime90}
Martin Dietzfelbinger and Friedhelm Meyer auf~der Heide.
\newblock A new universal class of hash functions and dynamic hashing in real
  time.
\newblock In {\em Proceedings of the 17th International Colloquium on Automata,
  Languages and Programming (ICALP)}, pages 6--19, Berlin, Heidelberg, 1990.
  Springer-Verlag.

\bibitem[DKM{\etalchar{+}}88]{PerfectHashing88}
M.~Dietzfelbinger, A.~Karlin, K.~Mehlhorn, F.M. auf~der Heide, H.~Rohnert, and
  R.E. Tarjan.
\newblock Dynamic perfect hashing: upper and lower bounds.
\newblock In {\em Proceedings of the 29th Annual Symposium on Foundations of
  Computer Science (FOCS)}, pages 524--531, 1988.

\bibitem[FKS84]{Storing84}
Michael~L. Fredman, J\'{a}nos Koml\'{o}s, and Endre Szemer\'{e}di.
\newblock Storing a sparse table with {$O(1)$} worst case access time.
\newblock {\em J. ACM}, 31(3):538--544, Jun 1984.

\bibitem[FPSS03]{FPSS03}
Dimitris Fotakis, Rasmus Pagh, Peter Sanders, and Paul~G. Spirakis.
\newblock Space efficient hash tables with worst case constant access time.
\newblock In {\em Proceedings of the 20th Annual Symposium on Theoretical
  Aspects of Computer Science (STACS)}, pages 271--282, Berlin, Heidelberg,
  2003. Springer-Verlag.

\bibitem[GKMT17]{GooK17}
Michael~T. Goodrich, Evgenios~M. Kornaropoulos, Michael Mitzenmacher, and
  Roberto Tamassia.
\newblock Auditable data structures.
\newblock In {\em Proceedings of {IEEE} European Symposium on Security and
  Privacy (EuroS\&P)}, pages 285--300, 2017.

\bibitem[Gol08]{Gol08}
Daniel Golovin.
\newblock {\em Uniquely Represented Data Structures with Applications to
  Privacy}.
\newblock PhD thesis, Carnegie Mellon University, 2008.

\bibitem[JLN19]{JLN19}
Riko Jacob, Kasper~Green Larsen, and Jesper~Buus Nielsen.
\newblock Lower bounds for oblivious data structures.
\newblock In {\em Proceedings of the 30th Annual {ACM-SIAM} Symposium on
  Discrete Algorithms ({SODA})}, pages 2439--2447. {SIAM}, 2019.

\bibitem[Knu63]{knuth1963notes}
Don Knuth.
\newblock Notes on ``open'' addressing, 1963.
\newblock Available at
  \url{https://jeffe.cs.illinois.edu/teaching/datastructures/2011/notes/knuth-OALP.pdf}.

\bibitem[Knu73]{Knuth73}
Donald~E. Knuth.
\newblock {\em The Art of Computer Programming, Volume {III:} Sorting and
  Searching}.
\newblock Addison-Wesley, 1973.

\bibitem[Kus23]{Kus23}
William Kuszmaul.
\newblock Strongly history-independent storage allocation: New upper and lower
  bounds.
\newblock In {\em Proceedings of the 64th {IEEE} Annual Symposium on
  Foundations of Computer Science ({FOCS})}. {IEEE}, 2023.

\bibitem[LLYZ23]{LiL23}
Tianxiao Li, Jingxun Liang, Huacheng Yu, and Renfei Zhou.
\newblock Dynamic dictionary with subconstant wasted bits per key.
\newblock Manuscript, 2023.

\bibitem[LMWY20]{LMWY20}
Kasper~Green Larsen, Tal Malkin, Omri Weinstein, and Kevin Yeo.
\newblock Lower bounds for oblivious near-neighbor search.
\newblock In {\em Proceedings of the 2020 {ACM-SIAM} Symposium on Discrete
  Algorithms (SODA)}, pages 1116--1134. {SIAM}, 2020.

\bibitem[LN18]{LN18}
Kasper~Green Larsen and Jesper~Buus Nielsen.
\newblock Yes, there is an oblivious {RAM} lower bound!
\newblock In {\em Proceedings of the 38th Annual International Cryptology
  Conference ({CRYPTO}), Part {II}}, volume 10992 of {\em Lecture Notes in
  Computer Science}, pages 523--542. Springer, 2018.

\bibitem[LNN15]{LNN15}
Kasper~Green Larsen, Jelani Nelson, and Huy~L. Nguy{\^{e}}n.
\newblock Time lower bounds for nonadaptive turnstile streaming algorithms.
\newblock In {\em Proceedings of the 47th Annual {ACM} on Symposium on Theory
  of Computing ({STOC})}, pages 803--812. {ACM}, 2015.

\bibitem[LYY20]{LYY20}
Mingmou Liu, Yitong Yin, and Huacheng Yu.
\newblock Succinct filters for sets of unknown sizes.
\newblock In {\em 47th International Colloquium on Automata, Languages, and
  Programming (ICALP)}, volume 168 of {\em Leibniz International Proceedings in
  Informatics (LIPIcs)}, pages 79:1--79:19, Dagstuhl, Germany, 2020. Schloss
  Dagstuhl--Leibniz-Zentrum f{\"u}r Informatik.

\bibitem[NT01]{NaoT01}
Moni Naor and Vanessa Teague.
\newblock Anti-persistence: History independent data structures.
\newblock In {\em Proceedings of the 33th Annual {ACM} Symposium on Theory of
  Computing (STOC)}, pages 492--501, New York, NY, USA, 2001. {ACM}.

\bibitem[Ove83]{Overmars83}
Mark~H. Overmars.
\newblock {\em The Design of Dynamic Data Structures}, volume 156 of {\em
  Lecture Notes in Computer Science}.
\newblock Springer, 1983.

\bibitem[Pag99]{Pagh99}
Rasmus Pagh.
\newblock Low redundancy in static dictionaries with {$O(1)$} worst case lookup
  time.
\newblock In {\em Proceedings of the 26th International Colloquium on Automata,
  Languages and Programming (ICALP)}, volume 1644 of {\em Lecture Notes in
  Computer Science}, pages 595--604. Springer, 1999.

\bibitem[Pag01]{Pagh01}
Rasmus Pagh.
\newblock Low redundancy in static dictionaries with constant query time.
\newblock {\em SIAM Journal on Computing}, 31(2):353--363, 2001.

\bibitem[P{\v{a}}t08]{patrascu2008succincter}
Mihai P{\v{a}}tra{\c{s}}cu.
\newblock Succincter.
\newblock In {\em Proceedings of the 49th Annual IEEE Symposium on Foundations
  of Computer Science (FOCS)}, pages 305--313. IEEE, 2008.

\bibitem[PD04a]{PD04b}
Mihai P{\u{a}}tra{\c{s}}cu and Erik~D. Demaine.
\newblock Lower bounds for dynamic connectivity.
\newblock In {\em Proceedings of the 36th Annual {ACM} Symposium on Theory of
  Computing (STOC)}, pages 546--553. {ACM}, 2004.

\bibitem[PD04b]{PD04}
Mihai P{\v{a}}tra{\c{s}}cu and Erik~D. Demaine.
\newblock Tight bounds for the partial-sums problem.
\newblock In {\em Proceedings of the 15th Annual {ACM-SIAM} Symposium on
  Discrete Algorithms ({SODA})}, pages 20--29. {SIAM}, 2004.

\bibitem[PD06]{PD06}
Mihai P{\u{a}}tra{\c{s}}cu and Erik~D. Demaine.
\newblock Logarithmic lower bounds in the cell-probe model.
\newblock {\em {SIAM} J. Comput.}, 35(4):932--963, 2006.

\bibitem[PR04]{PAGH2004122}
Rasmus Pagh and Flemming~Friche Rodler.
\newblock Cuckoo hashing.
\newblock {\em Journal of Algorithms}, 51(2):122--144, 2004.

\bibitem[RR03]{RR03}
Rajeev Raman and Satti~Srinivasa Rao.
\newblock Succinct dynamic dictionaries and trees.
\newblock In {\em Proceedings of the 30th International Conference on Automata,
  Languages and Programming (ICALP)}, pages 357--368, Berlin, Heidelberg, 2003.
  Springer-Verlag.

\bibitem[WY16]{WY16}
Omri Weinstein and Huacheng Yu.
\newblock Amortized dynamic cell-probe lower bounds from four-party
  communication.
\newblock In {\em Proceedings of the 57th Annual {IEEE} Symposium on
  Foundations of Computer Science ({FOCS})}, pages 305--314. {IEEE} Computer
  Society, 2016.

\bibitem[Yao78]{Yao78}
Andrew~{Chi-Chih} Yao.
\newblock Should tables be sorted? (extended abstract).
\newblock In {\em Proceedings of the 19th Annual Symposium on Foundations of
  Computer Science (FOCS)}, pages 22--27. {IEEE} Computer Society, 1978.

\bibitem[Yu16]{Yu16}
Huacheng Yu.
\newblock Cell-probe lower bounds for dynamic problems via a new communication
  model.
\newblock In {\em Proceedings of the 48th Annual {ACM} {SIGACT} Symposium on
  Theory of Computing (STOC)}, pages 362--374. {ACM}, 2016.

\bibitem[Yu20]{Yu20}
Huacheng Yu.
\newblock Nearly optimal static las vegas succinct dictionary.
\newblock In {\em Proccedings of the 52nd Annual {ACM} {SIGACT} Symposium on
  Theory of Computing (STOC)}, pages 1389--1401. {ACM}, 2020.

\end{thebibliography}

\end{document}